\begin{document}

\begin{frontmatter}
\title{Descriptive complexity for pictures languages \\ (Extended abstract)}
\author[caen]{Etienne Grandjean}\ead{etienne.grandjean@unicaen.fr}
\author[marseille]{Fr\'ed\'eric Olive}\ead{frederic.olive@lif.univ-mrs.fr}
\author[caen]{Ga\'etan Richard}\ead{gaetan.richard@info.unicaen.fr}
\address[caen]{Universit\'e de Caen / ENSICAEN / CNRS - GREYC - Caen, France}
\address[marseille]{Aix-Marseille Universit\'e - LIF - Marseille, France}
%
\begin{abstract}
This paper deals with descriptive complexity of picture languages of any dimension by syntactical fragments of existential second-order logic.
\begin{itemize}
\item  We uniformly generalize to any dimension the characterization by Giammarresi et al.~\cite{GRST96} of the class of \emph{recognizable} picture languages in existential monadic second-order logic.
\item  We state several logical characterizations of the class of picture languages recognized in linear time on nondeterministic cellular automata of any dimension. They are the first machine-independent characterizations of complexity classes of cellular automata.
\end{itemize}
Our characterizations are essentially deduced from normalization results we prove for first-order and existential second-order logics over pictures. They are obtained in a general and uniform framework that allows to extend them to other "regular" structures. Finally, we describe some hierarchy results that show the optimality of our logical characterizations and delineate their limits.
\end{abstract}
\begin{keyword} Picture languages \sep  locality \sep  recognizability \sep  linear time \sep  cellular automata \sep  logical characterizations \sep  existential second-order logic. \end{keyword}
\end{frontmatter}%

\section*{Introduction}\label{sec:introduction}
One goal of descriptive complexity is to establish logical characterizations of natural classes of problems in finite model theory. Many results in this area involve second-order logic ($\SO{}$) and its restrictions, monadic second-order logic ($\mso$) and existential second-order logic ($\eso{}$).
Indeed, there are two lines of research that roughly correspond to either of these restrictions:

\paragraph{\textit{(a)} The formal language current} It starts from the pioneering result by B\"uchi, Elgot and Trahtenbrot~\cite{Buchi60}, which states that the class of regular languages equals the class of $\mso$-definable languages; in short, $\reg{}=\mso$. This line of research aims at characterizing in logic the natural classes of algebraically defined languages (sets of words) or sets of structures (trees, graphs, \etc.) defined by finite state recognizability or local properties such as tilings.
\paragraph{\textit{(b)} The computational complexity current} It originates from another famous result, Fagin's Theorem \cite{Fagin74}, which characterizes the class $\np$ as the class of problems definable in $\eso{}$.

\medskip

For many years, both directions of research have produced plenty of results: see \eg \cite{EbbinghausF95,Libkin04} for descriptive complexity of formal languages and~\cite{EbbinghausF95,Immerman99,GradelKLMSVVW07,Libkin04} for the one of complexity classes. However, and this may be surprising, only few connections are known between those two areas of descriptive complexity. Of course, an explanation is that formal language theory has its own purposes that have little to do with complexity theory. In our opinion, the main reason is that while $\mso$ logic \emph{exactly fits} the fundamental notion of \emph{recognizability}, as exemplified in the work of Courcelle~\cite{CourcelleE12}, this logic seems \emph{transversal} to computational complexity. We argue this is due to the intrinsic \emph{locality} that $\mso$ logic \emph{inherits} from first-order logic~\cite{Hanf65,Gaifman82}. Typically, whereas $\mso$, or even existential $\mso$ ($\emso$), expresses some $\np$-complete graph problems such as 3-colourability, it cannot express some other ones such as Hamiltonicity (see~\cite{Turan84,deRougemont87,Libkin04}) or even some tractable graph properties, such as the existence of a perfect matching.
In contrast, the situation is very clear on trees as on words: $\mso$ only captures the class of "easiest" problems; an extension of B\"uchi's Theorem~\cite{ThatcherW68} states that a tree language is $\mso$ definable iff it is recognizable by a finite tree automaton

\medskip

Thus, items \textit{(a)} and \textit{(b)} above 
seem quite separate for problems on words, trees or graphs, in one case (for words and trees) because MSO only expresses easy problems (regular languages), in the other case because MSO and EMSO do \emph{not} correspond to any complexity class over graphs.

\medskip

What about \emph{picture languages}, that is sets of $d$-pictures, \ie, $d$-dimensional words (or coloured grids)? First, notice the following results:
\begin{enumerate}
\item  In a series of papers culminating in~\cite{GRST96}, Giammarresi and al. have proved that a $2$-picture language is \emph{recognizable}, \ie is the projection of a local 
$2$-picture language, iff it is definable in $\emso$. In short: $\rec{2}=\emso$. 
\item  In fact, the class $\rec{2}$ contains some $\np$-complete problems. More generally, one observes that for each dimension $d\ge1$, $\rec{d}$ \emph{can be defined as} the class of $d$-picture languages recognized by nondeterministic $d$-dimensional \emph{cellular automata} in \emph{constant time}\footnote{That means: for such a picture language $L$, there is some constant integer $c$ such that each computation stops at instant $c$, and  $p\in L$ iff it has at least one computation whose final configuration is accepting in the following sense: \emph{all the cells are in an accepting state.}}.
\end{enumerate}

In some sense, the present paper is an attempt to bridge the gap between the formal language current of descriptive complexity involving $\mso$ and the computational complexity current that involves $\eso{}$. This paper originates from two questions about word/picture languages:
\begin{enumerate}
\item How can we generalize the proof of the above-mentioned theorem of Giammarresi and al. to any dimension? That is, can we establish the equality $\rec{d}=\emso$ for any $d\ge 1$?
\item Can we obtain logical characterizations of time complexity classes of cellular automata\footnote{This originates from a question that J. Mazoyer asked us in 2000 (personal communication): give a logical characterization of the linear time complexity class of nondeterministic cellular automata.}?
\end{enumerate}
The paper addresses both questions; it also compares, in a common framework, the point of view of formal language theory with that of computational complexity. 
A $d$-picture language over an alphabet $\Sigma$ is a set of $d$-pictures $p:[1,n]^d\to\Sigma$, \ie, $d$-dimensional $\Sigma$-words%
\footnote{More generally, the domain of a $d$-picture is of the "rectangular" form $[1,n_1]\times\ldots\times[1,n_d]$. 
For simplicity and uniformity of presentation, we have chosen to present the results of this paper in the particular case of "square" pictures of domain $[n]^d$. Fortunately, our results also hold with the same proofs for general domains $[1,n_1]\times\ldots\times[1,n_d]$.}. There are two natural manners to represent a $d$-picture $p$ as a first-order structure:
\begin{itemize}
\item as a \emph{pixel structure:} on the \emph{pixel} domain $[1,n]^d$;
\item as a \emph{coordinate structure:} on the \emph{coordinate} domain $[1,n]$.
\end{itemize}

Significantly, these two representations respectively correspond to the two above-mentioned points of view as shown by our results.

\subsection*{Our results:}
We establish two kinds of logical characterizations of $d$-picture languages, for all dimensions $d\ge1$:
\begin{enumerate}
\item  \emph{On pixel structures:} $\rec{d}=\eso{}(\arity 1)=\eso{}(\var 1)=\eso{}(\fa^1,\arity 1)$. That means a $d$-picture language is \emph{recognizable} iff it is definable in monadic $\eso{}$ (resp. in $\eso{}$ with $1$ first-order variable, or in monadic $\eso{}$ with $1$ universally quantified first-order variable). 
\item  \emph{On coordinate structures:} $\nlinca^{d}=\eso{}(\var d+1)=\eso{}(\fa^{d+1},\arity d+1)$; that means a $d$-picture language is recognized by a nondeterministic $d$-dimensional cellular automaton in \emph{linear time} iff it is definable in $\eso{}$ with $d+1$ distinct first-order variables (resp. $\eso{}$ with second-order variables of arity at most $d+1$ and a prenex first-order part of prefix 
$\forall^{d+1}$).  
\end {enumerate}
Both items (1) and (2) are easy consequences of \emph{normalization results} of, respectively, first-order and $\eso{}$ logics we prove over picture languages. In particular, the "normalization" equality $\eso{}(\arity 1)=\eso{}(\fa^1,\arity 1)$ is a consequence of the fact that on pixel structures (and more generally, on structures that consist of bijective functions and unary relations) any first-order formula is equivalent to a boolean combination of "cardinality" formulas of the form: "there exists $k$ distinct elements $x$ such that $\psi(x)$", where $\psi$ is a quantifier-free formula with only \emph{one} variable. The "normalization" equality explicitly expresses the "local" feature of $\mso$ on pictures and can be generalized to other "regular" structures.

Thirdly, in contrast with (1) and (2), we establish several \emph{strict hierarchy} results for any fixed $d\ge 2$ and for $d$-picture languages represented by coordinate structures: in particular, we prove $$\eso{}(\var d-1)\subsetneq \rec{d}\subsetneq \eso{}(\var d)=\nlinca^{d}$$ and $\eso{}(\var d)\subsetneq \eso{}(\arity d)$. Here, the fact that \emph{no} natural restriction of $\eso{}$ (in particular, $\eso{}(\var d)$) exactly captures the class $\rec{d}$ for the coordinate representation of pictures seems to us the symptom of the large expressiveness of this logic: significantly, the \emph{strict inclusion} $\eso{}(\var d)\subsetneq \eso{}(\arity d)$ for $d$-pictures in the coordinate representation, if $d\geq 2$, strongly contrasts with the \emph{equality} $\eso{}(\arity 1)=\eso{}(\var 1)$ that holds in the pixel representation.

\medskip

In this document, we sketch many proofs and omit other ones, in particular, the proofs of hierarchy theorems, in Section~\ref{sec:hierarchy}. However, the very technical proof of the normalization of the logic $\eso{}(\fa d, \arity d)$ on coordinate encodings of $(d-1)$-pictures (for $d\ge 2$) is completely described in the appendix.

\section{Preliminaries}\label{sec:preliminaries}
In the definitions below and all along the paper, we denote by $\Sigma$, $\Gamma$ some finite alphabets and by $d$ a positive integer. For any positive integer $n$, we set $[n]\egaldef\{1,\dots,n\}$. We are interested in sets of pictures of any fixed dimension $d$.

\begin{definition}\label{def:picture}
A \textdef{$d$-dimensional picture} or \textdef{$d$-picture} on $\Sigma$ is a function $\apic: [n]^d\to\Sigma$ where $n$ is a positive integer. The set \textdef{$dom(\apic)$}$=[n]^d$ is called the \textdef{domain} of picture $\apic$ and its elements are called \textdef{points}, \textdef{pixels} or \textdef{cells} of the picture. A set of $d$-pictures on $\Sigma$ is called a \textdef{$d$-dimensional language}, or \textdef{$d$-language}, on $\Sigma$.
\end{definition}

Notice that $1$-pictures on $\Sigma$ are nothing but nonempty words on $\Sigma$. 

\subsection{Pictures as model theoretic structures}\label{sec:logiques}

Along the paper, we will often describe $d$-languages as sets of models of logical formulas. To allow this point of view, we must settle on an encoding of $d$-pictures as model theoretic structures. 

\medskip

For logical aspects of this paper, we refer to the usual definitions and notations in logic and finite model theory (see~\cite{EbbinghausF95} or~\cite{Libkin04}, for instance). A \textdef{signature} (or \textdef{vocabulary}) $\sigma$ is a finite set of relation and function symbols each of which has a fixed arity. A (finite) \textdef{structure} $S$ of vocabulary $\sigma$, or $\sigma$-structure, consists of a finite domain $D$ of cardinality $n\ge 1$, and, for any symbol $s\in\sigma$, an interpretation of $s$ over $D$, often denoted by $s$ for simplicity.  The tuple of the interpretations of the $\s$-symbols over $D$ is called the \textdef{interpretation} of $\s$ over $D$ and, when no confusion results, it is also denoted $\s$. The \textdef{cardinality of a structure} is the cardinality of its domain. For any signature $\s$, we denote by \textdef{$\struc({\s})$} the class of (finite) $\s$-structures. We write \textdef{$\model(\Phi)$} the set of $\sigma$-structures which satisfy some fixed formula $\Phi$. We will often deal with \textdef{tuples} of objects. We denote them by bold letters. 

\medskip

There are two natural manners to represent a picture by some logical structure: on the domain of its pixels, or on the domain of its coordinates. This gives rise to the following definitions:

\begin{definition}\label{def:pixel structure}
Given $\apic:\dom n^{d}\imp\Sigma$, we denote by \textdef{$\picode{d}(\apic)$} the structure \[\picode{d}(\apic)=  ([n]^{d}, (Q_s)_{s\in\Sigma},(\suc_i)_{i\in [d ]}, (\min_i)_{i\in [d ]}, (\max_i)_{i\in [d ]}).\] 
Here: 
\begin{itemize}
\item  $\suc_j$ is the \textdef{(cyclic) successor function} according to the $j^{th}$ dimension of $[n]^{d}$, that is: for each $a=(a_i)_{i \in [d ]} \in [n]^{d}$, $\suc_j(a) = a^{(j)}$ where $a^{(j)}_i$, the $i^{th}$ component of $a^{(j)}$, equals $a_i$, the $i^{th}$ component $a_i$ of $a$, except the $j^{th}$ one which equals the cyclic successor of the $j^{th}$ component $a_j$ of $a$. More formally: 
\begin{itemize}
\item  $a^{(j)}_j = a_j+1$ if $a_j<n$, and $a^{(j)}_j=1$ otherwise;
\item  $a^{(j)}_i = a_i$ for each $i \ne j$.
\end{itemize}   
\item  the $\min_i$'s, $\max_i$'s and $Q_s$'s are the following unary (monadic) relations:
\begin{itemize}
\item  $\min_i =  \{ a \in [n]^{d}:  a_i=1 \} $;
\item  $\max_i = \{ a \in [n]^{d}: a_i=n_i \} $;
\item  $Q_s = \{ a \in  [n]^{d}: \apic(a)=s  \}$.
\end{itemize}
\end{itemize} 
\end{definition}

\begin{definition}\label{def:coord structure}
Given $\apic:\dom n^{d}\imp\Sigma$, we denote by \textdef{$\cocode{d}(\apic)$} the structure 
\begin{equation}\label{eq:def cocode}
\cocode{d}(\apic)=\la\dom n,(Q_s)_{s\in\Sigma},<,\suc,\min,\max\ra.
\end{equation}
Here:
\begin{itemize}
\item  Each $Q_s$ is a $d$-ary relation symbol interpreted as the set of cells of $\apic$ labelled by an $s$. In other words: $Q_s=\{a\in\dom n^d : p(a)=s\}$. 
\item  $<$, $\min$, $\max$ are predefined relation symbols of respective arities $2$, $1$, $1$, that are interpreted, respectively, as the sets $\{(i,j): 1\leq i<j\leq n\}$, $\{1\}$ and $\{n\}$. 
\item  $\suc$ is a unary function symbol interpreted as the cyclic successor. (That is: $\suc(i)=i+1$ for $i<n$ and $\suc(n)=1$.) 
\end{itemize}
\end{definition}
For a $d$-language $L$, we set $\textdef{$\picode{d}(L)$}=\{\picode{d}(\apic) : \apic \in L\}$ and $\textdef{$\cocode{d}(L)$}=\{\cocode{d}(\apic):\apic\in L\}$.

\subsection{Logics under consideration}\label{sec:logiques}

Let us now come to the logics involved in the paper. 
All formulas considered hereafter belong to \textdef{relational Existential Second-Order logic}. Given a signature $\sg$, indifferently made of relational and functional symbols, a relational existential second-order formula of signature $\sg$ has the shape $\Phi\equiv\ex\tu R\phi(\sg,\tu R)$, where $\tu R=(R_1,\dots,R_k)$ is a tuple of relational symbols and $\phi$ is a first-order formula of signature $\sg\cup\{\tu R\}$. We denote by \textdef{$\eso{\sg}$} the class thus defined. We will often omit to mention $\sg$ for considerations on these logics that do not depend on the signature. Hence, \textdef{$\eso{}$} stands for the class of all formulas belonging to $\eso{\sg}$ for some $\sg$. 

\medskip

We will pay great attention to several variants of $\esorel{}$. In particular, we will distinguish formulas of type $\Phi\equiv\ex\tu R\phi(\sg,\tu R)$ according to: 
\begin{itemize}
\item[-]  the number of distinct first-order variables involved in $\phi$, 
\item[-]  the arity of the second-order symbols $R\in\tu R$, and 
\item[-]  the quantifier prefix of some prenex form of $\phi$. 
\end{itemize}

With the logic \textdef{$\esorelsig(\fa^d,\arity \ell)$}, we control these three parameters: it is made of formulas of which first-order part is prenex with a universal quantifier prefix of length $d$, and where existentially quantified relation symbols are of arity at most $\ell$. In other words, $\esorelsig(\fa^d,\arity \ell)$ collects formulas of shape: $$\exists\tu R\fa\tu x\theta(\s,\tu R,\tu x)$$ where $\theta$ is quantifier free, $\tu x$ is a $d$-tuple of first-order variables, and $\tu R$ is a tuple of relation symbols of arity smaller than $\ell$. Relaxing some constraints of the above definition, we set:
\begin{eqnarray*}
\textdef{$\esorelsig(\fa^d)$}       =  & \dst{\bigcup_{\ell>0}}\esorelsig(\fa^{d},\arity \ell)\AND 
\textdef{$\esorelsig(\arity \ell)$} = & \dst{\bigcup_{d>0}}\esorelsig(\fa^{d},\arity \ell).
\end{eqnarray*}

Finally, we write \textdef{$\esorelsig(\var d)$} for the class of formulas that involve at most $d$ first-order variables, thus focusing on the sole number of distinct first-order variables (possibly quantified several times). 

\section{A logical characterization of $\rec{}$}\label{sec:eso1 to rec}
In order to define a notion of locality based on sub-pictures we need to mark the border of each picture.

\begin{definition}\titledef{bordered picture}
By \textdef{$\Gamma^{\sharp}$} we denote the alphabet $\Gamma\cup\{\sharp\}$ where $\sharp$ is a special symbol not in $\Gamma$. Let $\apic$ be any $d$-picture of domain  $[n]^d$ on $\Gamma$. The \textdef{bordered $d$-picture} of $\apic$, denoted by \textdef{$\apic^\sharp$}, is the $d$-picture $\apic^\sharp : [0,n+1]^d\to\Gamma^{\sharp}$ defined by $$\apic^\sharp(a)=\left\{\begin{array}{l} \apic(a)\IF a\in dom(\apic) ; \\ \sharp\OTHERWISE.  \end{array}\right.$$ Here, "otherwise" means that $a$ is on the border of $\apic^\sharp$, that is, some component $a_i$ of $a$ is $0$ or $n+1$.
\end{definition}

Let us now define our notion of \emph{local picture language}. It is based on some sets of allowed patterns (called tiles) of the bordered pictures.

\begin{definition}\label{def:local}\titledef{Local language and tiling} 
\begin{enumerate}
\item  Given a $d$-picture $\apic$ and an integer $j\in[d]$, two cells $a=(a_i)_{i\in [d]}$ and $b=(b_i)_{i\in [d ]}$ of $\apic$ are \textdef{$j$-adjacent} if they have the same coordinates, except the $j^{th}$ one for which $|a_j-b_j|=1$. 
\item A \textdef{tile} for a $d$-language $L$ on $\Gamma$ is a couple in $(\Gamma^\sharp)^2$. 
\item  A picture \textdef{$\apic$ is $j$-tiled} by a set of tiles $\Delta\subseteq(\Gamma^\sharp)^2$ if for any two $j$-adjacent points $a,b\in dom(\apic^\sharp)$: $(\apic^\sharp(a),\apic^\sharp(b))\in\Delta.$
\item  Given $d$ sets of tiles $\Delta_1,\dots,\Delta_d\subseteq (\Gamma^\sharp)^2$, a $d$-picture \textdef{$\apic$ is tiled by $(\Delta_1$,\dots,$\Delta_d)$} if $\apic$ is $j$-tiled by $\Delta_{j}$ for each $j\in[d]$.
\item  We denote by \textdef{$L(\Delta_1$,\dots,$\Delta_d)$} the set of $d$-pictures on $\Gamma$ that are tiled by $(\Delta_1$,\dots,$\Delta_d)$. 
\item  A $d$-language $L$ on $\Gamma$ is \textdef{local} if there exist $\Delta_1,\dots,\Delta_d\subseteq (\Gamma^\sharp)^2$ such that $L=L(\Delta_1$,\dots,$\Delta_d)$. We then say that $L$ is \textdef{$(\Delta_1,\dots,\Delta_d)$-local}, or \textdef{$(\Delta_1,\dots,\Delta_d)$-tiled}. 
\end{enumerate}
\end{definition}

\begin{definition}\label{def:rec}\titledef{Recognizable language}
A $d$-language $L$ on $\Sigma$ is \textdef{recognizable} if it is the projection of a \emph{local} $d$-language over an alphabet $\Gamma$. It means there exist a surjective function $\pi:\Gamma\to\Sigma$ and a local $d$-language $L_{\local}$ on $\Gamma$ such that  $L=\{\pi\circ\apic:\ \apic\in L_{\local}\}.$ 

Because of Definition~\ref{def:local}, it also means there exist a surjective function $\pi:\Gamma\to\Sigma$ and $d$ subsets $\Delta_1,\dots,\Delta_d$ of $(\Gamma^\sharp)^2$ such that $$L=\{\pi\circ\apic:\ \apic\in L(\Delta_1,\dots,\Delta_d)\}.$$ 
We write \textdef{$\rec{d}$} for the class of recognizable $d$-languages.
\end{definition}

\begin{remark}
Our notion of \emph{locality} is weaker than the one given by Giammarresi and al.~\cite{GRST96}. But this doesn't affect the meaning of \emph{recognizability}, which coincides with that used in~\cite{GRST96}. This confirms the robustness of this latter notion. 
\end{remark}

A characterization of recognizable languages of dimension 2 by existential monadic second-order logic was proved by by Giammarresi et al.~\cite{GRST96}. They established: 
\begin{theorem}[\cite{GRST96}]\label{thm:Giammarresi & al}
For any $2$-dimensional language $L$,  $L\in\rec{2}\Ssi\picode{2}(L)\in\eso{}(\arity 1).$
\end{theorem}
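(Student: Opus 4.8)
The plan is to prove the two directions of the equivalence $L\in\rec{2}\Leftrightarrow\picode{2}(L)\in\eso{}(\arity 1)$ separately, both by direct translation.

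\medskip

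\textbf{From recognizability to monadic }$\eso{}$\textbf{.} Suppose $L$ is recognizable, so there is a surjection $\pi:\Gamma\to\Sigma$ and tile sets $\Delta_1,\Delta_2\subseteq(\Gamma^\sharp)^2$ with $L=\{\pi\circ\apic:\apic\in L(\Delta_1,\Delta_2)\}$. I would write a formula of $\eso{}(\arity 1)$ over the pixel structure $\picode{2}(\apic)$ for a $\Sigma$-picture $\apic$ that guesses the $\Gamma$-labelling of the (interior of the) bordered picture. Concretely, existentially quantify one monadic predicate $P_c$ for each $c\in\Gamma$; a first-order part then asserts (i) the $P_c$'s partition the domain, (ii) they refine the given $Q_s$'s via $\pi$, i.e.\ $P_c(x)\rightarrow Q_{\pi(c)}(x)$, and (iii) the tiling constraints hold. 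The delicate point of (iii) is that tiles live on the \emph{bordered} picture $[0,n+1]^d$, whereas the pixel structure has domain $[n]^d$ only; I handle this with the $\min_i$/$\max_i$ predicates: a pair of $1$-adjacent interior cells $(x,\suc_1(x))$ with $\neg\max_1(x)$ must have its $P$-labels in $\Delta_1$; a cell $x$ with $\min_1(x)$ must satisfy $(\sharp,\apic^\sharp(x))\in\Delta_1$, i.e.\ $\bigvee_{(\sharp,c)\in\Delta_1}P_c(x)$; similarly for $\max_1(x)$ and for dimension $2$. Note $\suc_1$ is the \emph{cyclic} successor, so I must use $\neg\max_1$ to avoid wrapping around. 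All these sentences are first-order with a single universally quantified variable $x$ (the adjacent cell being $\suc_i(x)$, a term), hence the whole thing is even in $\eso{}(\fa^1,\arity 1)$.

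\medskip

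\textbf{From monadic }$\eso{}$\textbf{ to recognizability.} Conversely, suppose $\picode{2}(L)=\model(\Phi)$ for $\Phi\equiv\exists\tu R\,\phi$ with $\tu R$ monadic and $\phi$ first-order over the pixel signature. The idea is the standard one: a monadic $\eso{}$ sentence over a structure whose only functions are the (bijective) successors and whose relations are monadic is a statement about a ``colouring'' of the grid by the guessed predicates, and $\phi$ being first-order is \emph{local} in the sense of Hanf/Gaifman. Because the Gaifman graph of $\picode{2}(\apic)$ is (essentially) the $2$-dimensional grid with its torus edges, $\phi$ is equivalent to a boolean combination of statements counting, up to a threshold, the local isomorphism types of radius-$r$ neighbourhoods — and on a picture the finitely many ``border'' types occur a bounded number of times while the ``interior'' type is determined by the $r\times r$ patterns present. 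I would therefore take $\Gamma$ to be (a suitable finite set of) local $r$-neighbourhood types of the $\tu R$-expanded bordered picture over $\Sigma\cup\{\sharp\}$, let $\pi$ read off the central $\Sigma$-symbol, and let $\Delta_1,\Delta_2$ be the pairs of $r$-types that are ``compatible'' (agree on their overlap and are consistent with the global boolean/counting conditions of $\phi$ — the threshold counting being absorbable because, on the bordered picture, each border type appears a bounded number of times, so one can pre-split $\Gamma$ to record a bounded-counter state). Projecting this local language gives exactly $L$.

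\medskip

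\textbf{Main obstacle.} The easy direction is routine bookkeeping with $\min_i/\max_i$ to simulate the border. The hard part is the converse: turning an arbitrary first-order sentence $\phi$ (with monadic $\eso$ quantifiers in front) into a tiling. The crux is the locality argument — showing that $\phi$ depends only on the multiset of bounded-radius patterns, and then showing that the \emph{threshold counting} of border patterns can be folded into the finite alphabet $\Gamma$ without the interior needing to count (on a $2$-picture the interior patterns can recur unboundedly, so no counting is allowed there, but the $O(n)$ border cells can carry a bounded-precision running count). In fact the paper announces a stronger normalization — that any first-order formula over such bijective-function/monadic-relation structures is a boolean combination of one-variable cardinality formulas, yielding $\eso{}(\arity 1)=\eso{}(\fa^1,\arity 1)$ — and I expect the cleanest route is to invoke that normalization first and then perform the pattern-to-tile translation on the resulting syntactically simple formula, which is precisely where the border-counting subtlety must be discharged.
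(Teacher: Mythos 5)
Your forward direction is correct and is essentially the paper's: guess one monadic predicate per letter of $\Gamma$, enforce partition, compatibility with $\pi$, and the tiling constraints with a single universally quantified variable, using $\min_i/\max_i$ to simulate the border and $\neg\max_i$ to block the cyclic wrap-around. This gives membership in $\eso{}(\fa^1,\arity 1)$, exactly as in Proposition~\ref{thm:rec <-> pixel in eso(1)}.

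The converse direction contains a genuine error in the locality analysis. After Hanf, the first-order part is equivalent to a boolean combination of \emph{threshold} statements ``there are at least $k$ points whose radius-$r$ type is $\tau$'', and this counting is needed for \emph{all} types, not only those near the border: an interior type (e.g.\ ``a red cell with no blue cell within distance $r$'') can occur $0,1,2,\dots$ times independently of which other patterns occur, so its multiplicity up to the threshold is \emph{not} ``determined by the $r\times r$ patterns present''. Moreover the border types themselves occur $\Theta(n)$ times on an $n\times n$ picture (only the corner types occur boundedly often), so the dichotomy you draw is wrong on both sides. Your assertion that ``no counting is allowed'' in the interior is also the wrong conclusion: counting up to a constant threshold \emph{is} possible everywhere, because it is carried by the existentially guessed colouring rather than by the tiles — one guesses a capped cumulative counter along the lexicographic order of the pixels, whose update is local since the lexicographic successor is $\suc_i\suc_{i+1}\cdots\suc_d(x)$ for a suitable $i$ determined by the $\max_j$ predicates. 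This is precisely what the paper does with the predicates $U^{=0},\dots,U^{=k-1},U^{\ge k}$ in its proof that $\eso{}(\arity 1)\subseteq\eso{}(\fa^1,\arity 1)$ on pixel structures (there the threshold statements arise as cardinality formulas $\ex^{\ge k}x\,\psi(x)$ over the bijective structure, which is the same normal form you reach via Hanf). Once that normalization is done, the passage to tiles is the complete-disjunctive-normal-form argument of Fact~\ref{thm:localisation} and Proposition~\ref{thm:rec <-> pixel in eso(1)}, and no residual counting — border or otherwise — remains to be discharged. So the architecture you propose can be made to work, but the step you identify as the crux is misdiagnosed, and as written the interior thresholds would be lost.
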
 
In this section, we come back to this result. We simplify its proof, refine the logic it involves, and generalize its scope to any dimension.

\begin{theorem}\label{thm:rec ssi eso(fa 1) ssi eso(fa 1,ar 1)}
For any $d>0$ and any $d$-language $L$, the following assertions are equivalent:
\begin{enumerate}
\item\label{item:rec}  $L\in\rec{d}$;
\item\label{item:fa 1, ar 1}  $\picode{d}(L)\in\eso{}(\fa^1,\arity 1)$;
\item\label{item:ar 1}  $\picode{d}(L)\in\eso{}(\arity 1)$.
\end{enumerate}
\end{theorem}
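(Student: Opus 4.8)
The plan is to prove the cycle of implications $(\ref{item:rec})\Rightarrow(\ref{item:fa 1, ar 1})\Rightarrow(\ref{item:ar 1})\Rightarrow(\ref{item:rec})$. The implication $(\ref{item:fa 1, ar 1})\Rightarrow(\ref{item:ar 1})$ is trivial, since $\eso{}(\fa^1,\arity 1)$ is syntactically contained in $\eso{}(\arity 1)$. So the real content lies in the two remaining implications, and the heart of the argument is the first one, which is a direct logical translation of the tiling machinery.

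**From recognizability to logic ($(\ref{item:rec})\Rightarrow(\ref{item:fa 1, ar 1})$).** Suppose $L$ is the projection under $\pi:\Gamma\to\Sigma$ of a local language $L(\Delta_1,\dots,\Delta_d)$ over $\Gamma$. On the pixel structure $\picode{d}(\apic)$ with domain $[n]^d$, I would guess the local colouring by existentially quantifying one monadic predicate $R_c$ for each $c\in\Gamma$; a conjunction of universally quantified (single-variable) clauses forces the $R_c$ to partition the domain and to be a $\pi$-refinement of the input labelling (i.e.\ $R_c(x)\to Q_{\pi(c)}(x)$). The only subtlety is encoding the \emph{border} symbol $\sharp$ and the tile constraints, since the pixel domain $[n]^d$ has no border cells — but the unary predicates $\min_i$ and $\max_i$ and the cyclic successors $\suc_i$ let us simulate the bordered picture $\apic^\sharp$: for each $i$ and each tile $(a,b)\in(\Gamma^\sharp)^2\setminus\Delta_i$, write a one-variable clause forbidding that $x$ carries colour $a$ and $\suc_i(x)$ carries colour $b$, with the convention that "$\suc_i(x)$ carries $\sharp$" is rendered as $\max_i(x)$, "$x$ carries $\sharp$ and $\suc_i(x)$ carries $b$" is rendered using $\min_i(\suc_i(x))$, etc. Each such clause uses only the single variable $x$ (the terms $\suc_i(x)$ are closed in $x$), so the whole formula is in $\eso{}(\fa^1,\arity 1)$. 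Conversely, any model of this formula determines a $\Gamma$-picture locally tiled by the $\Delta_i$ that projects to $\apic$, so $\model$ of the formula over pixel structures is exactly $\picode{d}(L)$.

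**From logic back to recognizability ($(\ref{item:ar 1})\Rightarrow(\ref{item:rec})$).** Given $\Phi\equiv\exists\tu R\,\phi$ in $\eso{}(\arity 1)$ defining $\picode{d}(L)$, with $\tu R=(R_1,\dots,R_k)$ monadic, first absorb the guessed monadic predicates into an enlarged alphabet $\Gamma=\Sigma\times\{0,1\}^k$: a $\Gamma$-picture encodes both the $\Sigma$-picture and a valuation of the $R_j$'s at each pixel. It then suffices to show that for the fixed first-order sentence $\phi$ over this enriched pixel signature, the class of $\Gamma$-pictures satisfying $\phi$ is local. This is precisely a Hanf/Gaifman-style locality argument adapted to pixel structures: since $\phi$ is first-order over a structure whose non-unary part consists of the bijective successor functions $\suc_i$, which have bounded-degree Gaifman graph (a torus grid), $\phi$ is determined by the multiset of isomorphism types of bounded-radius neighbourhoods, and by the counting normal form invoked in the introduction (any such first-order formula is a boolean combination of threshold-counting statements "$\exists^{\ge t}x\,\psi(x)$" with $\psi$ quantifier-free in one variable), $\phi$ reduces to checking, for each local pattern, whether it occurs at least $t$ or exactly $t$ times. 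Bounded threshold counts and exact small counts on a grid are themselves expressible by local tiles (using auxiliary marker colours laid out along a scanning path), so $\model(\phi)$ is recognizable, hence so is $L$.

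**Main obstacle.** The routine direction is $(\ref{item:rec})\Rightarrow(\ref{item:fa 1, ar 1})$; the delicate step is $(\ref{item:ar 1})\Rightarrow(\ref{item:rec})$, and within it the conversion of \emph{counting} subformulas (which arise because $\eso{}(\arity 1)$ permits arbitrary first-order quantification, not just $\forall^1$) into finitely many tile sets over a finite auxiliary alphabet. I expect the paper handles this exactly through the "normalization" result it advertises — that over structures built from bijections and unary relations, every first-order formula collapses to a boolean combination of one-variable cardinality formulas — so that the $\eso{}(\arity 1)$ hypothesis can first be normalized down to $\eso{}(\fa^1,\arity 1)$, after which the easy direction applies verbatim. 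In other words, the genuinely new work is the normalization lemma; the equivalences $(\ref{item:rec})\Leftrightarrow(\ref{item:fa 1, ar 1})$ and the trivial inclusion then close the cycle.
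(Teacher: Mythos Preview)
Your proposal is essentially correct and mirrors the paper's approach: the paper too identifies the collapse of first-order logic over bijective unary structures to one-variable cardinality formulas as the crux, and encodes the threshold counts via cumulative auxiliary monadic predicates laid out along a lexicographic scan of the pixel domain---exactly your ``auxiliary marker colours laid out along a scanning path''. The only structural difference is that the paper factors the argument as $(\ref{item:rec})\Leftrightarrow(\ref{item:fa 1, ar 1})$ (both directions proved directly) together with the normalization $(\ref{item:ar 1})\Rightarrow(\ref{item:fa 1, ar 1})$, rather than your cycle.

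Two small fixes. First, in your $(\ref{item:ar 1})\Rightarrow(\ref{item:rec})$ sketch you claim it suffices that the first-order-definable $\Gamma$-language be \emph{local}; taken literally this fails (``some pixel has colour $a$'' is first-order but not local), and what your argument actually yields---correctly---is that it is \emph{recognizable}, i.e.\ the projection of a local language over the further-enlarged alphabet carrying your counting markers. Second, once you normalize down to $\eso{}(\fa^1,\arity 1)$, what closes the loop is the nontrivial direction $(\ref{item:fa 1, ar 1})\Rightarrow(\ref{item:rec})$, not the ``easy direction'' you name; you do acknowledge needing the full equivalence $(\ref{item:rec})\Leftrightarrow(\ref{item:fa 1, ar 1})$ in your final sentence, but your sketch only covers the forward half. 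The paper handles $(\ref{item:fa 1, ar 1})\Rightarrow(\ref{item:rec})$ by a further localisation of the $\eso{}(\fa^1,\arity 1)$ sentence (so that all atoms have the form $Q(x)$ or $Q(\suc_i(x))$) followed by a complete disjunctive normal form expansion from which the tile sets $\Delta_i$ are read off directly.
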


Proposition~\ref{thm:rec <-> pixel in eso(1)} states the equivalence $\ref{item:rec}\Ssi\ref{item:fa 1, ar 1}$. In Proposition~\ref{thm:eso(1)=eso(1,1) on pixel}, we establish the normalization $\eso{}(\arity 1)=\eso{}(\fa^1,\arity 1)$ on pixel structures, from which the equivalence~$\ref{item:fa 1, ar 1}\Ssi\ref{item:ar 1}$ immediately follows. 

\begin{proposition}\label{thm:rec <-> pixel in eso(1)}
For any $d>0$ and any $d$-language $L$ on $\Sigma$: \ $L\in\rec{d}\Ssi\picode{d}(L)\in\eso{}(\fa^1,\arity 1).$  
\end{proposition}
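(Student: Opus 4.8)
The plan is to prove both directions of the equivalence $L\in\rec{d}\Ssi\picode{d}(L)\in\eso{}(\fa^1,\arity 1)$ directly, exploiting the fact that the pixel structure $\picode{d}(\apic)$ consists only of the bijective successor functions $\suc_i$, the monadic boundary predicates $\min_i,\max_i$, and the monadic colour predicates $Q_s$. The key observation is that a tiling condition "for every two $j$-adjacent cells $a,b$ the pair $(\apic^\sharp(a),\apic^\sharp(b))$ lies in $\Delta_j$" is an inherently \emph{local} constraint that a single universally quantified cell variable $x$ can inspect through the terms $x$ and $\suc_j(x)$.

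For the direction $\rec{d}\Rightarrow\picode{d}(L)\in\eso{}(\fa^1,\arity 1)$: suppose $L=\{\pi\circ\apic:\apic\in L(\Delta_1,\dots,\Delta_d)\}$ with $\pi:\Gamma\to\Sigma$ and $\Delta_j\subseteq(\Gamma^\sharp)^2$. I would existentially quantify one monadic predicate $P_\gamma$ for each $\gamma\in\Gamma$ (these guess the witnessing local picture $\apic$ over $\Gamma$), plus — to handle the border symbol $\sharp$ cleanly — agree that a cell "sees a $\sharp$ in direction $j$" exactly when it is the $\max_j$ cell (its cyclic $j$-successor wraps to the $\min_j$ cell) or, symmetrically, when it is the $\min_j$ cell the $\sharp$ is "behind" it. The first-order part $\fa x\,\theta(x)$ then asserts: (i) the $P_\gamma$ form a partition of the cells and refine the given colouring, i.e. $x\in P_\gamma\Rightarrow x\in Q_{\pi(\gamma)}$; (ii) for each $j$, the pair of $\Gamma^\sharp$-values read at $x$ and at its $j$-successor — with the convention that crossing a $\min_j$/$\max_j$ boundary yields $\sharp$ — belongs to $\Delta_j$; since all of $\min_j$, $\max_j$, $\suc_j$, and the $P_\gamma(\suc_j(x))$ are available as quantifier-free conditions on the single variable $x$, this is expressible with just $x$. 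One must be slightly careful that the tiling of $\apic^\sharp$ also constrains pairs of two border cells and pairs at the corners; but every such pair is determined by a single cell of $[n]^d$ together with the boundary predicates, so a finite case analysis packaged into $\theta$ covers it. Thus $\picode{d}(L)=\model(\exists(P_\gamma)_{\gamma\in\Gamma}\,\fa x\,\theta(x))\in\eso{}(\fa^1,\arity 1)$.

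For the converse $\picode{d}(L)\in\eso{}(\fa^1,\arity 1)\Rightarrow L\in\rec{d}$: take a formula $\exists\tu R\,\fa x\,\theta(\s,\tu R,x)$ defining $\picode{d}(L)$, where $\tu R=(R_1,\dots,R_k)$ are monadic. Introduce the new alphabet $\Gamma=\Sigma\times\{0,1\}^k$, whose second component at a cell records the guessed membership of that cell in $R_1,\dots,R_k$; the projection $\pi:\Gamma\to\Sigma$ forgets the bit vector. A $\Gamma$-labelling of $[n]^d$ is exactly a pair $(\apic,\tu R)$, and the requirement "$\picode{d}(\apic)\models\fa x\,\theta$" says that $\theta$ holds at every cell. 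Because $\theta$ is quantifier-free with the single variable $x$, every atom in it is either a colour/bit test on $x$, a boundary test $\min_j(x)$ or $\max_j(x)$, or a colour/bit test on some term $t(x)$ built from the $\suc_j$'s. Iterated successors let $\theta$ reach cells at bounded Chebyshev distance from $x$, so $\theta$ is a Boolean combination of local conditions inside a fixed-radius window around $x$; one then converts such a bounded-window constraint into tiling sets $\Delta_1,\dots,\Delta_d\subseteq(\Gamma^\sharp)^2$ by the standard trick of enlarging the alphabet once more so that each cell carries the contents of its whole window (and the $\sharp$-bordering makes the window well-defined and lets $\min_j/\max_j$ be detected by the presence of $\sharp$ one step away). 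After this bookkeeping, "$\theta$ holds everywhere" becomes "the $\sharp$-bordered picture is tiled by $(\Delta_1,\dots,\Delta_d)$", and composing the two projections shows $L\in\rec{d}$.

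The main obstacle is the bounded-window-to-tiles reduction in the converse: a priori $\suc_j$ can be iterated inside $\theta$ to depth larger than $1$, so a single pair of adjacent cells does not directly witness $\theta$; the fix is the classical alphabet-blow-up that stores, in each cell's label, the labels of all cells within the relevant radius, after which consistency of overlapping windows is itself a $2$-tile condition in each direction and the original constraint $\theta$ reads off from one window. Handling the cyclic nature of $\suc_j$ together with the $\sharp$-border — making sure that "wrapping around" is never mistaken for a genuine adjacency and that $\min_j$, $\max_j$ are faithfully simulated by adjacency to $\sharp$ — is the delicate but routine part of this step. Everything else (partition axioms, refinement of colours, finite case analyses at corners) is elementary.
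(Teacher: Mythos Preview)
Your proposal is correct and the forward direction matches the paper's argument essentially verbatim. The backward direction, however, takes a genuinely different route. The paper first invokes a syntactic normalization lemma (stated there as a Fact, without proof) asserting that on pixel structures every $\eso{}(\fa^1,\arity 1)$-sentence is equivalent to one in which all atoms have the restricted forms $Q(x)$ or $Q(\suc_i(x))$, with explicit case splits on $\min_i(x)$, $\max_i(x)$, $\neg\max_i(x)$; once this depth-$1$ form is obtained, the tile sets $\Delta_i$ are read off directly from the complete disjunctive normal form of the first-order part. You instead keep the arbitrary term depth in $\theta$ and absorb it semantically, via the classical alphabet blow-up that stores an entire bounded-radius window at each cell, reducing the original constraint to overlap-consistency plus a one-cell check. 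Both approaches are standard and valid; the paper's buys a very direct extraction of the tiles at the price of a separate normalization lemma, while yours avoids that lemma but must carefully reconcile the \emph{cyclic} successors $\suc_j$ used in $\theta$ with the \emph{non-cyclic} adjacency of the bordered picture (the wrap-around at $\max_j$ and the role of $\sharp$), exactly the ``delicate but routine'' bookkeeping you flag. Neither approach dominates the other: your argument is more self-contained, the paper's is cleaner once its normalization Fact is granted.
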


\begin{proofsketch}
\framebox{$\Imp$} 
A picture belongs to $L$ if there exists a tiling of its domain whose projection coincides with its content. In the logic involved in the proposition, the ``$\arity 1$'' corresponds to formulating the existence of the tiling, while the $\fa^1$ is the syntactic resource needed to express that the tiling behaves as expected. Let us detail these considerations. 

\medskip

By Definition~\ref{def:local}, there exist an alphabet $\Gamma$ (which can be assumed disjoint from $\Sigma$), a surjective function $\pi:\Gamma\to\Sigma$ and $d$ subsets $\Delta_1,\dots,\Delta_d\subseteq (\Gamma^\sharp)^2$ such that $L$ is the set $\{\pi\circ\apic' : \apic'\in L(\Delta_1,\dots,\Delta_d)\}.$ 

The belonging of a picture $\apic': [n]^d\to\Gamma$ to $L(\Delta_1,\dots,\Delta_d)$ is easily expressed on $\picode{d}(\apic')=\la\dom n^d,(Q_s)_{s\in\Gamma},\dots\ra$ with a first-order formula which asserts, for each dimension $i\in\dom d$, that for any pixel $x$ of $\apic'$, the couple ($x,\suc_i(x))$ can be tiled with some element of $\Delta_i$. Because it deals with each cell $x$ separately, this formula has the form $\fa x\Psi(x,(Q_s)_{s\in\Gamma})$, where $\Psi$ is quantifier-free. 

Now, a picture $\apic: [n]^d\to\Sigma$ belongs to $L$ iff it results from a $\pi$-renaming of a picture $\apic'\in L(\Delta_1,\dots,\Delta_d)$. It means there exists a $\Gamma$-labeling of $\apic$ (that is, a tuple $(Q_s)_{s\in\Gamma}$ of subsets of $[n]^{d}$) corresponding to a picture of $L(\Delta_1,\dots,\Delta_d)$ (\ie fulfilling $\fa x\Psi(x,(Q_s)_{s\in\Gamma})$) and from which the actual $\Sigma$-labeling of $\apic$ (that is, the subsets $(Q_s)_{s\in\Sigma}$) is obtained \via $\pi$ (easily expressed by a formula of the form $\fa x\Psi'(x,(Q_s)_{s\in\Sigma},(Q_s)_{s\in\Gamma})$). 

Finally, the formula $(\ex Q_s)_{s\in\Gamma}\fa x : \Psi\et\Psi'$ conveys the desired property and fits the required form. 

\medskip

\framebox{$\Rimp$} 
In order to prove the converse implication, it is convenient to first normalize the sentences of $\eso{}(\fa^1,\arity 1)$. This is the role of the technical result below, which asserts that on pixel encodings, each such sentence can be rewritten in a very local form where the first-order part alludes only pairs of adjacent pixels of the bordered picture. We state it without proof: 

\begin{fact}\label{thm:localisation}
On pixel structures, any $\phi\in\eso{}(\fa^1,\arity 1)$ is equivalent to a sentence of the form: 
\begin{equation}\label{eq:localisation}
\ex\bold{U}\fa x\Et_{i\in [d]}\left\{ 
\begin{array}{rcll}
\min_i(x) 		& \imp & \textsf{m}_i(x) & \et \\
\max_i(x) 		& \imp & \textsf{M}_i(x) & \et \\
\neg \max_i(x) & \imp & \Psi_i(x)
\end{array}\right\}.
\end{equation}
Here, $\tu U$ is a list of monadic relation variables and $\textsf{m}_i$, $\textsf{M}_i$, $\Psi_i$ are quantifier-free formulas such that
\begin{itemize}
\item  atoms of $\textsf{m}_i$ and $\textsf{M}_i$ have all the form $Q(x)$;
\item  atoms of $\Psi_i$ have all the form $Q(x)$ or $Q(\suc_i(x))$,
\end{itemize}
where, in both cases, $Q\in\{ (Q_s)_{s\in\Sigma},\bold{U}\}$.%
\end{fact}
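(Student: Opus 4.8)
\begin{proofsketch}
The plan is to start from a sentence $\phi\equiv\ex\tu R\,\fa x\,\theta(x)$ of $\eso{}(\fa^1,\arity 1)$ over the pixel signature, with $\tu R$ a tuple of monadic relation variables and $\theta$ quantifier-free, and to rewrite it step by step into shape~(\ref{eq:localisation}) by introducing fresh \emph{monadic} relations whose intended interpretation encodes the content of a bounded neighbourhood of the current cell; at the end, the first-order part mentions only a cell $x$, its images $\suc_i(x)$ and the border predicates, as required. First I would normalise terms: since $\suc_1,\dots,\suc_d$ pairwise commute, every term built from $x$ is equivalent to some $\suc^{\tu u}(x)\egaldef\suc_1^{u_1}\cdots\suc_d^{u_d}(x)$ with the $u_i$ nonnegative integers, and if $r$ is the largest coordinate occurring among the finitely many displacements of $\theta$, all of them lie in $[0,r]^d$. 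Each atom of $\theta$ is then $Q(\suc^{\tu u}(x))$ for some unary symbol $Q$ (among $(Q_s)_{s\in\Sigma}$, the $\min_i$, the $\max_i$, $\tu R$), or an equality $\suc^{\tu u}(x)=\suc^{\tu v}(x)$; the latter holds iff $n$ divides every $u_i-v_i$, hence only when $n\le r$, so a short preliminary argument (branching on the value of $n$, which can then be taken fixed and $\le r$, or $>r$) lets us assume $\theta$ equality-free.

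The core step is a one-dimensional \emph{shift lemma}: for a monadic relation $R$ and a dimension $i$ there are fresh monadic relations $R=R^{[0]},R^{[1]},\dots,R^{[r]}$ and, for $0\le v<r$, ``carry'' relations $C^{R,i}_{v}$, together with axioms of the shape occurring in the $i$-th conjunct of~(\ref{eq:localisation}) --- that is, contributions to $\textsf{m}_i$, $\textsf{M}_i$, $\Psi_i$ --- that force $R^{[u]}(x)\leftrightarrow R(\suc_i^{u}(x))$ on every pixel structure, for all cells $x$ and all $u\le r$, the $\suc_i$ being the \emph{cyclic} successor. Concretely: under the guard $\neg\max_i(x)$ one asserts $R^{[u]}(x)\leftrightarrow R^{[u-1]}(\suc_i(x))$ and $C^{R,i}_{v}(x)\leftrightarrow C^{R,i}_{v}(\suc_i(x))$; under $\min_i(x)$, $C^{R,i}_{v}(x)\leftrightarrow R^{[v]}(x)$; under $\max_i(x)$, $R^{[u]}(x)\leftrightarrow C^{R,i}_{u-1}(x)$. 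Thus $C^{R,i}_{v}$ is constant along each $i$-line and equals $R^{[v]}$ on its $i$-minimal cell, which is precisely the cell $\suc_i$ reaches from the $i$-maximal one; this is how the wrap-around of $\suc_i$ is reproduced locally. A short induction (on $u$, then along $i$-lines) shows the axioms pin $R^{[u]}$ and $C^{R,i}_{v}$ down uniquely with their intended values, and every atom used has the form $Q(x)$ under a $\min_i$ or $\max_i$ guard, or $Q(x)$ or $Q(\suc_i(x))$ under the $\neg\max_i$ guard, as demanded.

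Applying the shift lemma $d$ times in succession --- round $i$ shifting, in dimension $i$, every relation produced so far and placing each of its axioms into the $i$-th conjunct, with the commutation of the $\suc_j$ ensuring the meanings compose into $Q^{\tu u}(x)\leftrightarrow Q(\suc^{\tu u}(x))$ for all $\tu u\in[0,r]^d$ --- one then replaces in $\theta$ each atom $Q(\suc^{\tu u}(x))$ by the new relation $Q^{\tu u}$ applied to $x$, obtaining $\fa x\,\theta'(x)$ where $\theta'$ has only atoms of the form $P(x)$. Placing $\theta'(x)$ inside both $\textsf{M}_1$ and $\Psi_1$ (which jointly guard every cell), keeping each definitional axiom in the conjunct of the dimension it concerns, and existentially quantifying, in a single prefix $\ex\tu U$, the relations $\tu R$ together with all the new monadic relations, yields a sentence of shape~(\ref{eq:localisation}). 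Since each block of definitional axioms has a unique solution, the added existential quantifiers are harmless, so the sentence obtained is equivalent to $\phi$ on all pixel structures.

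The step I expect to be the main obstacle is precisely the treatment of the cyclic wrap-around: the naive localisation $Q^{\tu u+e_i}(x)\leftrightarrow Q^{\tu u}(\suc_i(x))$ collapses under $\max_i(x)$, where $\suc_i$ jumps to the opposite border, and the entire purpose of the carry relations $C^{R,i}_{v}$ is to repair this while respecting the rigid syntactic format of~(\ref{eq:localisation}) --- which is also what forces the construction to be carried out one dimension at a time.
\end{proofsketch}
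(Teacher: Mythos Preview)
The paper explicitly states this Fact \emph{without proof} (``We state it without proof''), so there is no argument in the paper to compare your attempt against. Your task was therefore to supply what the authors omitted, and your approach---introduce fresh monadic ``shift'' predicates $Q^{\tu u}$ together with auxiliary ``carry'' predicates to handle the cyclic wrap-around, define them inductively by axioms that fit the three guards $\min_i$, $\max_i$, $\neg\max_i$, and then replace every deep atom $Q(\suc^{\tu u}(x))$ by $Q^{\tu u}(x)$---is sound and is the natural way to obtain the localisation. The carry mechanism you describe is correct: constancy along each $i$-line plus anchoring at the $\min_i$-cell indeed reproduces the value needed at the $\max_i$-cell, and the induction on $u$ you sketch gives uniqueness of the witnesses.

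Two small points you should tighten. First, the target format demands that the bodies $\textsf{m}_i,\textsf{M}_i,\Psi_i$ contain only atoms $Q(x)$ or $Q(\suc_i(x))$ with $Q$ ranging over $(Q_s)_{s\in\Sigma}$ and the \emph{guessed} monadic symbols $\tu U$; the border predicates $\min_j,\max_j$ are \emph{not} allowed there. Since you set $R^{[0]}=R$, an unshifted occurrence of $\min_j$ or $\max_j$ in $\theta$ would survive into $\theta'$. The fix is trivial (introduce copies $U_{\min_j},U_{\max_j}\in\tu U$ and pin them down by axioms of the allowed shape: $\max_j(x)\to U_{\max_j}(x)$, $\neg\max_j(x)\to\neg U_{\max_j}(x)$, and similarly $\min_j(x)\to U_{\min_j}(x)$, $\neg\max_j(x)\to\neg U_{\min_j}(\suc_j(x))$), but you should say it. Second, your treatment of equalities (``branching on the value of $n$'') is a bit elliptic: you should make explicit that, once the shifted copies of $\max_j$ are available, the predicate ``$n\le r$'' (and each ``$n=k$'' with $k\le r$) is itself expressible by atoms of the form $P(x)$ under a $\min_1$ guard, so that the finite case split can be absorbed into the target format. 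Neither point is a real obstacle; with these two clarifications your proof sketch is complete.
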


\medskip

Now, consider $L$ such that $\picode{d}(L)\in\eso{}(\fa^1,\arity 1)$. Fact~\ref{thm:localisation} ensures that $\picode{d}(L)$ is characterized by a sentence of the form~\refeq{eq:localisation} above. We have to prove that $L$ is the projection of some local $d$-language $L_{\local}$ on some alphabet $\Gamma$, that is a ($\Delta_1$,\dots, $\Delta_d$)-tiled language for some  $\Delta_1$,\dots, $\Delta_d\subseteq\Gamma^2$. Let $U_1,\dots, U_k$ denote the list of (distinct) elements of the set $\{(Q_s)_{s\in\Sigma},\bold{U}\}$ of unary relation symbols of $\phi$ so that the first ones $U_1,\dots, U_m$ are the $Q_s$'s (here, $\min_i$ and $\max_i$ symbols are excluded). The trick is to put each subformula $\textsf{m}_i(x)$, $\textsf{M}_i(x)$ and $\Psi_i(x)$ of $\phi$ into its \emph{complete disjunctive normal form} with respect to $U_1,\dots, U_k$. Typically, each subformula $\Psi_i(x)$ whose atoms are of the form 
$U_j(x)$  or  $U_j(\suc_i(x))$, for some $j\in [k]$, is transformed into the following "complete disjunctive normal form":
\begin{equation}\label{eq:complete disjunction}
\Ou_{(\mathbf{\eps}, \mathbf{\eps'})\in\Delta_i} 
\left( 
\Et_{j\in [k]}\eps_j U_j(x)\et\Et_{j\in [k]}\eps'_j U_j(\suc_i(x))
\right).
\end{equation}
Here, the following conventions are adopted:
\begin{itemize}
\item  $\mathbf{\eps}=(\eps_1,\dots, \eps_k)\in \{ 0,1 \}^k$ and similarly for $\mathbf{\eps'}$;
\item  for any atom $\alpha$ and any bit $\eps_{j}\in\{0,1\}$, $\eps_{j}\alpha$ denotes the literal $\alpha$ if $\eps_{j}=1$, the literal $\neg\alpha$ otherwise. 
\end{itemize}
For $\mathbf{\eps}\in\{ 0,1 \}^k$, we denote by $\Theta_{\mathbf{\eps}}(x)$ the "complete conjunction" $\Et_{j\in [k]} \eps_j U_j(x)$. Intuitively, $\Theta_{\mathbf{\eps}}(x)$ is a complete description of $x$ and the set $\Gamma=\bigcup_{i\in [m]} \{ 0^{i-1}10^{m-i} \} \times \{ 0,1 \}^{k-m}$ is the set of possible colors (remember that the $Q_s$'s that are the $U_j$'s for $j\in [m]$ form a partition of the domain). The complete disjunctive normal form ~\refeq{eq:complete disjunction} of $\Psi_i(x)$ can be written into the suggestive form 
\[
\Ou_{(\mathbf{\eps}, \mathbf{\eps'})\in\Delta_i}\big(\,\Theta_{\mathbf{\eps}}(x) \et \Theta_{\mathbf{\eps'}}(\suc_i(x))\,\big).
\]
If each subformula $\textsf{m}_i(x)$ and $\textsf{M}_i(x)$ of $\phi$ is similarly put into complete disjunctive normal form, that is $\Ou_{(\sharp, \mathbf{\eps})\in\Delta_i} \Theta_{\mathbf{\eps}}(x)$ and $\Ou_{(\mathbf{\eps}, \sharp)\in\Delta_i} \Theta_{\mathbf{\eps}}(x)$, respectively (there is no ambiguity in our implicit definition of the $\Delta_i$'s, since $\sharp\not\in\Gamma$),  then the above sentence~\refeq{eq:localisation} equivalent to $\phi$ becomes the following equivalent sentence:
\[
\begin{array}{c}
\phi' = \ex\bold{U}\fa x{\dst \Et_{i\in [d]}} 
\left\{\begin{array}{rcll}
\min_i(x) 		& \imp & {\dst\Ou_{(\sharp,\mathbf{\eps})\in\Delta_i}         \Theta_{\mathbf{\eps}}(x)} & \et \\
\max_i(x) 		& \imp & {\dst\Ou_{(\mathbf{\eps},\sharp)\in\Delta_i}         \Theta_{\mathbf{\eps}}(x)} & \et \\
\neg \max_i(x) & \imp & {\dst\Ou_{(\mathbf{\eps},\mathbf{\eps'})\in\Delta_i} \left(\, \Theta_{\mathbf{\eps}}(x) \et\Theta_{\mathbf{\eps'}}(\suc_i(x)) \,\right)}
\end{array}\right\}
\end{array}
\]
Finally, let $L_{\local}$ denote the $d$-language over $\Gamma$ defined by the first-order sentence $\phi_{\local}$ obtained by replacing each $\Theta_{\mathbf{\eps}}$ by the new unary relation symbol $Q_{\mathbf{\eps}}$ in the first-order part of $\phi'$. In other words, $\picode{d}(L_{\local})$ is defined by the following first-order sentence:
\[
\begin{array}{c}
\phi_{\local} = \fa x\Et_{i\in [d]} 
\left\{\begin{array}{rcll}
\min_i(x) 		& \imp & {\dst\Ou_{(\sharp,\mathbf{\eps})\in\Delta_i}         Q_{\mathbf{\eps}}(x)} & \et \\
\max_i(x) 		& \imp & {\dst\Ou_{(\mathbf{\eps},\sharp)\in\Delta_i}         Q_{\mathbf{\eps}}(x)} & \et \\
\neg \max_i(x) & \imp & {\dst\Ou_{(\mathbf{\eps},\mathbf{\eps'})\in\Delta_i} \left(\, Q_{\mathbf{\eps}}(x) \et Q_{\mathbf{\eps'}}(\suc_i(x)) \,\right)}
\end{array}\right\}
\end{array}
\]
Hence, $L_{\local}=L(\Delta_1,\dots,\Delta_d)$. That is, $L_{\local}$ is indeed local and the corresponding sets of tiles are the $\Delta_i$'s of the previous formula.
It is now easy to see that our initial $d$-language $L$ is the projection of the local language $L_{\local}$ by the projection $\pi:\Gamma\imp\Sigma$ defined as follows: $\pi(\eps)=s$ iff $\eps_i=1$ for $i\in [m]$ and $U_i$ is $Q_s$. This completes the proof.
\end{proofsketch}

\begin{proposition}\label{thm:eso(1)=eso(1,1) on pixel} 
$\eso{}(\arity {1})\subseteq\eso{}(\fa^{1},\arity 1)$ on pixel structures, for any $d>0$.
\end{proposition}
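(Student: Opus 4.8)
The plan is to derive the stated inclusion from a counting normal form for first‑order logic over pixel structures. Fix a sentence $\Phi\equiv\ex\tu R\,\phi(\sigma,\tu R)$ of $\eso{}(\arity 1)$, where $\sigma$ is the signature of pixel structures, $\tu R$ a tuple of \emph{monadic} relation symbols, and $\phi$ first‑order; we must produce an equivalent $\eso{}(\fa^1,\arity 1)$ sentence over pixel structures. The key observation is that $\sigma\cup\tu R$ consists only of unary relation symbols together with the unary function symbols $\suc_1,\dots,\suc_d$, which on every pixel structure (expanded by arbitrary unary interpretations of $\tu R$) denote commuting \emph{bijections} of the domain.

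First I would show that over this class of structures \emph{every} first‑order sentence $\phi$ is equivalent to a Boolean combination of ``cardinality sentences'' of the form $\ex^{\ge m}x\,\psi(x)$ (``there are at least $m$ distinct $x$ such that $\psi(x)$''), where $\psi$ is quantifier‑free and uses only the single first‑order variable $x$, its terms being iterated successors of $x$. To obtain this I would replace each $\suc_i$ by a binary relation symbol for its graph — turning pixel structures into structures of Gaifman degree at most $2d$ — apply Hanf's locality theorem~\cite{Hanf65,Libkin04} to get a Boolean combination of sentences ``at least $m$ elements have $r$‑neighbourhood type $\tau$'', and observe that over pixel structures each such sentence already has the required shape: the $r$‑ball of an element $y$ is the set of iterates $\suc_1^{a_1}\!\cdots\suc_d^{a_d}(y)$ with $|a_1|+\dots+|a_d|\le r$, its isomorphism type is determined by which of these iterates coincide and which unary relations they satisfy, and — using that the $\suc_i$ are bijections — $\ex^{\ge m}x\,[\,x\ \text{has}\ r\text{-type}\ \tau\,]$ is equivalent to $\ex^{\ge m}x\,[\,\suc_1^{r}\!\cdots\suc_d^{r}(x)\ \text{has}\ r\text{-type}\ \tau\,]$, the latter body mentioning only iterates $\suc_1^{b_1}\!\cdots\suc_d^{b_d}(x)$ with $0\le b_i\le 2r$ and hence being quantifier‑free in $x$.

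Granting this normal form, I would write $\phi$, over pixel structures expanded by $\tu R$, as $B\big(\ex^{\ge k_1}x\,\psi_1(x),\dots,\ex^{\ge k_t}x\,\psi_t(x)\big)$ and set $K=\max_j k_j$. The \emph{type} $\tau(x)=(\psi_1(x),\dots,\psi_t(x))\in\{0,1\}^t$ of an element is given by a quantifier‑free formula, and the truth of $\phi$ in a structure $S$ depends only on its \emph{truncated profile} $\big(\min(|\{x:\tau(x)=\varepsilon\}|,K)\big)_{\varepsilon\in\{0,1\}^t}$; let $\mathcal P$ be the fixed finite set of profiles on which $\phi$ holds, so that $\Phi$ holds in $\picode{d}(\apic)$ iff $\picode{d}(\apic)$ has an expansion by $\tu R$ of truncated profile in $\mathcal P$. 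To express this in $\eso{}(\fa^1,\arity 1)$ I would use the global lexicographic successor of $[n]^d$, which is quantifier‑free definable in $\sigma$: with $t_i(x)\egaldef\suc_i\suc_{i+1}\cdots\suc_d(x)$, the successor of $x$ is $t_i(x)$ for the largest $i$ with $\neg\max_i(x)$, the greatest element $(n,\dots,n)$ being defined by $\Et_{i\in[d]}\max_i(x)$ and the least $(1,\dots,1)$ by $\Et_{i\in[d]}\min_i(x)$. Then I would guess $\tu R$ together with a tuple $\tu M$ of fresh monadic symbols encoding, in binary, a function $C$ from the domain into $\{0,\dots,K\}^{\{0,1\}^t}$, intended to satisfy $C(x)_\varepsilon=\min\big(K,\,|\{y\le_{\mathrm{lex}}x:\tau(y)=\varepsilon\}|\big)$, and take the sentence $\ex\tu R\,\tu M\;\fa x\,(\theta_{\mathrm{init}}\et\theta_{\mathrm{step}}\et\theta_{\mathrm{fin}})$, where $\theta_{\mathrm{init}}$ says that if $\Et_{i\in[d]}\min_i(x)$ then $C(x)$ equals $1$ at coordinate $\tau(x)$ and $0$ elsewhere; $\theta_{\mathrm{step}}$ says that for each $i$, if $\neg\max_i(x)\et\Et_{i<j\le d}\max_j(x)$ (so $t_i(x)$ is the lexicographic successor of $x$) then $C(t_i(x))$ equals $C(x)$ with coordinate $\tau(t_i(x))$ incremented and capped at $K$; and $\theta_{\mathrm{fin}}$ says that if $\Et_{i\in[d]}\max_i(x)$ then $C(x)\in\mathcal P$. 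Each $\theta$ is quantifier‑free and all second‑order quantifiers are monadic, so the sentence lies in $\eso{}(\fa^1,\arity 1)$; its equivalence with $\Phi$ follows because the lexicographic successor linearly orders $[n]^d$ from $(1,\dots,1)$ to $(n,\dots,n)$, so an induction along it (base case $\theta_{\mathrm{init}}$, step $\theta_{\mathrm{step}}$) forces $C$ onto its intended values, whence $C$ at $(n,\dots,n)$ is the truncated profile of $S$ and $\theta_{\mathrm{fin}}$ makes it land in $\mathcal P$.

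I expect the main obstacle to be the first step, and within it the point that over pixel structures an $r$‑neighbourhood type is quantifier‑free definable with a single first‑order variable: this is exactly where the hypothesis that the $\suc_i$ are bijections is used (it is what legitimises the re‑centering that turns ``backward'' references into honest terms), and it is precisely the feature that coordinate structures lack. The encoding step should then be routine ``counting along a scan'', the only delicate point being the cyclic wrap‑around of the $\suc_i$ at the borders, which is absorbed into the $\min_i/\max_i$ case analysis defining $t_i$ and $\theta_{\mathrm{step}}$.
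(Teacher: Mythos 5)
Your proposal is correct and follows essentially the same two-step route as the paper: first reduce the first-order part, over pixel structures, to a Boolean combination of cardinality sentences $\ex^{\ge k}x\,\psi(x)$ with $\psi$ quantifier-free in the single variable $x$, then realise the counting in $\eso{}(\fa^{1},\arity 1)$ by scanning the domain along the implicitly defined lexicographic successor with monadic counter predicates. The only divergence is in the first step, where you derive the normal form from Hanf locality applied to the bounded-degree graphs of the $\suc_i$'s (using bijectivity to re-centre and turn inverse iterates into honest terms), whereas the paper imports the same cardinality normal form for bijective structures directly from~\cite{DurandG06}; your second step matches the paper's $U^{=j}$/$U^{\ge k}$ construction up to the packaging of the counters.
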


\begin{proofsketch}
In a pixel structure, each function symbol $\mathrm{succ}_i$ is interpreted as a cyclic successor, that is a bijective function. So, a pixel structure is a \emph{bijective structure}, that is a first-order unary structure whose (unary) functions $f$ are bijective and that explicitly includes all their inverse bijections $f^{-1}$. It has been proved in~\cite{DurandG06} that any first-order formula on a bijective structure can be rewritten as a so-called \emph{cardinality formula}, that is as a boolean combination of sentences of the form $\psi^{\ge k}=\ex^{\ge k}x\;\psi(x)$ (for $k\ge 1$) where $\psi(x)$ is a quantifier-free formula with the single variable $x$ and where the quantifier $\ex^{\ge k}x$ means "there exist at least $k$ elements $x$". 
Therefore, it is easily seen that proving the proposition amounts to show that each sentence of the form $\psi^{\ge k}$ or $\neg\psi^{\ge k}$ can be translated in $\eso{}(\fa^{1},\arity 1)$ on pixel structures. 

This is done as follows: for a given sentence $\ex^{\ge k}x\; \psi(x)$, we introduce new unary relations $U^{=0},$ $U^{=1},\dots$,$U^{=k-1}$ and $U^{\geq k}$, with the intended meaning: 
\begin{quote}
\textit{A pixel $a\in\dom n^d$ belongs to $U^{=j}$ (resp. $U^{\geq k}$) iff there are exactly $j$ (resp. at least $k$) pixels $b\in\dom n^d$ lexicographically smaller than or equal to $a$ such that $\picode{d}(\apic)\models\psi(b)$.}
\end{quote} 
Then we have to compel these relation symbols to fit their expected interpretations, by means of a first-order formula with a single universally quantified variable. First, we demand the relations to be pairwise disjoint: 
\begin{enumerate}[$(1)$]
\item  ${\dst \Et_{i<j<k} }\left(\neg U^{=i}(x)\ou\neg U^{=j}(x)\right) \et {\dst \Et_{i<k}}\left(\neg U^{=i}(x)\ou\neg U^{\ge k}(x)\right).$ 
\end{enumerate}

\newcommand{\inflex}{\leq_{\text{lex}}}
\newcommand{\suclex}{\suc_{\text{lex}}}
\newcommand{\minlex}{\min_{\text{lex}}}
\newcommand{\maxlex}{\max_{\text{lex}}}
Then, we temporarily denote by $\inflex$ the lexicographic order on $\dom n^d$ inherited from the natural order on $\dom n$, and by $\suclex$, $\minlex$, $\maxlex$ its associated successor function and unary relations corresponding to extremal elements. Then the sets described above can be defined inductively by the  conjunction of the following six formulas:  
\begin{enumerate}[$(1)$]\setcounter{enumi}{1}
\item  $\left( \minlex(x)\et\neg\psi(x) \right)  \imp  U^{=0}(x)$ \\
\item  $\left( \minlex(x)\et\psi(x) \right)      \imp   U^{=1}(x)$ \\  
\item  ${\dst\Et_{i<k}} \left( \neg \maxlex(x) \et U^{=i}(x) \et \neg \psi(\suclex(x)) \right)  \imp   U^{=i}(\suclex(x))$ \\
\item  ${\dst\Et_{i<k-1}} \left( \neg \maxlex(x) \et U^{=i}(x) \et \psi(\suclex(x)) \right)\imp U^{=i+1}(\suclex(x))$ \\
\item  $\left( \neg \maxlex(x) \et U^{=k-1}(x) \et \psi(\suclex(x)) \right)  \imp   U^{\ge k}(\suclex(x))$ \\
\item  $\left( \neg \maxlex(x) \et U^{\ge k}(x) \right)  \imp   U^{\ge k}(\suclex(x))$.
\end{enumerate}
Hence, under the hypothesis $(1)\et\dots\et (7)$, the sentences $ \psi^{\ge k}$ and $\neg \psi^{\ge k}$ are equivalent, respectively, to $\fa x (\maxlex(x)\imp U^{\ge k}(x))$ and $\fa x (\maxlex(x)\imp \neg U^{\ge k}(x)).$ 

To complete the proof, it remains to get rid of the symbols $\suclex$, $\minlex$ and $\maxlex$ that are not allowed in our language. It is done by referring to these symbols implicitly rather than explicitly. For instance, since $\suclex(x)=\suc_i\suc_{i+1}\dots\suc_d(x)$, for each non maximal $x\in[n]^d$, i.e., distinct from $(n,\ldots,n)$, and for the smallest $i\in [d]$ such that $\Et_{{j>i}}\max_j(x)$, each formula $\phi$ involving $\suclex(x)$ actually corresponds to the conjunction: 
\[
\Et_{i\in [d]}\left( (\neg\max_i(x)\et\Et_{i<j\leq d}\max_j(x)) \imp\phi_{i} \right),
\] 
where $\phi_{i}$ is obtained from $\phi$ by the substitution $\suclex(x)\devient \suc_i\suc_{i+1}\dots\suc_d(x)$. Similar arguments allow to get rid of $\minlex$ and $\maxlex$. 
\end{proofsketch}

\begin{remark}
In this proof, two crucial features of a structure of type $\picode{d}(\apic)$ are involved: 
\begin{itemize}
\item  its "bijective'' nature, that allows to rewrite first-order formulas as cardinality formulas with a single first-order variable; 
\item  the "regularity" of its predefined arithmetics (the functions $\suc_{i}$ defined on each dimension), that endows $\picode{d}(\apic)$ with a grid structure: it enables us to implicitly define a linear order of the whole domain $dom(\apic)$ by means of first-order formulas with a single variable, which in turn allows to express cardinality formulas by ``cumulative''  arguments, \textit{via} the sets $U^{=i}$. 
\end{itemize}
Proposition~\ref{thm:eso(1)=eso(1,1) on pixel} straightforwardly generalizes to all structures -- and there are a lot -- that fulfill these two properties. 
\end{remark}

To conclude this section, let us mention that we could rather easily derive from Theorem~\ref{thm:rec ssi eso(fa 1) ssi eso(fa 1,ar 1)} the following additional characterization of $\rec{d}$:

\begin{corollary}\label{thm:rec ssi eso(fa 1) ssi eso(var 1)}
For any $d>0$ and any $d$-language $L$, the following assertions are equivalent:
\begin{enumerate}
\item $L\in \rec{d}$;
\item $\picode{d}(L)\in\eso{}(\var 1)$. 
\end{enumerate}
\end{corollary}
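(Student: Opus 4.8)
The plan is to sandwich $\eso{}(\var 1)$ between $\eso{}(\fa^1,\arity 1)$ and $\eso{}(\arity 1)$ on pixel structures and then appeal to Theorem~\ref{thm:rec ssi eso(fa 1) ssi eso(fa 1,ar 1)}. The implication $(1)\Imp(2)$ is immediate: by Theorem~\ref{thm:rec ssi eso(fa 1) ssi eso(fa 1,ar 1)}, $L\in\rec{d}$ provides a defining sentence $\ex\tu R\fa x\,\theta$ for $\picode{d}(L)$ with $\tu R$ monadic and $\theta$ quantifier-free; this sentence involves a single first-order variable, hence lies in $\eso{}(\var 1)$.

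For $(2)\Imp(1)$, assume $\picode{d}(L)=\model(\ex\tu R\,\phi)$ with $\phi$ first-order using at most the variable $x$ and $\tu R$ of arbitrary arity; by Theorem~\ref{thm:rec ssi eso(fa 1) ssi eso(fa 1,ar 1)} again it suffices to re-express this sentence in $\eso{}(\arity 1)$ over pixel structures. \emph{Arity reduction.} Since $\phi$ has the single variable $x$, every atom of $\phi$ involving some $R\in\tu R$ reads $R(\tau_1(x),\dots,\tau_k(x))$ for base terms $\tau_j$, and only finitely many such ``profiles'' $\pi=(\tau_1,\dots,\tau_k)$ occur. For each $R\in\tu R$ and each of its profiles $\pi$, introduce a fresh monadic symbol $R_\pi$ meant to satisfy $R_\pi(a)\Ssi R(\pi(a))$; let $\phi''$ be obtained from $\phi$ by replacing every atom $R(\pi(x))$ by $R_\pi(x)$, and let $\mathrm{Cons}$ be the conjunction, over all pairs $\pi,\pi'$ of profiles of a same $R$, of the universal sentences $\fa a\fa a'\big(\Et_{j}\tau_j(a)=\tau'_j(a')\imp(R_\pi(a)\Ssi R_{\pi'}(a'))\big)$. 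A routine verification -- forward by setting $R_\pi(a):=R(\pi(a))$, backward by reading $R$ off the consistent slices $R_\pi$ on the tuples of the form $\pi(a)$ and taking it false elsewhere -- shows $\ex\tu R\,\phi\equiv\ex\tu R'(\mathrm{Cons}\et\phi'')$, where $\tu R'$ collects the $R_\pi$'s, all monadic. Note that $\mathrm{Cons}$ is a universal first-order sentence with two variables and a quantifier-free matrix over the base signature enriched with $\tu R'$, hence already of the shape allowed in $\eso{}(\arity 1)$.

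\emph{Cardinality normal form and conclusion.} A pixel structure is a bijective structure (as recalled around Proposition~\ref{thm:eso(1)=eso(1,1) on pixel}), and so is its expansion by the monadic symbols $\tu R'$; hence, by~\cite{DurandG06}, the single-variable sentence $\phi''$ is equivalent on such structures to a boolean combination $\beta$ of cardinality sentences $\psi^{\ge k}=\ex^{\ge k}x\,\psi(x)$ with $\psi$ quantifier-free in the single variable $x$ over the base signature together with $\tu R'$. The construction in the proof of Proposition~\ref{thm:eso(1)=eso(1,1) on pixel} applies verbatim -- treating the symbols of $\tu R'$ as additional unary relation symbols -- and places each $\psi^{\ge k}$ and each $\neg\psi^{\ge k}$ in $\eso{}(\fa^1,\arity 1)$ on pixel structures, at the cost of auxiliary monadic symbols taken fresh for each literal. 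Moreover $\eso{}(\arity 1)$ is closed under finite conjunction (rename variables apart, then merge the universal prefixes and the second-order prefixes) and under finite disjunction: given $\ex\tu S_1\fa\tu y_1\theta_1$ and $\ex\tu S_2\fa\tu y_2\theta_2$ with disjoint variables and monadic $\tu S_i$, the sentence $\ex B\tu S_1\tu S_2\fa\tu y_1\tu y_2 z z'\big((B(z)\Ssi B(z'))\et(B(z)\imp\theta_1)\et(\neg B(z)\imp\theta_2)\big)$, with $B$ a fresh monadic symbol, forces $B$ to be empty or full and thereby expresses the disjunction inside $\eso{}(\arity 1)$ -- here we use that $\eso{}(\arity 1)=\bigcup_{m>0}\eso{}(\fa^{m},\arity 1)$ puts no bound on the length of the universal prefix. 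Writing $\beta$ in disjunctive normal form over cardinality literals thus places $\beta$ in $\eso{}(\arity 1)$; conjoining with $\mathrm{Cons}$ stays in $\eso{}(\arity 1)$; and prefixing the result with $\ex\tu R'$ merely adds monadic second-order quantifiers. Hence $\picode{d}(L)\in\eso{}(\arity 1)$, and Theorem~\ref{thm:rec ssi eso(fa 1) ssi eso(fa 1,ar 1)} yields $L\in\rec{d}$.

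I expect the arity reduction to be the main obstacle: one has to notice that, on pixel structures, the constraints binding the monadic slices $R_\pi$ to a genuine higher-arity $R$ are first-order expressible -- which works precisely because the regular arithmetic of a pixel structure makes the equivalence $R_\pi(a)\Ssi R_{\pi'}(a')$ hinge on the term equalities $\tau_j(a)=\tau'_j(a')$ -- and that the resulting consistency sentence needs only two first-order variables, which is harmless since $\eso{}(\arity 1)$ restricts the second-order arity but not the length of the universal prefix. The remaining steps are bookkeeping over material already assembled in Propositions~\ref{thm:rec <-> pixel in eso(1)} and~\ref{thm:eso(1)=eso(1,1) on pixel}.
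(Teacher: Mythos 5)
Your proof is correct and follows the route the paper intends when it says the corollary is "rather easily derived" from Theorem~\ref{thm:rec ssi eso(fa 1) ssi eso(fa 1,ar 1)}: the trivial inclusion $\eso{}(\fa^1,\arity 1)\subseteq\eso{}(\var 1)$ for one direction, and for the other an arity reduction that exploits the single first-order variable to slice each guessed relation into monadic relations $R_\pi$ indexed by the finitely many term profiles, tied together by a first-order consistency condition, followed by the cardinality-formula machinery of Proposition~\ref{thm:eso(1)=eso(1,1) on pixel}. The only cosmetic remark is that your disjunction gadget with the flag relation $B$ is unnecessary, since $\fa\tu y_1\theta_1\ou\fa\tu y_2\theta_2$ is already equivalent to $\fa\tu y_1\fa\tu y_2(\theta_1\ou\theta_2)$ when the variable tuples are disjoint.
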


\section{A logical characterization of $\nlinca$}\label{sec:2 carac nlin}

Besides the notion of recognizable picture language, the main concept studied in this paper is the linear time complexity on nondeterministic cellular automaton of any dimension. 

\begin{definition}\titledef{cellular automaton}
A \textdef{one-way $d$-dimensional cellular automaton} (\textdef{$d$-automaton}, for short) over an alphabet $\Sigma$ is a tuple $\A =(\Sigma,\Gamma,\delta,F)$, where 
\begin{itemize}
\item  the finite alphabet $\Gamma$ called the  \textdef{set of states} of $\A $ includes the \textdef{input alphabet} $\Sigma$ and the set $F$ of \textdef{accepting states}: $\Sigma,F\subseteq\Gamma$;
\item  $\delta$ is the (nondeterministic) \textdef{transition function} of $\A $: $\delta:\Gamma\times (\Gamma^{\sharp})^{d}\to\mathcal{P}(\Gamma).$
\end{itemize}
\end{definition}

\begin{definition}\titledef{Successors of a picture for a cellular automaton}
Let $\A =(\Sigma,\Gamma,\delta,F)$ be a $d$-automaton and $\apic,\apic': [n]^{d}\to \Gamma$ be two $d$-pictures on $\Gamma$. We say that $\apic'$ is a \textdef{successor} of $\apic$ for $\A $, denoted by \textdef{$\apic'\in \A (\apic)$}, if for each $a\in [n]^{d}$, $$\apic'(a) \in \delta(\apic(a),\apic^{\sharp}(a^{(1)}),\dots, \apic^{\sharp}(a^{(d)})),$$ where $\apic^{\sharp}$ denotes the bordered picture of $\apic$ and, for $i\in[d ]$, $a^{(i)}$ denotes the point having the same coordinates as point $a$ except the $i^{th}$ one which is $a_i+1$. So, the value of $\apic'$ at point $a\in [n]^{d}$ only depends on the set of points $a,a^{(1)},\dots,a^{(d)}$ of $\apic$. This set of $d+1$ points is called the \textdef{neighborhood} of $a$.

The set of \textdef{$j^{th}$-successors} of $\apic$ for $\A$, denoted \textdef{$\A ^j(\apic)$}, is defined inductively: 
\[
\A^0(\apic)= \{\apic\} \text{ and for $j\geq 0$: } \A^{j+1}(\apic)=\bigcup_{\apic'\in\A^j(\apic)}\A (\apic').
\]
\end{definition}

\begin{definition}\titledef{computation and configuration of a cellular automaton}
A \textdef{computation} of a $d$-automaton $\A =(\Sigma,\Gamma,\delta,F)$ on an input $d$-picture $\apic$ on $\Sigma$ is a sequence $\apic_1$, $\apic_2$, $\apic_3$, \dots of $d$-pictures on $\Gamma$ such that $\apic_1=\apic$ and $\apic_{i+1}\in\A (\apic_i)$ for each $i$. The picture $\apic_i$, $i\geq 1$, is called the \textdef{$i^{th}$ configuration} of the computation. A computation is \textdef{accepting} if it is finite -- it has the form $\apic_1,\apic_2,\dots,\apic_k$ for some $k$ -- and the cell of minimal coordinates, $1^d=(1,\dots,1)$, of its last configuration is in an accepting state: $\apic_{k}(1^d)\in F$. 
\end{definition}

\begin{remark}
Note that the space used by a $d$-automaton is exactly the space (set of cells) occupied by its input $d$-picture.%
\end{remark}

\begin{definition}\titledef{Acceptation by a cellular automaton, linear time}
Let $\A =(\Sigma,\Gamma,\delta,F)$ be a $d$-automaton and let $T: \mathbb{N}\to\mathbb{N}$ be a function such that $T(n)>n$. A $d$-picture $\apic:[n]^d\to\Sigma$ is \textdef{accepted by $\A$ in time $T$} if $\A$ admits an accepting computation of length $T(n)$ on $\apic$. That means, there exists a computation $\apic=\apic_1,\apic_2,\dots,\apic_{T(n)}$ of $\A $ on $\apic$ such that $\apic_{T(n)}(1^d)\in F$. 

A $d$-language $L$ on $\Sigma$ is \textdef{accepted}, or \textdef{recognized}, by $\A$ \textdef{in time} $T$ if it is the set of $d$-pictures accepted by $\A$ \emph{in time} $T$. That is 
\[
L=\{\apic: \ex \apic'\in \A^{T(n)-1}(\apic)\ST \apic'(1^d)\in F\ \text{ where } dom(\apic)=[n]^d\}.
\] 
If $T(n)= cn+c'$, for some integers $c,c'$, then $L$ is said to be \textdef{recognized in linear time} and we write \textdef{$L\in\nlinca^{d}$}. 
\end{definition}

The time bound $T(n)>n$ of the above definition is necessary and sufficient to allow the information of any pixel of $\apic$ to be communicated to the pixel of minimal coordinates, $1^d$.

\begin{remark}\label{rk:robustesse nlin}
The nondeterministic linear time class $\nlinca^{d}$ is very robust, \ie  is not modified by many changes in the definition of the automaton or in its time bound. In particular, the constants $c,c'$ defining the bound $T(n) = cn+c'$ can be fixed arbitrarily, provided $T(n)>n$. Typically, the class $\nlinca^{d}$ does not change if we take the \emph{minimal} time $T(n)=n+1$, called \emph{real time}.
\end{remark}

\medskip

As we have stated several logical characterizations of the class $\rec{d}$ of recognizable picture languages (Theorem~\ref{thm:rec ssi eso(fa 1) ssi eso(fa 1,ar 1)}), we now give two characterizations of the class $\nlinca^{d}$ in existential second-order logic. 

\begin{theorem}\label{thm:nlin ssi eso(var d+1) ssi eso(fa d+1,ar d+1)}
For any $d>0$ and any $d$-language $L$, the following assertions are equivalent:
\begin{enumerate}
\item\label{item:nlin}  $L\in\nlinca^{d}$;
\item\label{item:fa d+1, ar d+1}  $\cocode{d}(L)\in\eso{}(\fa^{d+1},\arity d+1)$;
\item\label{item:var d+1}  $\cocode{d}(L)\in\eso{}(\var d+1)$.
\end{enumerate}
\end{theorem}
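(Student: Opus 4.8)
The plan is to prove the cycle of implications $\ref{item:nlin}\Imp\ref{item:fa d+1, ar d+1}\Imp\ref{item:var d+1}\Imp\ref{item:nlin}$, where the middle implication is trivial (a prenex $\forall^{d+1}$ formula with arity-$(d{+}1)$ relations uses only $d{+}1$ first-order variables) and the real content lies in the two outer implications, which split naturally into a ``simulation'' direction and a ``normalization'' direction.

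\paragraph{From cellular automata to logic ($\ref{item:nlin}\Imp\ref{item:fa d+1, ar d+1}$).} The idea is the classical encoding of a space-time diagram by second-order relations. Suppose $L$ is recognized by a $d$-automaton $\A=(\Sigma,\Gamma,\delta,F)$ in time $T(n)=cn+c'$; by Remark~\ref{rk:robustesse nlin} we may assume real time $T(n)=n+1$, so a computation on an input of size $n$ is a sequence of $n+1$ configurations, each a $d$-picture on $[n]^d$, i.e. the whole computation lives on a $(d{+}1)$-dimensional grid $[n]^{d+1}$ whose last axis is time. On $\cocode d(\apic)$, whose domain is $[n]$ with successor and order, such a grid is exactly what $(d{+}1)$-ary relations over $[n]$ can address. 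So I would existentially quantify, for each state $\gamma\in\Gamma$, a $(d{+}1)$-ary relation $S_\gamma$ with intended meaning ``cell $a\in[n]^d$ is in state $\gamma$ at time $t$'', the tuple $(a,t)$ ranging over $[n]^{d+1}$. Then a single $\forall^{d+1}$-quantified, quantifier-free formula $\theta$ (reading the value at $(a,t)$ and at the neighbours $a^{(1)},\dots,a^{(d)}$ at time $t$, all obtainable from the tuple $(a,t)$ by applying $\suc$ to individual components, and comparing with the value at $(a,t{+}1)$ via the transition table $\delta$) expresses that the $S_\gamma$ form a legal computation; boundary clauses use $\min,\max$ to force $t=1$ to agree with the input relations $Q_s$ and to handle the $\sharp$ border, and the acceptance clause ``$S_\gamma(1^d,n{+}1)\in F$'' is expressed with $\min$ and $\max$. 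The point is that the number of first-order variables needed is exactly $d+1$: one tuple $(a_1,\dots,a_d,t)$ suffices because every relevant neighbour is reached by applying a unary function to a component, never by introducing a fresh quantified point. This gives a sentence in $\eso{}(\fa^{d+1},\arity d+1)$ defining $\cocode d(L)$.

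\paragraph{From logic to cellular automata ($\ref{item:var d+1}\Imp\ref{item:nlin}$).} This is the hard direction and it is where a \emph{normalization} of $\eso{}(\var d+1)$ is needed: I would first bring an arbitrary $\eso{}(\var d+1)$ sentence over $\cocode d(\apic)$ into a canonical ``local'' prenex form $\exists\tu R\,\forall\tu x\,\theta$ with $\tu x$ a $(d{+}1)$-tuple, $\theta$ quantifier-free, and all $R$'s of arity at most $d+1$, where moreover the atoms of $\theta$ only mention $\tu x$ and its images under $\suc$ and the $\min/\max$ predicates — i.e. exactly the form appearing in item~\ref{item:fa d+1, ar d+1}. (This is the analogue for coordinate structures of Fact~\ref{thm:localisation}; the excerpt announces precisely such a normalization result for $\eso{}(\fa^{d}, \arity d)$ on coordinate encodings, proved in the appendix, and the same machinery applies one dimension up.) Once the sentence is in this local form, a nondeterministic $d$-automaton can, in linear time, (i) build a $(d{+}1)$-dimensional layout over its cells — using $n+1$ time steps as the $(d{+}1)$-st coordinate, so that by real time $n+1$ every cell has ``seen'' the whole picture — (ii) guess the interpretation of each $\tu R$ cell by cell as it sweeps, and (iii) check the local constraint $\theta$ at every tuple $\tu x$, rejecting if any clause fails. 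The crucial quantitative match is again that $d+1$ variables correspond to a $(d{+}1)$-dimensional space-time, which is exactly the space-time available to a $d$-automaton running in linear time: space $[n]^d$, time $\Theta(n)$.

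\paragraph{Main obstacle.} The genuinely delicate point is the normalization of $\eso{}(\var d+1)$ into the local prenex form $\eso{}(\fa^{d+1},\arity d+1)$ on coordinate structures — equivalently, establishing $\eso{}(\var d+1)=\eso{}(\fa^{d+1},\arity d+1)$ over $\cocode d$ of $d$-pictures. Eliminating existential first-order quantifiers in favour of extra second-order relations (Skolemization) is standard, but doing it while keeping the arity bounded by $d+1$ and keeping the first-order part both prenex-universal \emph{and} ``local'' (atoms only on $\tu x$, $\suc$-images, $\min$, $\max$) requires exploiting the grid/arithmetic structure of $\cocode d(\apic)$ — reusing quantified variables, re-indexing coordinates, and encoding any needed ``distance'' information into the guessed relations; this is the ``very technical'' argument the authors defer to the appendix. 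The cellular-automaton simulations at both ends, by contrast, are conceptually routine once the logical side is in the right normal form: the dimension bookkeeping ($d$ space dimensions plus $1$ time dimension $\leftrightarrow$ $d+1$ first-order variables) does all the work, and Remark~\ref{rk:robustesse nlin} lets us ignore the precise constants in the time bound.
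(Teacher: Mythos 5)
Your proposal is correct and takes essentially the same route as the paper: the space--time--diagram encoding for $\ref{item:nlin}\Imp\ref{item:fa d+1, ar d+1}$, the trivial $\ref{item:fa d+1, ar d+1}\Imp\ref{item:var d+1}$, and, for the return to the automaton, the normalization $\eso{}(\var d+1)\subseteq\eso{}(\fa^{d+1},\arity d+1)$ followed by the ``sorted/local'' normalization of the appendix and the linear-time simulation --- the paper merely packages this as the two equivalences of Proposition~\ref{thm:nlin to eso - phase1} and the variable-count normalization rather than as a cycle of implications. The only imprecision is the acceptance clause ``$S_\gamma(1^d,n+1)\in F$'': instant $n+1$ is not an element of the domain $[n]$, so (as the paper does with its $\laststepfle$ formula) one must instead assert that the $n^{\text{th}}$ configuration admits a transition to an accepting one.
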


This theorem is a straightforward consequence of Proposition~\ref{thm:nlin to eso - phase1} and~\ref{thm:eso(1)=eso(1,1) on pixel} below. The former states the equivalence $(\ref{item:nlin})\Ssi(\ref{item:fa d+1, ar d+1})$; with the latter, we establish the normalization $\eso{}(\var {d})\subseteq\eso{}(\fa^{d},\arity d)$ on coordinate structures, which yields equivalence $(\ref{item:fa d+1, ar d+1})\Ssi (\ref{item:var d+1})$. 

\begin{proposition}\label{thm:nlin to eso - phase1} 
For any $d>0$ and any $d$-language $L$, $$L\in\nlinca^{d}\Ssi\cocode{d}(L)\in\eso{}(\fa^{d+1},\arity d+1).$$ 
\end{proposition}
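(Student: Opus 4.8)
The plan is to establish the two implications separately, using the robustness of $\nlinca^{d}$ (Remark~\ref{rk:robustesse nlin}) to work with whichever time bound is most convenient on each side.

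\paragraph{\textbf{From automata to logic} ($\Imp$).}
Suppose $L$ is recognized by a $d$-automaton $\A=(\Sigma,\Gamma,\delta,F)$ in time $T(n)=cn$ (say). A computation is a sequence of configurations $\apic_1,\dots,\apic_{T(n)}$, each a $d$-picture on $\Gamma$; altogether this is a function $[T(n)]\times[n]^d\to\Gamma$. The first step is to encode this whole computation by second-order relations \emph{over the coordinate domain $[n]$}. Since $T(n)=cn$, a time instant $t\in[cn]$ can be coded by a pair $(t_0,t_1)\in[c]\times[n]$ — the "$c$" part is absorbed into the state alphabet, the "$n$" part is a genuine coordinate. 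Hence a configuration at a given instant is described by $|\Gamma|$ relations of arity $d$ (one slot per spatial coordinate), and the full computation by finitely many relations of arity $d+1$ (the extra slot being the $[n]$-part of the time instant). I would introduce existentially quantified $(d+1)$-ary relations $(C_\gamma)_{\gamma\in\Gamma'}$ over a suitably enlarged alphabet $\Gamma'$ that packs the constant-size extra data. The constraints to impose are: (i) the initial configuration agrees with the input, i.e. with the $Q_s$'s of $\cocode{d}(\apic)$; (ii) for every cell and every non-final instant, the local transition rule $\delta$ is respected, which only refers to a cell $a$ and its neighbours $a^{(1)},\dots,a^{(d)}$ — finitely many points, each obtained from $a$ by the cyclic successor $\suc$ in one coordinate, together with the border test via $\min,\max$; (iii) the final configuration has the cell $1^d$ in an accepting state, which is a closed condition expressible with the $\min$ predicate. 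Each of (i)–(iii) is a universal first-order statement: quantifying over a cell $a=(a_1,\dots,a_d)\in[n]^d$ and over the $[n]$-part $t\in[n]$ of the time instant uses exactly $d+1$ first-order variables, and the body is quantifier-free (the neighbours are terms $\suc(\cdot)$ applied to the variables, and the constant-size time bookkeeping is case analysis). So the resulting sentence lies in $\eso{}(\fa^{d+1},\arity d+1)$.

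\paragraph{\textbf{From logic to automata} ($\Rimp$).}
Conversely, suppose $\cocode{d}(L)$ is defined by a sentence $\exists\tu R\,\fa x_1\cdots\fa x_{d+1}\,\theta(\s,\tu R,\tu x)$ with $\tu R$ of arity $\le d+1$ and $\theta$ quantifier-free. The idea is to have a nondeterministic $d$-automaton guess the relations $\tu R$ and then verify the universal first-order part. A relation of arity $d+1$ over $[n]$ is a set of $(d+1)$-tuples; one coordinate of such a tuple can be stored "in place" on the $d$-dimensional grid of cells, and I would use the remaining freedom — time together with a constant-bounded amount of bookkeeping — to sweep through the $(d+1)$-st coordinate. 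Concretely, over roughly $O(n)$ steps the automaton can cycle a pointer through all values of one coordinate while, at each fixed value, the current grid configuration holds the corresponding $d$-dimensional slice of $\tu R$ (nondeterministically written down at the outset and shifted cyclically via $\suc$). Checking that $\fa\tu x\,\theta$ holds amounts to checking, for every assignment of $(d+1)$ coordinate values to $x_1,\dots,x_{d+1}$, that the quantifier-free $\theta$ is satisfied; since $\theta$ only inspects the cells addressed by its terms (which are $\suc$-iterates of the $x_j$'s and the constants $\min,\max$), and there are $d+1$ variables, one can realize this as a bounded-size product of $d+1$ "nested sweeps" through the coordinate domain, each sweep costing linear time. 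By the robustness of $\nlinca^{d}$ (speed-up / padding, Remark~\ref{rk:robustesse nlin}) the total linear time is acceptable, and acceptance — the cell $1^d$ ending in $F$ — is arranged by propagating a single "all checks passed" bit to the origin, which is exactly what the bound $T(n)>n$ permits.

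\paragraph{\textbf{Main obstacle.}}
The delicate point is the $\Rimp$ direction: turning a first-order formula with $d+1$ universally quantified variables over a purely \emph{coordinate} structure into a \emph{local}, constant-memory, linear-time cellular computation. One must carefully orchestrate how the $(d+1)$-ary guessed relations are laid out on a $d$-dimensional array and streamed past the verification mechanism using only the cyclic successor $\suc$, the order $<$, and the extremal predicates — essentially simulating $d+1$ nested loops over $[n]$ with a device that has only $d$ spatial dimensions plus time, while keeping every local update rule finite-state. Handling the atoms of $\theta$ that mix several of the $x_j$'s (e.g. comparisons $x_i<x_j$, or equalities between $\suc$-terms) without blowing up the state set is where the bookkeeping is genuinely technical; the $\Imp$ direction, by contrast, is a routine encoding of a computation tableau and should present no real difficulty.
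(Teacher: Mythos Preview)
Your forward direction ($\Imp$) is essentially the paper's argument: encode the space--time diagram by $(d{+}1)$-ary relations, one coordinate per spatial dimension and one for time, and describe initialisation, local transitions, and acceptance by a universal quantifier-free body. The paper simply takes $T(n)=n+1$ (real time) instead of $cn$, which avoids your packing of the ``$c$-part'' into the alphabet, but this is a cosmetic difference.

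The reverse direction ($\Rimp$), however, has a real gap. Your plan is to guess the $(d{+}1)$-ary relations, lay them out as a stack of $d$-dimensional slices, and ``stream'' the last coordinate through time while the cells verify $\theta$. This works only if every atom of $\theta$ that involves a guessed relation refers to a pixel \emph{adjacent} to the current one in the space--time diagram. But nothing in $\eso{}(\fa^{d+1},\arity d+1)$ guarantees that: $\theta$ may contain an atom such as $R(x_2,x_1,x_3,\ldots,x_{d+1})$, whose pixel is the mirror image of $(x_1,\ldots,x_{d+1})$ across a diagonal hyperplane and hence arbitrarily far from the cell currently processing $(x_1,\ldots,x_{d+1})$. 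Your ``nested sweeps'' and ``cyclic shifts via $\suc$'' do not explain how a finite-state cell ever gets access to that remote bit in linear time; saying the bookkeeping is ``genuinely technical'' is precisely where the proof is missing.

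The paper resolves this by a \emph{normalization} step that you do not have: each guessed relation $R$ is replaced by $(d{+}1)!$ relations $R_\alpha$, one per permutation $\alpha$ of $[d{+}1]$, with intended meaning $R_\alpha(a_1,\ldots,a_{d+1})\Leftrightarrow R(a_{\alpha(1)},\ldots,a_{\alpha(d+1)})$ on the simplex $a_1\le\cdots\le a_{d+1}$. A case analysis over the order type of $(x_1,\ldots,x_{d+1})$ then rewrites every $R$-atom as some $R_\alpha(\tu x)$ or $R_\alpha(\tu x^{(i)})$ with the variables in the \emph{fixed} order $x_1,\ldots,x_{d+1}$ (``sorted'' form). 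Only after this folding-along-the-diagonals is every atom local, and only then does the ``guess rows $R(\cdot,\ldots,\cdot,t)$ successively and check coherence with the neighbours'' automaton actually work. Equalities, inequalities, and the input atoms $Q_s$ are handled by the same trick (new sorted relations that propagate the information). Without this normalization, your sketch is not a proof.
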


\begin{proofsketch}
\framebox{$\Imp$} 
Let $L\in\nlinca^{d}$. By Remark~\ref{rk:robustesse nlin}, let us assume without loss of generality that $L$ is recognized by a $d$-automaton $\mathcal{A}=(\Sigma,\Gamma,\delta,F)$ in time $n+1$. The sentence in $\esorel{}(\fa^{d+1}, \arity  d+1)$ that we construct is of the form \,$\ex(R_s)_{s\in\Gamma}\:\fa\tu x\:\psi(\tu x)$, \,where $\psi(\tu x)$  is a quantifier-free formula such that:
\begin{itemize}
\item  $\psi$ uses a list of exactly $d+1$ first-order variables $\tu x=(x_1,\dots, x_d, x_{d+1})$. Intuitively, the $d$ first ones represent the coordinates of any point in $dom(\apic)=[n]^d$ and the variable $x_{d+1}$ represents any of the first $n$ instants $t\in[ n]$ of the computation (the last instant $n+1$ is not explicitly represented); 
\item $\psi$ uses, for each state $s\in \Gamma$, a relation symbol $R_s$ of arity $d+1$. Intuitively, $R_s(a_1,\dots, a_d, t)$ holds, for any $a=(a_1,\dots, a_d)\in[ n]^d$ and any $t\in[ n]$, iff the state of cell $a$ at instant $t$ is $s$. 
\end{itemize}
$\psi(\tu x)$ is the conjunction $\psi(\tu x)=\initfle(\tu x)\et\stepfle(\tu x)\et\laststepfle(\tu x)$ of three formulas whose intuitive meaning is the following.
\begin{itemize}
\item  $\forall \tu x \;\initfle(\tu x)$ describes the first configuration of $\mathcal{A}$, \ie  at initial instant $1$; 
\item  $\forall \tu x\; \stepfle(\tu x)$ describes the computation of $\mathcal{A}$ between the instants $t$ and $t+1$, 
for $t\in[ n-1]$, \ie  describes the $(t+1)^{th}$ configuration from the $t^{th}$ one;
\item  $\forall \tu x\; \laststepfle(\tu x)$ expresses that the $n^{th}$ configuration of $\mathcal{A}$ leads to a (last) $(n+1)^{th}$ configuration which is accepting, \ie  with an accepting state in cell $1^d$.
\end{itemize}

Let us give explicitly the main formula $\stepfle$. The main technical difficulty comes from the description of the transition function $\delta$ of $\mathcal{A}$ for any cell in the border of the picture. In order to describe uniformly $\delta$ we introduce the notion of \emph{neighborhood}. A \textdef{neighborhood} for $\mathcal{A}$ is any function $\nu:[ 0,d] \imp \Gamma^{\sharp}$ such that $\nu(0)\in\Gamma$. Let \textdef{$\voisins(\mathcal{A})$} denote the set of neighborhoods $\nu = (\nu(0), \nu(1), \dots, \nu(d))$ of $\mathcal{A}$. So, $\delta$ is a function of prototype $\voisins(\mathcal{A}) \imp \mathcal{P} (\Gamma).$ 

Clearly, the universal closure $\forall \tu x \stepfle(\tu x)$ of the following formula  $\stepfle(\tu x)$ --- that uses the "neighborhood" notation -- correctly describes the computation between the instants $t$ and $t+1$, 
for $t\in[ n-1]$:
\[
\begin{array}{c}
\stepfle(\tu x)= 
{\dst \Et_{\nu\in\voisins(\mathcal{A})}}
\left\{
\begin{array}{ll}
\neg \max(x_{d+1}) \et R_{\nu(0)}(\tu x) & \et \vphantom{{\dst \Et_{i}}} \\
{\dst \Et_{i\in [ d]: \nu(i)\in \Gamma}}  \left( \neg \max(x_{i}) \et R_{\nu(i)}(\tu x^{(i)}) \right) & \et \\
{\dst \Et_{i\in[ d]: \nu(i)=\sharp}}  \max(x_i) 
\end{array}
\right\}
\imp {\dst \bigoplus_{s\in\delta(\nu)}} R_{s}(\tu x^{(d+1)}). 
\end{array}
\]
Here, $\bigoplus$ denotes the exclusive disjunction and $\tu x^{(i)}$ is the $(d+1)$-tuple $\tu x$ where $x_i$ is replaced by $succ(x_i)$. Hence we have proved that for any $d$-picture $\apic$ on $\Sigma$, $\cocode{d}(\apic)$ satisfies the $\esorel{}(\fa^{d+1},\arity d+1)$-sentence \,$\ex(R_s)_{s\in \Gamma} \fa \tu x\;\psi(\tu x)$\, if, and only if, $\mathcal{A}^{n}(\apic)$ contains an accepting configuration, \ie $\mathcal{A}$ accepts $\apic$ in time $n+1$, or, by definition, $\apic\in L$. Hence $L\in \esorel{}(\fa^{d+1}, \arity  d+1)$, as required.

\framebox{$\Rimp$} 
Assume $\cocode{d}(L)\in\eso{}(\fa^{d+1},\arity d+1)$, \ie there is some sentence $\Phi\in\eso{}(\fa^{d+1},\arity d+1)$ such that $\apic\in L \Ssi \cocode{d}(\apic)\models \Phi$. We want to prove $L\in\nlinca^{d}$, \ie $L$ is recognized by some $d$-automaton in linear time. Let us give the main idea of the proof for the simplest case $d=1$ and a formula $\Phi\in\eso{}(\fa^{2},\arity 2)$ of the form $$\Phi=\ex R\: \fa x \fa y\: \psi(x,y)$$ where $R$ is a binary relation symbol and $\psi$ is a quantifier-free formula where the only atoms in which $R$ occurs, called $R$ atoms, are of the following forms (1-4): 
\begin{center}
\hspace{1ex} \hfill (1) $R(x,y)$; \hfill (2) $R(succ(x),y)$; \hfill (3) $R(x, succ(y))$; \hfill (4) $R(y,x)$. \hfill \hspace{1ex}
\end{center}
First, notice that if the only atoms where $R$ occurs are of the forms (1-3), \ie the variables $x,y$ only appear in this \emph{unique} order in the arguments of $R$, then formula $\Phi$ has a \emph{local behaviour}: points  
$(x,y)$, $(succ(x),y)$ and $(x,succ(y))$ are \emph{neighbours}, \ie adjacent each other. This allows to construct a $1$-automaton (nondeterministic cellular automaton of dimension $1$) $\mathcal{A}$ that "mimics" $\Phi$.
Roughly, $\mathcal{A}$ "guesses" successively "rows" $R(i,\ldots)$, for $i=1,2,\ldots n$, of $R$, and in the same time, it "checks locally" the coherence of each "instantiation" $\psi(i,j)$: more precisely, at "instant" $i$,
the state of each cell $j$, $1\le j \le n$, of $\mathcal{A}$ "contains" both values $R(i,j)$ and $R(i+1,j)$. So, in case $R$-atoms are of the forms (1-3), the language $L$ is recognized by such a $1$-automaton  $\mathcal{A}$ in linear time as claimed.

Now, let us consider the "general" case where the formula includes all the forms (1-4). Of course, the "pixel" (4) $R(y,x)$ is not adjacent to "pixels" (1-3) but it is their symmetric (more precisely, it is the symmetric of $R(x,y)$) with respect to the "diagonal" $x=y$. The intuitive idea is "to cut" or "to fold" the "picture" $R$ along this diagonal: $R$ is replaced by its two "half pictures" denoted $R_1$ and $R_2$, that are "superposed" in the "half square" $x\le y$ above the diagonal. More precisely, $R_1$ and $R_2$ are binary relations whose intuitive meaning is the following: for points $(x,y)$ such that $x\le y$, one has the equivalence $R_1(x,y)\Ssi R(x,y)$ and the equivalence $R_2(x,y)\Ssi R(y,x)$. By this transformation, each pixel 
$R_2(x,y)$ that represents the original pixel $R(y,x)$ lies at the same point $(x,y)$ as pixel $R_1(x,y)$ that represents pixel $R(x,y)$, for $x\le y$ (the case $y\le x$ is similar). This solves the problem of vicinity.

More precisely, the sentence $\Phi=\ex R \:\fa x \fa y\: \psi(x,y)$ is normalized as follows. Let $\textsf{coherent}(x,y)$ denote the formula $$x=y\imp (R_1(x,y)\ssi R_2(x,y))$$ whose universal closure ensures the coherence of $R_1$ and $R_2$ on the common part of $R$ they both represent, that is the diagonal $x=y$. Using $R_1$ and $R_2$, it is not difficult to construct a formula 
$$
\psi'(x,y) =\left( 
\begin{array}{ll}
\textsf{coherent}(x,y) & \et \\
x<y \to \psi_{<}(x,y) & \et \\ 
x=y \to \psi_{=}(x,y) & \et \\ 
x>y \to \psi_{>}(x,y)
\end{array}\right)
$$
such that the sentence $\Phi'=\ex R_1 \ex R_2\: \fa x \fa y \:\psi'(x,y)$ in $\eso{}(\fa^{2},\arity 2)$ is equivalent to $\Phi$. Let us describe and justify its precise form and meaning.

\begin{table}[htdp]
\caption{Replacement of $R$-atoms by $R_1$- or $R_2$-atoms}\label{table:case replacement of R}
\begin{center}
\hspace{-1em}
\begin{tabular}{|c|c||c|c|c|c|}
\hline
case & formula 		& $R(x,y)$ 	& $R(\suc(x),y)$ &$R(x,\suc(y))$ & $R(y,x)$ \\
\hline
\hline
 $x<y$& $\psi_{<}(x,y)$ & $R_1(x,y)$ & $R_1(\suc(x),y)$ &$R_1(x,\suc(y))$ & $R_2(x,y)$ \\
\hline
$x=y$& $\psi_{=}(x,y)$ & $R_1(x,y)$ & $R_2(x,\suc(y))$ & $R_1(x,\suc(y))$& $R_1(x,y)$ \\
\hline
$x>y$& $\psi_{>}(x,y)$ & $R_2(y,x)$ & $R_2(y, \suc(x))$ &$R_2(\suc(y),x)$ & $R_1(y,x)$ \\
\hline
\end{tabular}
\end{center}
\label{default}
\end{table}%

The formulas $\psi_{<}(x,y)$, $\psi_{=}(x,y)$ and $\psi_{>}(x,y)$ are obtained from formula $\psi(x,y)$ by substitution of $R$ atoms by $R_1$ or $R_2$ atoms according to the cases described in Table~\ref{table:case replacement of R}. It is easy to check that each replacement is correct according to its case. For instance, it is justified to replace each atom of the form $R(x,\suc(y))$ in $\psi$  by $R_2(\suc(y), x)$ when $x>y$ (in order to obtain the formula $\psi_{>}(x,y)$) because when $x>y$, then $\suc(y)\le x$ and hence the equivalence 
$R(x,\suc(y)) \ssi R_2(\suc(y), x)$ holds, by definition of $R_2$. 

Notice that the variables $x,y$ always occur in this order in each $R_1$ or $R_2$ atom of the formulas $\psi_{<}$ and $\psi_{=}$ (see Table~\ref{table:case replacement of R}). At the opposite, they always occur in the reverse order $y,x$ in the formula $\psi_{>}(x,y)$. This is not a problem because, by symmetry, the roles of $x$ and $y$ can be exchanged and the universal closure $\forall x \forall y (x>y \to \psi_{>}(x,y))$ is trivially equivalent to $\forall x \forall y (y>x \to \psi_{>}(y,x))$. So, the above sentence $\Phi'$ -- and hence, the original sentence $\Phi$ -- is equivalent to the sentence denoted $\Phi"$ obtained by replacing in $\Phi'$ the subformula $x>y\imp\psi_{>}(x,y)$ by $y>x\imp\psi_{>}(y,x)$. By construction, relation symbols $R_{1}$, $R_{2}$ only occur in $\Phi$ in atoms of the three required "sorted" forms: $R_i(x,y)$, $R_i(\suc(x),y)$ or $R_i(x, \suc(y))$. Finally, to be precise, there remain two difficulties so that a $1$-automaton can simulate the "sorted" sentence $\Phi"$ in linear time, by the informal algorithm described above: 
\begin{itemize}
\item the presence of equalities and inequalities in the sentence;
\item the forms of the atoms involving input relation symbols.
\end{itemize}
It is easy to get rid of equalities and inequalities by introducing new binary relation symbols defined and used in a "sorted" manner too (see (1-3)). Concerning the second point, we can assume, without loss of generality, that the only atoms involving the input relation symbols $(Q_s)_{s\in\Sigma}$ are of the two forms $Q_s(x)$ or $Q_s(y)$. As we do for equalities and inequalities, we can get rid of atoms of the form $Q_s(y)$ by introducing new binary $\eso{}$ relation symbols : intuitively, they convey each bit $Q_s(a)$ at each point of coordinates $(a,\ldots)$ or $(\ldots, a)$; those new binary relations are also defined and used in a "sorted" manner. The fact that all the atoms involving the input are of the form $Q_s(x)$ allows to consider this input in the initial configuration of the computation of the $1$-automaton \emph{but in no later configuration} as required. So, the sketch of proof is complete for the case $d=1$. 

For the general case, \ie for any dimension $d$, the ideas and the steps of the proof are exactly the same as for $d=1$ but the notations and details of the proof are much more technical. To give an idea, let us succinctly describe the $\eso{}$ relations of arity $d+1$ introduced in the main normalization step. Here again, each $\eso{}$ relation symbol $R$ of the original sentence $\Phi$ in $\eso{}(\fa^{d+1},\arity d+1)$ is replaced by -- or, intuitively, "divided into" -- $(d+1)!$ new $\eso{}$ relation symbol $R_{\alpha}$ of the same arity $d+1$, where $\alpha$ is a permutation of the set of indices $[d+1]$. The intended meaning of each relation $R_{\alpha}$ is the following: for each tuple $(a_1,\ldots,a_{d+1})\in [n]^{d+1}$ such that $a_1\le a_2 \ldots \le a_{d+1}$, the equivalence $$R_{\alpha}(a_1,\ldots,a_{d+1}) \ssi R(a_{\alpha(1)},\ldots, a_{\alpha(d+1)})$$ holds. 
Then, we introduce a partition of the domain $[n]^{d+1}$ into subdomains, similar to the partition of the domain $[n]^2$ described above for $d=1$ into the diagonal $x=y$ and the two "half domains" over and under the diagonal $x<y$ and $x>y$, respectively. According to the case (i.e. subdomain of the partition), this allows to replace each $R$ atom in $\Phi$ by an atom of one of the two following "sorted" forms: 
\begin{center} $R_{\alpha}(\tu x)$;  $R_{\alpha}(\tu x^{(i)})$ \end{center}
where $\tu x=(x_1,\ldots,x_{d+1})$, $1\le i \le d+1$, and $\tu x^{(i)}$ is the tuple $\tu x$ where $x_i$ is replaced by $\suc(x_i)$. Finally, the equalities and inequalities are similarly eliminated in the sentence and we normalize it with respect to the input $d$-ary relations $(Q_s)_{s\in\Sigma}$ by using new $\eso{}$ relation symbols of arity $d+1$ to convey the input information: in the final "sorted" sentence all the $Q_s$ atoms are of the unique form $Q_s(x_1,\ldots,x_{d})$. For such a "sorted" $\eso{}(\fa^{d+1},\arity d+1)$-sentence $\Phi$, it is now easy to construct a $d$-automaton that generalizes the automaton described above in case $d=1$, and checks in linear time whether $\cocode{d}(\apic)\models\Phi$. 
\end{proofsketch}

\begin{proposition}\label{thm:eso(1)=eso(1,1) on pixel} 
For any $d>0$, $\eso{}(\var {d})\subseteq\eso{}(\fa^{d},\arity d)$ on coordinate structures. 
\end{proposition}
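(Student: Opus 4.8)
The plan is to establish the nontrivial inclusion $\eso{}(\var d)\subseteq\eso{}(\fa^d,\arity d)$ (the converse being immediate) by a syntactic transformation of the first-order part of a sentence that removes \emph{every} first-order quantifier, at the price of new existentially quantified relation symbols of arity at most $d$. Fix $\Phi=\ex\tu R\,\phi$ with $\phi$ a first-order sentence all of whose variables are among $x_1,\dots,x_d$, read on a coordinate structure $\la[n],(Q_s)_s,<,\suc,\min,\max\ra$. To each subformula $\theta$ of $\phi$ I attach a fresh relation symbol $S_\theta$ of arity $|\mathrm{free}(\theta)|$, hence $\le d$ since $\mathrm{free}(\theta)\subseteq\{x_1,\dots,x_d\}$, with intended meaning ``$\theta$ holds when its free variables take the given values''. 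By induction on $\theta$ I write an \emph{axiom} $\chi_\theta$, in prenex form with a universal prefix of length $\le d$ over variables among $x_1,\dots,x_d$ and a quantifier-free matrix, that forces $S_\theta$ to its intended meaning \emph{provided} the symbols of the immediate subformulas of $\theta$ already fit theirs: for an atom, $\chi_\theta=\fa\tu y\,(S_\theta(\tu y)\ssi\theta(\tu y))$ with $\tu y=\mathrm{free}(\theta)$; for $\theta_1\et\theta_2$, $\neg\theta_1$, etc., $\chi_\theta$ relates $S_\theta$ to $S_{\theta_1},S_{\theta_2}$ through the matching propositional identity (on the appropriate sub-tuples of $\tu y$). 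This part is routine and never leaves $\{x_1,\dots,x_d\}$.

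The crux is the quantifier node $\theta=\ex x_i\,\theta'$ (the case $\theta=\fa x_i\,\theta'$ being dual, or reducible to it through a negation node). We may assume $x_i\in\mathrm{free}(\theta')$, so $|\mathrm{free}(\theta)|\le d-1$ and the variable $x_i$ is now free to serve as a scanning coordinate. Along the lines of the cumulative device of the pixel-structure normalization $\eso{}(\arity 1)=\eso{}(\fa^1,\arity 1)$ in Section~\ref{sec:eso1 to rec}, I add one more relation symbol $T_\theta$, of arity $|\mathrm{free}(\theta)|+1\le d$, whose intended meaning is ``there is $x_i'\le x_i$ with $\theta'$'', and I force it by the three prenex-universal axioms
\[
\begin{array}{l}
\fa\tu y\,\fa x_i\,\bigl(\min(x_i)\imp(T_\theta(\tu y,x_i)\ssi S_{\theta'}(\tu y,x_i))\bigr),\\[2pt]
\fa\tu y\,\fa x_i\,\bigl(\neg\max(x_i)\imp(T_\theta(\tu y,\suc(x_i))\ssi T_\theta(\tu y,x_i)\ou S_{\theta'}(\tu y,\suc(x_i)))\bigr),\\[2pt]
\fa\tu y\,\fa x_i\,\bigl(\max(x_i)\imp(S_\theta(\tu y)\ssi T_\theta(\tu y,x_i))\bigr),
\end{array}
\]
where $\tu y$ lists $\mathrm{free}(\theta)$ and in each $S_{\theta'}(\tu y,\cdot)$ the last argument sits in the slot of $x_i$. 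Each axiom uses at most $|\mathrm{free}(\theta)|+1\le d$ of the variables $x_1,\dots,x_d$, and since $\suc$ is applied only to non-maximal elements its cyclicity is harmless. Unwinding the recursion, $T_\theta(\tu y,\cdot)$ is the running disjunction of $\theta'(\tu y,1),\dots,\theta'(\tu y,n)$, so $T_\theta(\tu y,\max)$, hence $S_\theta(\tu y)$, equals $\ex x_i\,\theta'(\tu y,x_i)$, as intended.

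Finally, $\phi$ being a sentence, its value is read at the root — as a $0$-ary relation $S_\phi$ (legitimate in ESO), or simply by reading the outermost connective directly off the $T_\theta$'s of the topmost quantified subformulas — and the transformed sentence is $\ex\tu R\,\ex(S_\theta)_\theta\,(T_\theta)_\theta\,\bigl(S_\phi\et\bigwedge_\theta\chi_\theta\bigr)$. Since every conjunct is prenex with a universal prefix, over variables already drawn from $x_1,\dots,x_d$, the conjunction prenexes directly to a single $\fa x_1\cdots\fa x_d$ over a quantifier-free matrix; that is a sentence of $\eso{}(\fa^d,\arity d)$, which proves the proposition. Two remarks. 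The arity bookkeeping must be checked at every node, but it follows from $|\mathrm{free}(\theta)|\le d$ plus the ``$-1$ then $+1$'' at a quantifier node; it is, however, tight — when the proposition is invoked inside the proof of Theorem~\ref{thm:nlin ssi eso(var d+1) ssi eso(fa d+1,ar d+1)} (with $d{+}1$ in place of $d$), the input symbols $Q_s$ already sit at the maximal arity. The only genuine obstacle is the existential-quantifier step: $\ex x_i\,\theta'$ cannot be made universal ``locally'', and the construction essentially needs the \emph{regularity} of the coordinate structure — a built-in successor with extremal elements — to trade an existential search for an iterated, cumulative computation. As the $Q_s$'s are otherwise treated as generic atoms, the argument carries over verbatim to any ``regular'' structure equipped with a successor and endpoints, as announced in the introduction.
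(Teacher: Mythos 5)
Your construction correctly reproduces the first half of the paper's argument: attaching a defining relation to each quantified subformula and trading every existential first-order quantifier for a cumulative ``witness'' relation computed along the successor, using $\min$, $\max$ and the regularity of the coordinate structure --- this is exactly the device of formula~\refeq{eq:witness relation}, which turns $\fa x\ex y\,\theta(x,y)$ into $\fa x\fa y(\max(y)\imp\w{}(x,y))$. Your per-subformula presentation of that step is sound, the arity bookkeeping for the new symbols $S_\theta$ and $T_\theta$ is right, and the final prenexing into a single $\fa x_1\cdots\fa x_d$ prefix is unproblematic since all conjuncts are universal over the same $d$ variables. Up to packaging, this establishes $\eso{}(\var d)\subseteq\eso{}(\fa^{d})$.

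The gap is that you never touch the arities of the \emph{original} guessed relations $\tu R$. Membership in $\eso{}(\var d)$ bounds only the number of first-order variables; the symbols in $\tu R$ may have arbitrary arity $k>d$, and your output sentence $\ex\tu R\,\ex(S_\theta)_\theta(T_\theta)_\theta(\cdots)$ still existentially quantifies them, so it does not belong to $\eso{}(\fa^{d},\arity d)$. This is precisely the second half of the paper's proof: since each atom $R(t_1,\dots,t_k)$ only ever involves the $d$ variables, one introduces a $d$-ary symbol $R_{(t_1,\dots,t_k)}$ for each tuple of terms occurring in an $R$-atom; but the naive substitution yields only the left-to-right implication~\refeq{eq:phi -> tilde(phi)} --- the rewritten sentence can acquire models the original does not have, because nothing forces the family of $d$-ary symbols to be coherent, i.e.\ to arise as the traces of a single $k$-ary relation. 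Supplying those coherence axioms within a $\fa^{d}$ prefix and arity $d$ is the genuinely delicate part of the proposition (the paper itself flags it as the sophisticated step and defers the details), and it is entirely absent from your proof.
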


\begin{proofsketch}
We first prove a kind of Skolemization of $\eso{}(\var {d})$-formulas, thus providing a first normalization of these formulas, in which the first-order part is \emph{universal} and includes \emph{the same number of first-order variables} than the initial formula. To illustrate the procedure that performs this preliminary normalization, let us run it on a very simple first-order formula with \emph{two} variables: 
\[
\phi\equiv\ex x \left(\,\fa yU(x,y)\ou\ex yD(x,y)\,\right). 
\]
We introduce three new relation symbols $R_1,R_2,R_3$ corresponding to the quantified subformulas of~$\phi$. 
\[
\begin{array}{llll}
\defrel_1(R_1)\equiv\fa x:   & R_1(x) & \ssi & \fa y\, U(x,y) \\
\defrel_2(R_2)\equiv\fa x:   & R_2(x) & \ssi & \ex y\, D(x,y) \\
\defrel_3(R_3)\equiv\fa x:   & R_3(x) & \ssi & R_{1}(x)\ou R_{2}(x)
\end{array}
\]
Hence, our initial formula can be rewritten: 
\begin{equation}\label{eq:formule a skolemiser (2)}
\ex R_{1},R_{2},R_{3}: \left( \Et_{1\leq i\leq 3}\defrel_i(R_i) \right) \ \et \ex x\ R_3(x).
\end{equation} 
It is easily seen that \refeq{eq:formule a skolemiser (2)} can be written as a conjunction of prenex formulas, each of which involves no more than two variables and has a quantifier prefix of the shape $\fa x\fa y$ or $\fa x\ex y$ (we include in this latter form the subformula $\ex x\ R_3(x)$). 
All in all, $\phi$ is equivalent to a formula of the form: 
\begin{equation}\label{eq:formule a skolemiser (3)}
\ex R_{1},R_{2},R_{3}: \fa x\fa y\psi(x,y,\tu R)\et\fa x\ex y\theta(x,y,\tu R),
\end{equation} 
where $\psi$ and $\theta$ are quantifier-free. In order to put this conjunction under prenex form without adding a new first-order variable, we have to "replace" the existential quantifier by a universal one. (Afterward $\phi$, as a conjunction of formulas of prefix $\fa x,y$, could be written under the requisite shape.) To proceed, we get use of the arithmetics embedded in coordinate structures. It allows to defining a binary relation $\w{}$ with intended meaning: $\w{}(x,y)$ iff there exists $z\leq y$ such that $\theta( x,z)$ holds. This interpretation is achieved thanks to the formula: 
\begin{equation}\label{eq:witness relation}
\fa x, y\left\{
\begin{array}{ll}
\min(y)  \imp  \big( \w{}( x,y)\ssi\theta( x,y) \big) \hspace{1em}\et \\
\neg \max(y)\to (\w{}(x,\suc(y))\ssi\big( \theta( x,\suc(y))\ou \w{}( x,y)) \big)
\end{array}\right\}
\end{equation}
Under assumption~\refeq{eq:witness relation}, the assertion $\fa x\ex y\theta( x,y)$ is equivalent to $\fa x\fa y:\max(y)\imp\w{}( x,y)$. This allows to rewrite~\refeq{eq:formule a skolemiser (3)}, and hence $\phi$, as the formula:
\begin{equation}\label{eq:formule skolemisee}
\ex R_{1},R_{2},R_{3},\w{}\left(
\begin{array}{l}
(\ref{eq:witness relation})\ \et \fa x\fa y\psi(x,y,\tu R)\ \et \\
\fa x\fa y\big(\max(y)\imp\w{}( x,y)\big)
\end{array}\right)
\end{equation} 
which belongs to $\esorel{}(\fa^2)$.

\medskip

Thus, the above considerations allow to show the normalization $\esorel{}(\var d)=\esorel{}(\fa^d)$ on coordinate structures. It remains to prove $\esorel{}(\fa^d)=\esorel{}(\fa^d,\arity d)$. It amounts to build, for each formula $\Phi$ of type $\ex\tu R\fa x_{1},\dots,x_{d}\phi$, where $\phi$ is quantifier-free and $\tu R$ is a tuple of relation symbols of any arity, a formula $\Phi'$ with the same shape, but in which all relation symbols are of arity $\leq d$, such that $\Phi$ and $\Phi'$ have the same models, as far as coordinate structures are concerned. The possibility to replace a $k$-ary ($k\geq d$) relation symbol $R$ of $\Phi$ by $d$-ary symbols rests in the limitation of the number of first-order variables in $\Phi$: each atomic formula involving $R$ has the form $R(t_1,\dots,t_k)$ where the $t_i$'s are terms built on $x_1,\dots,x_d$. Therefore, although $R$ is $k$-ary, {\em in each of its occurrences} it behaves as a $d$-ary symbol, dealing with the sole variables $x_1,\dots,x_d$. Hence, the key is to create a $d$-ary symbol for each occurrence of $R$ in $\Phi$ or, more precisely, for each $k$-tuple of terms $(t_1,\dots,t_k)$ involved in a $R$-atomic formula. Let us again opt for a ``proof-by-example''  choice and illustrate the procedure on a very simple case. 

\medskip

Let $\Phi$ be the $\esorel{}(\fa^{2},\arity 3)$-formula $\ex R\fa x,y\phi(x,y,R)$, where $\phi\equiv R(x,y,x)\et\neg R(y,x,y)$. Introduce two new \emph{binary} relation symbols $R_{(x,y,x)}$ and $R_{(y,x,y)}$ associated to the triple of terms $(x,y,x)$ and $(y,x,y)$ involved in $\Phi$, and fix their interpretation as follows: for any $a,b\in\dom n$, \[R_{(x,y,x)}(a,b)\Ssi R(a,b,a)\AND R_{(y,x,y)}(a,b)\Ssi R(b,a,b).\] Then we get the equivalence:  \[\la S,R\ra\models\fa x,y: R(x,y,x)\et\neg R(y,x,y)\Ssi\la S,\tu R\ra\models\fa x,y: R_{(x,y,x)}(x,y)\et\neg R_{(y,x,y)}(x,y)\] which, in turn, yields the implication:
\begin{equation}\label{eq:phi -> tilde(phi)}
\begin{array}{c}
S\models\ex R\fa x,y : R(x,y,x)\et\neg R(y,x,y) \Imp 
S\models\ex\tu R\fa x,y : R_{(x,y,x)}(x,y)\et\neg R_{(y,x,y)}(x,y)
\end{array}
\end{equation} 
The converse implication would immediately complete the proof. Unfortunately, it does not hold, since the second formula has a model, while the first has not. 

To get the converse (right-to-left) implication of~\refeq{eq:phi -> tilde(phi)}, we have to strengthen the second formula with some assertion that compels the tuple $R_{(x,y,x)},R_{(y,x,y)}$ to be, in some sense, the binary representation of some ternary relation. This last construction is more sophisticated than the preceding ones, and we can't detail it here. 
\end{proofsketch}


\section{Hierarchy results}\label{sec:hierarchy}

In Section~\ref{sec:eso1 to rec}, we have established several natural characterizations of the classes $\rec{d}$ in existential second-order logic on pixel structures (Theorem~\ref{thm:rec ssi eso(fa 1) ssi eso(fa 1,ar 1)} and Corollary~\ref{thm:rec ssi eso(fa 1) ssi eso(var 1)}). Is there  similar characterizations in coordinate structures instead of pixel structures? One can trivially translate the logical characterization of  $\rec{d}$ in $\eso{}(\fa^1,\arity 1)$ for the  pixel representation into a characterization in $\esorel{}(\fa^{d},\arity d$) for the coordinate representation. 
However, one notices that the sentences so obtained in $\esorel{}(\fa^{d},\arity d$) are "sorted" in a sense similar to that evoked in the previous section (see the proof of Proposition~\ref{thm:nlin to eso - phase1}): the atoms involving the $\eso{}$ (resp. input) relation symbols, all of arity $d$, are of the "sorted" form $R(\tu x)$ or $R(\tu x^{(i)})$ (resp. $Q_s(\tu x)$) where $\tu x$ is the "sorted" tuple of variables $(x_1,\dots,x_d)$ and $\tu x^{(i)}$, $i\in[d]$, is the same "sorted" tuple where $x_i$ is replaced by $\suc (x_i)$. In fact, one can prove that a $d$-language $L$ is recognizable iff $\cocode{d}(L)$ is defined by a "sorted" sentence in $\esorel{}(\fa^{d},\arity d)$.
But requiring the sentence to be sorted seems rather artificial. Can it be dropped? The following result shows it cannot.

\begin{proposition}\label{thm:hierarchy}
For each integer $d\ge 2$ and for $d$-languages \emph{represented by their coordinate structures}, the following strict inclusions hold: 
$$\eso{}(\fa^{d-1},\arity d-1)=\esorel{}(\var d-1)\subsetneq\rec{d}\subsetneq\esorel{}(\var d)=\eso{}(\fa^{d},\arity d).$$ 
That means the following implications hold, 
\[
\begin{array}{c}
\cocode{d}(L)\in\esorel{}(\var d-1)\Imp L\in\rec{d}\Imp\cocode{d}(L)\in\esorel{}(\var d), 
\end{array}
\]
but neither of their converses does.
\end{proposition}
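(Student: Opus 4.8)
\textbf{Proof plan for Proposition~\ref{thm:hierarchy}.}

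The plan is to establish four facts: the two equalities and the two strict inclusions. The equalities $\esorel{}(\var d-1)=\eso{}(\fa^{d-1},\arity d-1)$ and $\esorel{}(\var d)=\eso{}(\fa^{d},\arity d)$ on coordinate structures are exactly the normalization of Proposition~\ref{thm:eso(1)=eso(1,1) on pixel} (the coordinate version), applied with parameter $d-1$ and $d$ respectively, together with the trivial reverse inclusions (a formula in $\eso{}(\fa^{k},\arity k)$ uses at most $k$ first-order variables). The middle inclusion $\rec{d}\subseteq\esorel{}(\var d)$ follows from the discussion preceding the proposition: starting from the $\eso{}(\fa^1,\arity 1)$-characterization of $\rec{d}$ on pixel structures (Theorem~\ref{thm:rec ssi eso(fa 1) ssi eso(fa 1,ar 1)}), one mechanically re-encodes a pixel $a\in[n]^d$ by its tuple of coordinates $(a_1,\dots,a_d)$, the successor functions $\suc_i$ becoming the coordinate-wise $\suc$ applied to the $i$-th argument, and the monadic relations $Q_s,U$ becoming $d$-ary relations; this produces a (sorted) sentence in $\esorel{}(\fa^d,\arity d)\subseteq\esorel{}(\var d)$, which gives the right-hand implication. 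The left-hand inclusion $\esorel{}(\var d-1)\subseteq\rec{d}$ is the symmetric statement: a sentence of $\esorel{}(\var d-1)$ on $\cocode{d}(\apic)$ mentions at most $d-1$ coordinate variables; packing the $d$ coordinates of a cell into $\le d-1$ variables is what allows one to simulate the whole picture by a local tiling — concretely, one shows such a sentence can be normalized to a sorted $\esorel{}(\fa^{d-1},\arity d-1)$-sentence and then, by an argument dual to the one in Proposition~\ref{thm:rec <-> pixel in eso(1)} (reading the $(d-1)$-ary $\eso{}$-relations as colours of a product grid and the universal first-order part as a tiling constraint), that the defined $d$-language is recognizable.

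The two \emph{strict} inclusions are where the real content lies, and they are proved by exhibiting separating $d$-languages, via a Hanf-locality / communication-complexity style counting argument; the excerpt explicitly defers these proofs to Section~\ref{sec:hierarchy}, so here I only indicate the shape. For $\rec{d}\subsetneq\esorel{}(\var d)$: since $\esorel{}(\var d)=\nlinca^d$ by Theorem~\ref{thm:nlin ssi eso(var d+1) ssi eso(fa d+1,ar d+1)} read one dimension down (or directly, $\esorel{}(\fa^d,\arity d)$ captures a genuine nondeterministic linear-time class on $d$-dimensional input), it suffices to pick a $d$-language recognizable by a nondeterministic cellular automaton in linear time that is provably not a projection of a local language — a natural candidate is a language encoding a mildly global constraint such as "the number of cells carrying a distinguished symbol in the first row equals the number in the last row'', or an encoding of acceptance of a fixed $\nlin$-complete problem, and one rules out recognizability by a pumping/cut-and-paste argument on tilings (locality only propagates information across distance $O(1)$, whereas the constraint links cells at distance $\Theta(n)$). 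For $\esorel{}(\var d-1)\subsetneq\rec{d}$: one needs a recognizable $d$-language whose every coordinate definition requires $d$ variables; the obstruction is that with $d-1$ coordinate variables a first-order (hence $\esorel{}$, by the normalization) sentence is Gaifman-local on the $(d-1)$-fold product of the order, so it cannot distinguish two pictures that agree on all small-diameter $(d-1)$-dimensional "slices'' but differ globally — a diagonal-type or "all rows equal'' $d$-picture language does the job, and one checks separately that it \emph{is} local (hence recognizable).

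\textbf{Main obstacle.} The routine parts are the equalities and the two non-strict inclusions, which are bookkeeping on top of the already-proven normalization results. The hard part is the \emph{lower bounds} for the strict inclusions — in particular pinning down a concrete recognizable $d$-language that is not $\esorel{}(\var d-1)$-definable, since one must simultaneously (i) verify recognizability by producing explicit tile sets $\Delta_1,\dots,\Delta_d$, and (ii) prove non-definability, which requires a locality argument (Ehrenfeucht–Fra\"iss\'e / Hanf over the $(d-1)$-variable fragment on coordinate structures) that is delicate precisely because coordinate structures carry a linear order, so the usual "sparse neighbourhood'' form of Hanf's theorem does not apply directly and must be replaced by an order-sensitive variant (counting $(d-1)$-types, or a Crane-Beach–style argument). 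I expect this non-definability proof, together with the matching lower bound $\rec{d}\subsetneq\esorel{}(\var d)$ via a separating language, to be by far the most technical component; it is for this reason the authors relegate it to the dedicated hierarchy section rather than sketching it inline.
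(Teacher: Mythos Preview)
Your handling of the two equalities and the non-strict inclusions is fine and matches the paper's intent: they are bookkeeping on top of the normalization proposition and the pixel/coordinate translation.

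Where your plan goes wrong is the strict inclusion $\rec{d}\subsetneq\esorel{}(\var d)$. First, a factual slip: $\esorel{}(\var d)$ on $d$-picture coordinate structures is \emph{not} $\nlinca^{d}$; Theorem~\ref{thm:nlin ssi eso(var d+1) ssi eso(fa d+1,ar d+1)} gives $\nlinca^{d}=\esorel{}(\var d+1)$, and the later hierarchy theorem in the same section makes clear that $\esorel{}(\var d)$ sits strictly between $\rec{d}$ and $\nlinca^{d}$. Second, your candidate separators are unsuitable: a constraint like ``same number of marked cells in the first and last row'' is easily recognizable (local tilings can propagate a counter), and ``an encoding of an $\nlin$-complete problem'' cannot separate from $\rec{d}$ because, as the introduction recalls, $\rec{d}$ already contains $\np$-complete problems for $d\ge 2$. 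So neither candidate witnesses $\esorel{}(\var d)\setminus\rec{d}$.

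The paper's actual witness is the language $\mirror{d}$ of $d$-pictures invariant under swapping the first two coordinates (Definition~\ref{def:mirror}, Lemma~\ref{thm:mirror}). Membership in $\esorel{}(\var d)$ is immediate: the defining symmetry $Q_1(x_1,x_2,\ldots,x_d)\leftrightarrow Q_1(x_2,x_1,\ldots,x_d)$ uses exactly $d$ variables. Non-membership in $\rec{d}$ is proved by a \emph{counting} argument \`a la Giammarresi et al.: a local language over a fixed tile alphabet has too few ``interface profiles'' along the hyperplane $x_1=x_2$ to encode all $2^{\Theta(n^{d-1})}$ possible half-pictures faithfully, so some two distinct symmetric pictures would share a tiling and a cut-and-paste produces a non-symmetric picture that is still accepted. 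This is much more elementary than the pumping/complexity-theoretic route you sketch, and it does not require any identification of $\esorel{}(\var d)$ with a machine class.

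Finally, you flag the lower bound for $\esorel{}(\var d-1)\subsetneq\rec{d}$ as the ``main obstacle'' and propose a Hanf/EF argument, worrying (rightly) that the linear order blocks standard locality. But the paper treats the $\mirror{d}$ lemma as the core of the proposition, so the emphasis is misplaced; and for the left-hand strictness one should look for a counting-style argument as well (with $d-1$ variables the $d$-ary input atoms $Q_s(\,\cdot\,)$ can only be instantiated on tuples lying in a low-dimensional subvariety of $[n]^d$, so most of the picture is simply invisible to the formula), rather than for an order-sensitive locality theorem that, as you note, is not readily available.
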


The proof of Proposition~\ref{thm:hierarchy} that we skip here uses a simple lemma about the so-called $\mirror{d}$ language.


\begin{definition}\titledef{\mirror{} language}\label{def:mirror} 
For $d\ge 2$, we denote by \textdef{$\mirror{d}$} the set of pictures $\apic:[n]^d \to \{0,1\}$ that fulfill $\apic(a)=\apic(a_{12})$ for all $a\in [n]^d$. Here, $a_{12}$ is the tuple $a$ in which the first and second components have been exchanged (\eg, $(5,3,4,6)_{12}=(3,5,4,6)$). 
\end{definition}

\begin{lemma}\label{thm:mirror}
For each $d\ge 2$, we trivially have $\cocode{d}(\mirror{d})\in \esorel{}(\var d)$. However, $\mirror{d} \not \in \rec{d}$. 
\end{lemma}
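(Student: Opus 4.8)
The plan is to prove the two parts of Lemma~\ref{thm:mirror} separately: the easy membership in $\esorel{}(\var d)$, and the hard non-recognizability.

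\textbf{The easy direction.} To see that $\cocode{d}(\mirror{d})\in\esorel{}(\var d)$, observe that $\mirror{d}$ requires $\apic(a)=\apic(a_{12})$ for all $a\in[n]^d$. On the coordinate structure $\cocode{d}(\apic)=\la[n],(Q_s)_{s\in\{0,1\}},<,\suc,\min,\max\ra$, the symmetry constraint is simply the first-order sentence $\fa x_1\fa x_2\cdots\fa x_d\;\biget_{s\in\{0,1\}}\big(Q_s(x_1,x_2,x_3,\dots,x_d)\ssi Q_s(x_2,x_1,x_3,\dots,x_d)\big)$. This uses only $d$ distinct first-order variables $x_1,\dots,x_d$ and no second-order quantification at all (so, a fortiori, it lies in $\esorel{}(\var d)$). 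Hence this half is immediate, as claimed.

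\textbf{The hard direction.} To prove $\mirror{d}\notin\rec{d}$, I would use the fact (which follows from Theorem~\ref{thm:rec ssi eso(fa 1) ssi eso(fa 1,ar 1)}, item 2, characterizing $\rec{d}$ as constant-time nondeterministic cellular automata, or equivalently directly from the projection-of-local-language definition, Definition~\ref{def:rec}) that recognizable languages satisfy a pumping/locality property: if $L=\{\pi\circ\apic':\apic'\in L(\Delta_1,\dots,\Delta_d)\}$, then whether $\apic\in L$ depends only on "local" adjacency patterns, and in particular a recognizable language cannot enforce a long-range equality constraint between two positions that can be pushed arbitrarily far apart while leaving all local tiles unchanged. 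Concretely, I would argue by contradiction: suppose $\mirror{d}$ is $(\Delta_1,\dots,\Delta_d)$-local after projection along some $\pi:\Gamma\to\{0,1\}$. The key combinatorial point is that $\mirror{d}$ imposes $\binom{n}{2}\cdot(\text{lower order})$ independent bit-equalities of the form "bit at $(i,j,k_3,\dots,k_d)$ equals bit at $(j,i,k_3,\dots,k_d)$" with $i\ne j$, and these two cells are at distance $|i-j|$, which can be made as large as $\sim n$. Since the tiling (and the projection) only constrain pairs of adjacent cells, a counting/surgery argument shows that for $n$ large enough there is a locally-legal tiled picture on $\Gamma$ whose projection violates the mirror condition at some far-apart pair — one "cuts" a legal tiled witness along a hyperplane orthogonal to the first coordinate and re-glues it shifted, or swaps the content of one pair without being detectable by any tile in $\Delta_1\cup\dots\cup\Delta_d$. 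This contradicts $L(\Delta_1,\dots,\Delta_d)$ projecting exactly onto $\mirror{d}$.

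\textbf{Main obstacle.} The delicate step is the surgery/counting argument in the hard direction: one must pick the cut or swap carefully so that (i) every $j$-adjacency pair of the modified picture still lies in $\Delta_j$ — which forces the modification to respect the "column type" of the tiled witness along the cutting hyperplane — and (ii) the resulting projected picture genuinely falls outside $\mirror{d}$. A clean way to handle (i) is to note that the number of distinct "$\Gamma$-columns" (content along, say, the first coordinate, for fixed values of the other $d-1$ coordinates) compatible with a given boundary is bounded independently of $n$ by $|\Gamma|^{O(1)}$ in the relevant finite-automaton sense, so for $n$ exceeding that bound two positions $i<i'$ along the first axis carry the same local transition data, and one may excise the segment between them — standard word-automaton pumping lifted to the grid. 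One then checks that the excision changes the required mirror equalities (because it shifts the first coordinate but not the second), producing the contradiction. I would remark that alternatively one can invoke the constant-time cellular-automaton characterization directly: a constant-time $d$-automaton cannot transmit a bit across distance $\gg c$, whereas $\mirror{d}$ requires comparing bits across distance $\Theta(n)$; this gives a shorter, essentially information-theoretic proof of the same fact.
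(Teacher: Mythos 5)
Your first half is fine: on $\cocode{d}(\apic)$ the mirror condition is directly the first-order sentence $\forall x_1\dots\forall x_d\,\bigwedge_{s}\big(Q_s(x_1,x_2,x_3,\dots,x_d)\leftrightarrow Q_s(x_2,x_1,x_3,\dots,x_d)\big)$, which uses $d$ variables and is an $\esorel{}(\var d)$ sentence with empty second-order prefix. The paper treats this as immediate, as do you.

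The second half, however, rests on two mechanisms that are both unsound. Your pumping/excision step hinges on the claim that the number of distinct ``$\Gamma$-columns compatible with a given boundary'' is $|\Gamma|^{O(1)}$, independent of $n$. For $d\ge 2$ this is false: the relevant ``state'' of a column-by-column scan is the coloring of an entire hyperplane of $n^{d-1}$ cells, so there are $|\Gamma|^{n^{d-1}}$ of them, and to excise a block of hyperplanes legally you would need a whole hyperplane coloring to repeat, which essentially never happens. (This failure of Myhill--Nerode/pumping is exactly why $\rec{d}$, $d\ge2$, is so much richer than $\reg{}$ and contains $\np$-complete problems, as the introduction recalls.) An excision would also destroy squareness, taking you outside the class of pictures under discussion. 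Your fallback ``information-theoretic'' argument is equally broken: a constant-time \emph{nondeterministic} cellular automaton, equivalently a projection of a local language, transmits a bit across distance $\Theta(n)$ with no difficulty, by guessing it at every cell and checking local consistency; e.g.\ the $2$-language $\{p : p(1,j)=p(n,j)\text{ for all }j\}$ is recognizable. Long-range equality constraints are not, per se, an obstruction to recognizability, so an argument at that level of granularity cannot work.

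What the paper means by ``a counting argument'' in the style of Giammarresi et al.\ is a bandwidth comparison, and the quantity that must appear is $2^{\Theta(n^d)}$ against $|\Gamma|^{n^{d-1}}$. Concretely: cut a witness tiling between the hyperplanes $a_1=\lceil n/2\rceil$ and $a_1=\lceil n/2\rceil+1$. The cells $a$ with $a_1<n/2<a_2$ number $\Theta(n^d)$, their bits can be chosen freely within $\mirror{d}$, and each mirror cell $a_{12}$ lies strictly on the other side of the cut; this yields $2^{\Theta(n^d)}$ pictures of $\mirror{d}$ that pairwise disagree at such a cell. Two of them then have witnesses agreeing on the interface hyperplane (there are only $|\Gamma|^{n^{d-1}}$ colorings of it), and gluing the left part of one to the right part of the other gives a legally tiled square picture whose projection violates $\apic(a)=\apic(a_{12})$ --- the contradiction. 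Your sketch never produces a family larger than the $|\Gamma|^{n^{d-1}}$ interface colorings, which is the entire content of the proof.
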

The proof of Lemma~\ref{thm:mirror} essentially amounts to a counting argument. It is a simple variant of the proof of a similar result in Giammarresi et al.~\cite{GRST96} and we skip it.

\newcommand{\Rep}{\text{Rep}}
\medskip

Finally, the hierarchy theorem below (Theorem~\ref{thm:hierarchy}) involves the number of universal first-order quantifiers. It uses the following "symmetric" language as a counterexample. 
\begin{definition}\titledef{symmetric language}\label{def:sym(d)}
Let $\sym{d}$ be the set of pictures $\apic :[n]^d \to \{0,1\}$ that fulfill, for every $a=(a_1,\dots,a_d)\in [n]^d$:
\begin{enumerate}
\item\label{item:sym permute}  $\apic (a)=\apic(a_{\alpha})$ for each permutation $\alpha$ of $\dom d$, where $a_{\alpha}=(a_{\alpha(1)},\dots,a_{\alpha(d)})$; 
\item\label{item:sym inverse}  $\apic (a)=\apic(a_{sym(i)})$ for all $i\in[d]$, where $a_{sym(i)}$ denotes the tuple $a$ whose $i^{th}$ component $a_i$ is replaced by its "symmetric value" $n+1-a_i$%
\footnote{Because of Item~\ref{item:sym permute}, Item~\ref{item:sym inverse} is clearly equivalent to the same assertion for only $i=1$.}. 
\end{enumerate}
In other words, the values $\apic (a)$ are defined up to all possible permutations of coordinates and up to all possible symmetries $a \mapsto a_{sym(i)}$.
\end{definition}

\begin{lemma}\label{lemmaSYM}
For all $d\ge 2$, we have $\cocode{d}(\sym{d})\in \esorel{}(\fa^{d+1},\arity d) \setminus\esorel{}(\fa^{d},\arity d) .$
\end{lemma}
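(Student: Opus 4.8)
The plan is to prove the two parts of $\cocode{d}(\sym{d})\in \esorel{}(\fa^{d+1},\arity d) \setminus\esorel{}(\fa^{d},\arity d)$ separately: first the membership in $\esorel{}(\fa^{d+1},\arity d)$, then the nonmembership in $\esorel{}(\fa^{d},\arity d)$.

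\textbf{Membership $\sym{d}\in\esorel{}(\fa^{d+1},\arity d)$.} The membership direction should follow the same philosophy as the ``sorted'' characterization of $\rec{d}$ sketched after Proposition~\ref{thm:nlin to eso - phase1}. The idea is: a picture $\apic$ lies in $\sym{d}$ iff the value $\apic(a)$ is constant on each orbit of the group $G$ generated by coordinate permutations and the symmetries $a\mapsto a_{sym(i)}$. Each orbit contains a canonical representative, namely the unique tuple $(a_1,\dots,a_d)$ with $a_1\le a_2\le\dots\le a_d$ and $a_1\le n+1-a_d$ (i.e. $a_1+a_d\le n+1$); call this the \emph{sorted-folded} representative. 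The plan is to introduce a single monadic-type $\eso{}$ relation $R$ of arity $d$ (representing a $0/1$ labelling of sorted-folded representatives) together with auxiliary $d$-ary relations that propagate $R$ along generators of $G$, and to write a $\fa^{d+1}$ first-order part asserting: (i) on sorted-folded tuples $\tu x$, $R(\tu x)\leftrightarrow(\apic(\tu x)=1)$, which needs the $Q_s$ atoms only in the sorted form $Q_s(x_1,\dots,x_d)$; (ii) $R$ is invariant under each adjacent transposition $a\mapsto a_{i,i+1}$ and under $a\mapsto a_{sym(1)}$; (iii) these local invariances, since the adjacent transpositions and one symmetry generate $G$, force $R$ to be $G$-invariant; and (iv) $\apic(a)=1\leftrightarrow R(\text{canonical rep of }a)$. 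The extra $(d{+}1)$-th variable is used, exactly as in the proof of Proposition~\ref{thm:nlin to eso - phase1}, to compare a tuple with a permuted/folded version of itself and to eliminate the resulting equalities/inequalities in a sorted way. One must check carefully that $d+1$ variables suffice: each generator of $G$ modifies at most two coordinates, so stating ``$R(\tu x)\leftrightarrow R(\sigma\tu x)$'' for a generator $\sigma$ involves the $d$ variables of $\tu x$ plus possibly one auxiliary variable used to break ties among equal coordinates or to name $n+1-x_i$ via $\max$ and $\suc$; I would verify that this genuinely stays within $d+1$.

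\textbf{Nonmembership $\sym{d}\notin\esorel{}(\fa^{d},\arity d)$.} This is where the content lies, and I expect it to be the main obstacle. The natural strategy is an Ehrenfeucht--Fra\"\i ss\'e / locality argument tailored to $\eso{}(\fa^d,\arity d)$ on coordinate structures, of the same flavour as the hierarchy lemmas for $\mirror{d}$ (Lemma~\ref{thm:mirror}) and for $\rec{d}$: one shows that a $\fa^d$ first-order sentence, even with $d$-ary second-order relations guessed, can only ``see'' $d$-element substructures of the domain $[n]$, so its truth depends only on how the $d$-ary relations restrict to $d$-subsets, and a counting/pigeonhole argument then produces two pictures, one in $\sym{d}$ and one not, that cannot be distinguished. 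Concretely I would argue: suppose $\Phi=\ex\tu R\,\fa x_1\dots x_d\,\theta\in\esorel{}(\fa^d,\arity d)$ defines $\cocode{d}(\sym{d})$; fix a large $n$ and consider the action of the symmetric group on $[n]$ (or a suitable large subgroup) acting on coordinate structures over $[n]$; by an averaging argument over the choice of witnessing relations $\tu R$, or by directly building the $\tu R$ to be invariant, one reduces to sentences whose second-order part is symmetric under $\mathrm{Sym}([n])$; such a $\fa^d$ sentence with $d$-ary symmetric relations is then determined by bounded ``local'' data and cannot enforce the global symmetry constraints defining $\sym{d}$, because $\sym{d}$ requires agreement between arbitrarily far apart coordinate values (e.g. positions $1$ and $n$ via $a\mapsto a_{sym(i)}$) that a $d$-variable universal sentence cannot link when $d$ coordinates are not enough to simultaneously name all the positions involved in a chain of generators. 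The cleanest route is probably to exhibit an explicit pair $\apic\in\sym{d}$, $\apic'\notin\sym{d}$ differing on a single orbit ``far'' in the domain, show the duplicator wins the relevant game, and conclude. Making the group-averaging step and the locality bound rigorous — in particular pinning down exactly why $d$ universal variables are one short (matching the $\fa^{d+1}$ upper bound) rather than two or more short — is the delicate part, and I would model it closely on the appendix's normalization of $\esorel{}(\fa^d,\arity d)$ on coordinate encodings of $(d-1)$-pictures, which is the technical engine that should make both the upper bound tight and the lower bound provable.

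Overall, the membership part is routine ``sorted-encoding'' engineering once one identifies the canonical representative of a $G$-orbit and checks the variable count; the nonmembership part requires a genuine inexpressibility argument, and the expected main obstacle is establishing that $\fa^d$ universal first-order quantifiers with $d$-ary second-order relations are provably insufficient to express the full symmetry group $G$'s invariance — i.e. separating $\fa^d$ from $\fa^{d+1}$ at arity $d$ — which will rely on a careful locality/counting lemma proved in the style of Lemma~\ref{thm:mirror} and the appendix normalization.
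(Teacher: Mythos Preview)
The paper does not actually prove this lemma: it is an extended abstract, and the introduction explicitly says that the proofs of the hierarchy results in Section~\ref{sec:hierarchy} are omitted. So there is no ``paper's own proof'' to compare against; I can only assess your proposal on its own merits.

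\medskip

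\textbf{Membership.} Your plan works in spirit but is more elaborate than necessary. A direct route: existentially quantify a single binary relation $S$ with intended meaning $S(x,y)\Leftrightarrow x+y=n+1$, axiomatize it by
\[
\min(x)\to(\max(y)\leftrightarrow S(x,y)),\qquad
(\neg\max(x)\wedge\neg\max(y))\to\bigl(S(\suc(x),y)\leftrightarrow S(x,\suc(y))\bigr),
\]
and then write the defining conditions of $\sym{d}$ directly:
\[
\forall x_1\dots x_d\ \bigl(Q_1(\tu x)\leftrightarrow Q_1(\tu x_\tau)\bigr)\quad\text{for each transposition }\tau,
\]
\[
\forall x_1\dots x_d\,y\ \bigl(S(x_1,y)\to(Q_1(x_1,x_2,\dots,x_d)\leftrightarrow Q_1(y,x_2,\dots,x_d))\bigr).
\]
This uses $d+1$ universal variables and a single ESO relation of arity~$2\le d$, so it lies in $\esorel{}(\fa^{d+1},\arity d)$. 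Your canonical-representative scheme ultimately collapses to this once you ``propagate $R$ along generators'': if $R$ is $G$-invariant and agrees with $Q_1$ on representatives, then $R=Q_1$ everywhere, and you are just stating the invariance of $Q_1$ directly.

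\medskip

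\textbf{Nonmembership.} Here your proposal is not yet a proof but a menu of possible techniques (EF games, group-averaging, locality, ``model it on the appendix normalization''), and you correctly flag this as the main obstacle. None of the ingredients you list is pinned down: you do not say which two structures the duplicator should be shown to confuse, what the game length is, why averaging over $\mathrm{Sym}([n])$ is legitimate for ESO witnesses, or what the precise locality statement is that fails at $d$ variables but holds at $d+1$. In particular, the analogy with Lemma~\ref{thm:mirror} is misleading: the $\mirror{d}$ argument is a counting bound separating $\rec{d}$ from $\esorel{}(\var d)$, whereas here you must separate $\esorel{}(\fa^d,\arity d)$ from $\esorel{}(\fa^{d+1},\arity d)$, which is a finer distinction (same arity, one more universal variable). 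A real proof has to exploit exactly that: with only $d$ universal variables, every atom $Q_s(t_1,\dots,t_d)$ uses all $d$ variables, so the formula can never simultaneously name a point $\tu a$ and the ``reflected'' value $n+1-a_i$ needed to compare $\apic(\tu a)$ with $\apic(\tu a_{sym(i)})$. Turning that intuition into a lower bound is the entire content of the lemma, and your proposal does not yet supply it.
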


Here is our last hierarchy theorem.

\begin{theorem}\label{thm:hierarchy}
For each integer $d\ge 2$ and for $d$-languages represented by \emph{coordinate} structures, the following (strict) inclusions hold:

\newcommand{\dd}{\hspace{-.8ex}&\hspace{-.8ex}}
\[
\begin{array}{ccccc}
\rec{d} \dd \stackrel{}{\subsetneq} \dd \esorel{}(\var d) \dd \stackrel{}{=} \dd \esorel{}(\fa^d,\arity d) \\
&&&& \downsubsetneq \\
&&&& \esorel{}(\fa^{d+1},\arity d) \\ 
&&&& \downsubseteq \\
\nlinca^{d} \dd \stackrel{}{=} \dd \esorel{}(\var d+1) \dd \stackrel{}{=} \dd \esorel{}(\fa^{d+1},\arity d+1) 
\end{array}
\]
\end{theorem}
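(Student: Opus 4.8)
The plan is to assemble Theorem~\ref{thm:hierarchy} from the pieces already established in the paper, so the work is mostly bookkeeping plus two separation arguments carried by the two witness languages $\mirror{d}$ and $\sym{d}$. First I would record the equalities that come for free: $\esorel{}(\var d)=\esorel{}(\fa^d,\arity d)$ on coordinate structures is exactly Proposition~\ref{thm:eso(1)=eso(1,1) on pixel} (the second one, for coordinate structures, giving $\esorel{}(\var d)\subseteq\esorel{}(\fa^d,\arity d)$; the reverse inclusion is trivial since a $\fa^d$-prefixed formula uses $d$ variables). Likewise $\nlinca^{d}=\esorel{}(\var d+1)=\esorel{}(\fa^{d+1},\arity d+1)$ is precisely Theorem~\ref{thm:nlin ssi eso(var d+1) ssi eso(fa d+1,ar d+1)}. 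The inclusions $\esorel{}(\fa^d,\arity d)\subseteq\esorel{}(\fa^{d+1},\arity d)\subseteq\esorel{}(\fa^{d+1},\arity d+1)$ are syntactic: enlarging the allowed quantifier prefix or the allowed arities only adds formulas. So the diagram reduces to two genuine claims: the \emph{strictness} $\rec{d}\subsetneq\esorel{}(\var d)$ (the top line), and the \emph{strictness} $\esorel{}(\fa^d,\arity d)\subsetneq\esorel{}(\fa^{d+1},\arity d)$ (the middle arrow); the bottom arrow $\esorel{}(\fa^{d+1},\arity d)\subseteq\esorel{}(\fa^{d+1},\arity d+1)$ needs no separation and is marked $\subseteq$ in the statement.

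Next I would discharge $\rec{d}\subsetneq\esorel{}(\var d)$ using Lemma~\ref{thm:mirror}. Since $\mirror{d}$ is defined by the single constraint $\apic(a)=\apic(a_{12})$, one checks directly that $\cocode{d}(\mirror{d})$ is cut out by $\ex(Q_s)\fa a\,(Q_0(a)\ssi Q_0(a_{12}))$-style conditions expressible with $d$ first-order variables, so $\cocode{d}(\mirror{d})\in\esorel{}(\var d)$; meanwhile the lemma asserts $\mirror{d}\notin\rec{d}$, which is the counting argument inherited from~\cite{GRST96} (a local language over alphabet $\Gamma$ has boundedly many tile types, hence cannot force a column to agree bit-for-bit with an arbitrarily far row). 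That $\rec{d}$ is contained in $\esorel{}(\var d)$ at all is the "sorted" translation mentioned in the paragraph preceding Proposition~\ref{thm:hierarchy}: the $\eso{}(\fa^1,\arity 1)$ characterization on pixel structures (Theorem~\ref{thm:rec ssi eso(fa 1) ssi eso(fa 1,ar 1)}) becomes a $d$-variable sentence on coordinate structures. So I would cite that translation and then invoke $\mirror{d}$ for the strictness.

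Then I would handle the middle arrow $\esorel{}(\fa^d,\arity d)\subsetneq\esorel{}(\fa^{d+1},\arity d)$ via Lemma~\ref{lemmaSYM}: $\sym{d}$ witnesses membership in $\esorel{}(\fa^{d+1},\arity d)$ but not in $\esorel{}(\fa^{d},\arity d)$, exactly as the lemma states. So the theorem follows by chaining: $\rec{d}\subsetneq\esorel{}(\var d)=\esorel{}(\fa^d,\arity d)\subsetneq\esorel{}(\fa^{d+1},\arity d)\subseteq\esorel{}(\fa^{d+1},\arity d+1)=\esorel{}(\var d+1)=\nlinca^{d}$, reading off the diagram column by column. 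The routine verifications — that the sorted $\rec{d}$-translation lands in $\var d$, that $\cocode{d}(\mirror{d})$ and $\cocode{d}(\sym{d})$ admit the claimed defining sentences, and that enlarging prefix/arity is monotone — I would state without grinding through.

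\textbf{Main obstacle.} The real content is \emph{not} in this proof but is pushed into the two lemmas and into Proposition~\ref{thm:eso(1)=eso(1,1) on pixel}; within the proof itself the delicate point is the \emph{lower-bound direction} of Lemma~\ref{lemmaSYM}, i.e.\ showing $\cocode{d}(\sym{d})\notin\esorel{}(\fa^{d},\arity d)$. A sentence with only $d$ universal first-order variables and arity-$d$ second-order relations is, by the normalization/sorting machinery of Section~\ref{sec:2 carac nlin}, essentially a local/recognizable-style condition on the coordinate grid, and one must argue by a Hanf-type locality or Ehrenfeucht–Fraïssé/counting argument that such a sentence cannot simultaneously enforce invariance under all coordinate permutations \emph{and} all reflections $a_i\mapsto n+1-a_i$; the symmetry group is too rich to be captured with $d$ variables, whereas a $(d+1)$st variable provides exactly the extra "pointer" needed to compare a cell with its image under a transposition composed with a reflection. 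Isolating the precise combinatorial invariant that $d$ variables preserve but $\sym{d}$ violates is the crux, and it is the part the paper explicitly defers (proofs of hierarchy theorems are skipped in the extended abstract).
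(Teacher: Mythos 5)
Your assembly is correct and is exactly the route the paper intends: the equalities come from Theorem~\ref{thm:nlin ssi eso(var d+1) ssi eso(fa d+1,ar d+1)} and the coordinate-structure normalization $\esorel{}(\var d)\subseteq\esorel{}(\fa^{d},\arity d)$, the strictness $\rec{d}\subsetneq\esorel{}(\var d)$ is the earlier Proposition via $\mirror{d}$ (Lemma~\ref{thm:mirror}), and the strictness of the middle arrow is exactly Lemma~\ref{lemmaSYM} via $\sym{d}$, the remaining inclusions being syntactic monotonicity. You also correctly identify that the genuinely hard content (the lower bounds for $\mirror{d}$ and $\sym{d}$) is deferred by the paper itself, so nothing is missing from your reduction.
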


The above strict inclusion $\esorel{}(\var d)\subsetneq\esorel{}(\fa^{d+1},\arity d)$ trivially yields the following result. 

\begin{corollary}\label{corolstrictinclusion}
For each integer $d\ge 2$ and for $d$-languages represented by \emph{coordinate} structures, we have the \emph{strict inclusion}: $\esorel{}(\var d) \subsetneq \esorel{}(\arity d).$
\end{corollary}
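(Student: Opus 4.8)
\section*{Proof proposal for Corollary~\ref{corolstrictinclusion}}

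The plan is to obtain the corollary as a purely formal consequence of the normalization equality $\esorel{}(\var d)=\esorel{}(\fa^d,\arity d)$ over coordinate structures together with the separation $\esorel{}(\fa^d,\arity d)\subsetneq\esorel{}(\fa^{d+1},\arity d)$ supplied by the hierarchy theorem, using the single language $\sym{d}$ as a witness.

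First I would record the (easy) inclusion $\esorel{}(\var d)\subseteq\esorel{}(\arity d)$ on coordinate structures. By the normalization proved above for $d$-variable formulas over coordinate structures, namely $\esorel{}(\var d)=\esorel{}(\fa^d,\arity d)$, every $\esorel{}(\var d)$-sentence is equivalent, over coordinate structures, to one whose existentially quantified relation symbols all have arity at most $d$; and $\esorel{}(\fa^d,\arity d)\subseteq\bigcup_{k>0}\esorel{}(\fa^k,\arity d)=\esorel{}(\arity d)$ holds by definition. Hence $\esorel{}(\var d)\subseteq\esorel{}(\arity d)$.

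Second, for the strictness, I would invoke Lemma~\ref{lemmaSYM}, which gives $\cocode{d}(\sym{d})\in\esorel{}(\fa^{d+1},\arity d)\setminus\esorel{}(\fa^{d},\arity d)$ for every $d\ge 2$. The membership half yields $\cocode{d}(\sym{d})\in\esorel{}(\arity d)$, since $\esorel{}(\fa^{d+1},\arity d)\subseteq\esorel{}(\arity d)$ by definition of $\esorel{}(\arity d)$. The non-membership half, combined with the equality $\esorel{}(\fa^{d},\arity d)=\esorel{}(\var d)$ over coordinate structures, yields $\cocode{d}(\sym{d})\notin\esorel{}(\var d)$. Therefore $\sym{d}$ witnesses the strict inclusion $\esorel{}(\var d)\subsetneq\esorel{}(\arity d)$, which is exactly the claim. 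In short, the corollary is the chain $\esorel{}(\var d)=\esorel{}(\fa^d,\arity d)\subsetneq\esorel{}(\fa^{d+1},\arity d)\subseteq\esorel{}(\arity d)$.

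There is no genuinely new obstacle at this level: all the difficulty is offloaded onto Lemma~\ref{lemmaSYM}, and specifically onto its lower-bound direction $\cocode{d}(\sym{d})\notin\esorel{}(\fa^{d},\arity d)$. That is the hard part, and I expect it to rely on an Ehrenfeucht--Fra\"iss\'e / pebble-game argument showing that $d$ first-order variables together with arity-$d$ second-order relations cannot keep track of the full group of coordinate permutations and axis symmetries built into $\sym{d}$, whereas one extra universally quantified variable (arity still $d$) suffices. Granting that lemma and the normalization results proved earlier, the corollary follows in one line as indicated above.
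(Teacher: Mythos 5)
Your proof is correct and follows exactly the paper's route: the paper derives this corollary "trivially" from the strict inclusion $\esorel{}(\var d)=\esorel{}(\fa^{d},\arity d)\subsetneq\esorel{}(\fa^{d+1},\arity d)$ of Theorem~\ref{thm:hierarchy}, whose witness is precisely $\sym{d}$ via Lemma~\ref{lemmaSYM}, combined with the definitional inclusion $\esorel{}(\fa^{d+1},\arity d)\subseteq\esorel{}(\arity d)$. You correctly identify that all the real work lives in the lower-bound half of Lemma~\ref{lemmaSYM}, which the paper also treats as the external ingredient.
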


\begin{remark}
Corollary~\ref{corolstrictinclusion} surprisingly contrasts with the equality $\esorel{}(\var 1)=\esorel{}(\arity 1)$ for $d$-languages represented by \emph{pixels} structures, see  Theorem~\ref{thm:rec ssi eso(fa 1) ssi eso(fa 1,ar 1)} and Corollary~\ref{thm:rec ssi eso(fa 1) ssi eso(var 1)}.
\end{remark}

\subsection*{Other hierarchies}
It is also natural to address the question whether the following "hierarchies" about second-order logic (SO) over picture languages are strict or not:
\begin{enumerate}
\item Is there a strict hierarchy of $\so$ or $\esorel{}$ according to the \emph{second-order quantifier alternation}?
\item Is there a strict hierarchy of the classes $\esorel{}(\arity d)$ and $\esorel{}(\fa^k,\arity d)$ according to the \emph{number of $\esorel{}$ relation symbols}?
\end{enumerate}
It is well-known that the answer to question 1 is negative for $\mso$ on word languages and tree languages: on these classes of languages, $\mso=\mathrm{EMSO}$ holds. That is, the hierarchy collapses at it first level.

At the opposite, Matz and Thomas~\cite{MatzT97} have answered question 1 positively for $\mso$ over $2$-dimensional picture languages: the quantifier alternation is strict for MSO on $2$-picture languages (in pixel representation). We strongly conjecture that their result still holds with a similar proof for any dimension $d>2$ and the pixel representation. We also conjecture that similar results hold in the coordinate representation: typically, we think that for $d$-languages with $d\ge 2$, the quantifier alternation of $\so(\arity d)$ gives a strict hierarchy. Unfortunately, we have no idea to prove it, even in the "minimal case", that is $\esorel{}(\arity 2)\subsetneq \mathrm{SO}(\arity 2)$, for $2$-languages.

In contrast, the answer to question 2 is totally known and uniform for picture languages of any dimension. In all cases, the hierarchy collapses at its first level. More precisely, Thomas~\cite{Thomas82} has established that every EMSO sentence over words is equivalent to a sentence whose monadic quantifier prefix consists of a single existential quantifier. Matz~\cite{Matz98} has proved the same result over $2$-pictures in the pixel representation. The proof of Matz can be extended (with slight adaptations) to any dimension $d$, for both pixel and coordinate representations. In other words, in both representations, all the logical classes -- essentially $\esorel{}(\arity d)$ and $\esorel{}(\fa^k,\arity d)$ --  
we have studied over picture languages of any dimension are not modified by the requirement there should be only \emph{one} ESO relation symbol.

\section*{Some remarks about the dimensions of pictures}

In this paper, for sake of simplicity and uniformity, we have chosen to restrict the presentation of our results to "square" pictures, \ie pictures of prototype $p:[n]^d\to\Sigma$. This may appear as a too strong requirement. In this section, we explain how our results about logical classes and their relationships with complexity classes $\rec{d}$ and $\nlinca^{d}$  
can be extended to the "most general" picture languages, i.e. to sets of $d$-pictures of prototype 
 $p:[n_1]\times\ldots\times[n_d]\to\Sigma$. "Most general" means as much general as they make sense in the logical or complexity theoretical framework involved.
 
 \subsection*{$\rec{d}$ and logical characterizations in pixel encodings} 
 All our results about logical characterizations of picture languages in pixel encodings and the class $\rec{d}$, \ie  results of Section~\ref{sec:eso1 to rec}, hold for pictures of general prototype $p:[n_1]\times\ldots\times[n_d]\to\Sigma$, without any restriction, \ie for all $n_i\ge 1$, $i\in[d]$. Moreover, our proofs also hold without change, except of course, the references to integer $n$ which should be replaced by integer $n_i$ according to the involved dimension $i\in[d]$.
 
 \subsection*{$\nlinca^{d}$ and logical characterizations in coordinate encodings} 
 The definition of linear time complexity of $d$-automata is not clear for input pictures whose domain has the general form $[n_1]\times\ldots\times[n_d]$ without restriction. However, linear time, \ie time $O(n)$, makes sense when the $n_i$ are of the same order $\Theta(n)$, \ie the pictures are "well-balanced". This justifies the following definition.
 
 \begin{definition} A $d$-picture language $L$ is \emph{well-balanced} if there is some constant positive integer $c$ such that the domain $[n_1]\times\ldots\times[n_d]$ of each picture $p\in L$ fulfills the condition: 
 for $i\in[d]$, $n\le c\,n_i$, where $n=\max(n_1,\ldots,n_d)$ is called the \emph{length} of $p$; we say that the picture $p$ (resp. the picture language $L$) is $c$-\emph{balanced}.
  \end{definition}
  
It is straightforward to adapt our notion of linear time to well-balanced picture languages.

\begin{definition} A $c$-balanced $d$-picture language $L$ belongs to $\nlinca^{d}$ if there exist a $d$-automaton $\A$ and a linear function $T(n)=c_1\,n+c_2$ such that $L$ is the set of $c$-balanced $d$-pictures $p$ accepted by $\A$ in time $T(n)$, where $n$ is the length of $p$.
\end{definition}
  
\begin{remark} 
Those notions are justified by the following points:
\begin{itemize}
\item The "perimeter" of a $c$-balanced $d$-picture $p$ of length $n$ is $\Theta(n)$ and its size (area, volume, etc., according to its dimension $d$) is $|p|=\Theta(n^d)$.
\item So, linear time means time \emph{linear in the perimeter} of the $d$-picture $p$ or, equivalently, in $|p|^{1/d}$.
\end{itemize}
\end {remark}

We can easily extend all our results about square $d$-languages to well-balanced $d$-languages. In order to reduce the well-balanced case to the square case we need some new definitions.
\begin{definition}
To each $d$-picture $p$ of length $n$, one associates its \emph{squared} $d$-picture denoted $p^=$ of domain $[n]^d$ obtained by putting the new special symbol $\square$ in each additional cell. Formally,
$$
\apic^=(a)=\left\{\begin{array}{l} \apic(a)\IF a\in dom(\apic) ; \\ \square \OTHERWISE.  \end{array}\right.
$$ 
To any $d$-picture language $L$ one associates its \emph{squared} $d$-picture language 
$$L^= =\{p^=: p\in L\}.$$
\end{definition}

\begin{remark} 
Let c be a constant. If $p$ is a $c$-balanced $d$-picture, then the size of its squared picture $p^=$ is 
$|p^=| =\Theta(|p|)$ 
\end {remark}

The following result is quite easy to prove.
\begin{lemma}~\label{lemma squared L}
Let $L$ be a $c$-balanced $d$-picture language. Then $L\in\nlinca^{d}\Ssi L^=\in\nlinca^{d}.$
\end{lemma}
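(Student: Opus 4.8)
\textbf{Proof plan for Lemma~\ref{lemma squared L}.}
The plan is to establish both implications of the equivalence by simulating a $d$-automaton on one picture language with a $d$-automaton on the other, both running in linear time with respect to the relevant lengths. The key observation is that since $L$ is $c$-balanced, a picture $p\in L$ of length $n$ and its squared picture $p^=\in L^=$ (of domain $[n]^d$) differ only in that the latter has at most $n^d - \prod_i n_i$ extra cells, all labelled $\square$, and crucially their lengths both equal $n$ (up to the fixed constant $c$); hence a time bound $O(n)$ means the same thing on both sides.

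First I would prove the $\Rimp$ direction ($L^=\in\nlinca^d\Imp L\in\nlinca^d$). Given a $d$-automaton $\A^=$ recognizing $L^=$ in time $T(n)=c_1n+c_2$, I construct a $d$-automaton $\A$ over the original input alphabet $\Sigma$ that, on input $p$ of domain $[n_1]\times\cdots\times[n_d]$, first spends a number of steps linear in $n$ to "reconstruct" the squared picture $p^=$: using the border marker $\sharp$ and the cyclic successor structure, each cell can detect whether it lies inside $dom(p)$ or in the padding region, and the padding cells set themselves to the state $\square$. Since the domain of $\A$ is exactly the domain $[n_1]\times\cdots\times[n_d]$ of $p$, there is a subtlety here: $\A$ does \emph{not} have the extra cells available. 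The correct route is instead the reverse — so in fact I would prove $\Rimp$ differently, by noting that $L$ can be recovered from $L^=$ only if we allow $\A$ to work on the smaller domain. To avoid this issue, I would treat $\Rimp$ as follows: the map $p\mapsto p^=$ is injective and its image $L^=$ is recognized in linear time; a $d$-automaton for $L$ simply runs, on its own (smaller) domain, the simulation of $\A^=$ where every reference to a cell that $p^=$ would have but $p$ does not is replaced by the virtual constant state $\square$ "seen through the border" — i.e., cells adjacent to the $\sharp$-border behave as if their out-of-domain neighbours held $\square$. One checks this faithfully simulates $\A^=$ on $p^=$ in time $O(n)$, because the $\square$-region of $p^=$ is determined and inert (its evolution under $\A^=$ can be pre-folded into $\delta$), so $L\in\nlinca^d$.

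For the $\Imp$ direction ($L\in\nlinca^d\Imp L^=\in\nlinca^d$), given a $d$-automaton $\A$ recognizing $L$ in time $T(n)=c_1n+c_2$ on pictures of length $n$, I construct $\A^=$ on domain $[n]^d$: the cells carrying a genuine $\Sigma$-symbol simulate $\A$ directly, the $\square$-cells stay in a dedicated "dead" state, and a cell on the simulated border (a $\Sigma$-cell all of whose relevant successor-neighbours are $\square$ or $\sharp$) behaves as $\A$ would at the border of $p$. Acceptance is read at cell $1^d$ exactly as for $\A$. Since the length of $p^=$ is the same $n$ (the domain is $[n]^d$ with $n=\max_i n_i\le c\,n_i$), the running time $O(n)$ is linear in the length of $p^=$ as well, giving $L^=\in\nlinca^d$. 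The only point requiring care throughout — and the main (minor) obstacle — is the bookkeeping at the simulated borders: one must verify that the transition function $\delta$ of $\A$, which already handles $\sharp$-neighbours via $\delta:\Gamma\times(\Gamma^\sharp)^d\to\P(\Gamma)$, can be recompiled so that a $\square$-labelled or out-of-range neighbour is treated exactly like a $\sharp$-labelled one, and that this recompilation is still a finite (hence legitimate) transition function. This is routine once the definitions are unfolded, so I would not belabour it.
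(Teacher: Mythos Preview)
Your forward direction ($L\in\nlinca^d\Rightarrow L^=\in\nlinca^d$) is fine and matches the paper's view that this is the easy half: have the $\square$-cells stay in a dead state, treat a $\square$-neighbour as a $\sharp$-neighbour, and read acceptance at $1^d$ (plus a local check that the $\square$-cells really form the complement of a $c$-balanced box).

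The backward direction, however, has a genuine gap. You assert that the $\square$-region of $p^=$ is ``determined and inert'' and that its evolution ``can be pre-folded into $\delta$''. It is determined, but it is \emph{not} inert, and it cannot be folded into a finite transition table. The $\square$-cells do change state under $\A^=$; although their evolution is independent of the $\Sigma$-part (all neighbours of a $\square$-cell are again $\square$-cells or $\sharp$, since information flows only toward smaller coordinates), the state at time $t$ of a $\square$-cell sitting just outside $dom(p)$ depends on how far that cell is from the actual borders of $[n]^d$ in each direction. Those distances range up to $n-n_i$, which for a $c$-balanced picture is $\Theta(n)$. Hence the sequence of states that a border cell of $p$ would ``see'' across the simulated boundary is neither constant nor encodable by finitely many cases; replacing it by a fixed $\square$ simulates the wrong automaton.

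The paper's argument for this (the nontrivial) half is different and is exactly the idea you discarded as ``$\A$ does not have the extra cells available'': it \emph{makes} the extra cells available by compression. Since $L$ is $c$-balanced, every side length satisfies $n_i\ge n/c$, so $[n]^d$ can be folded $c$ times along each axis into $[n_1]\times\cdots\times[n_d]$. Each cell of $p$ then simulates a block of at most $c^d$ cells of $p^=$, using the state set $\Gamma^{c^d}$. In this way the \emph{entire} computation of $\A^=$ on $p^=$---including the genuine, geometry-dependent evolution of the $\square$-region---is carried out faithfully on the smaller grid, in time $O(n)$.
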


\begin{remark} 
Notice that the nontrivial (right-to-left) implication of the Lemma~\ref{lemma squared L} means that the $d$-automaton $\A_1$ that recognizes $L^=$ in linear time can be simulated by some $d$-automaton $\A_2$ that recognizes $L$ in the same time (up to a constant factor) but with less space: $\A_2$ only uses the cells of $p$ instead of the cells of the squared picture $p^=$. This is possible with the following trick: the fact that the $i^{th}$ dimension $n$ of $p^=$ is replaced by $n_i\ge c^{-1}n$ allows to "fold" $c$ times the picture $p^=$ along the $i^{th}$ dimension $n_i$ of $p$. All in all, each cell of $p$ simulates (at most) $c^d$ cells of $p^=$. This is performed by taking for the set of states of $\A_2$ the set $\Gamma^{c^d}$ where $\Gamma$ is the set of states of $\A_1$.
\end{remark}

Now, let us compare any well-balanced $d$-picture language $L$ and its squared language $L^=$ from a logical point of view. The domain of the coordinate representation of a $d$-picture $p:[n_1]\times\ldots\times[n_d]$ is naturally $[n]$ where $n$ is the length of $p$, \ie $n=\max(n_1,\ldots,n_d)$. So, we define the coordinate representation of $p$ as $$\cocode{d}(p)=\cocode{d}(p^=).$$ 
Hence, as a trivial consequence of the definitions,
$$\cocode{d}(L)=\cocode{d}(L^=).$$

The previous definitions and remarks and Lemma~\ref{lemma squared L} justify that the results of Sections~\ref{sec:2 carac nlin} and~\ref{sec:hierarchy} hold without change in the extended case of well-balanced $d$-picture languages.


\section*{Conclusion}
\noindent
Notice that Theorem~\ref{thm:nlin ssi eso(var d+1) ssi eso(fa d+1,ar d+1)} that characterizes the \emph{linear} time complexity class of nondeterministic \emph{cellular automata} is very similar to the following result about time complexity $O(n^d)$, for any $d\ge1$, of nondeterministic \emph{RAMs}, by two of the present authors~\cite{GrOl04}: 
$$\ntimeram(n^d)=\esof{}(\var d)=\esof{}(\fa^d,\arity d).$$ 
The main difference is that this latter result involves the existential second-order logic \emph{with functions} ($\esof{}$) \emph{instead of} or \emph{in addition} to \emph{relations} and holds in all kinds of structures without restriction: pictures, structures of any arity and any type, \etc. It is also interesting and maybe surprising to notice that, in those results, the \emph{time degree} $d$ of a RAM computation plays the same role as the \emph{dimension} $d+1$ of the time-space diagram of a linear time bounded computation for a $d$-dimensional cellular automaton.

Both results attest of the robustness of the time complexity classes $\ntimeram(n^d)$ and $\nlinca^{d}$. They stress the significance of the RAMs as a sequential model, and of the cellular automata as a parallel model. 

\medskip

Finally, if we continue the comparison, we have, for coordinate representation of $d$-languages and for $d\geq 2$,  the (strict) inclusions: 
$$\rec{d}\subsetneq\eso{}(\fa^d,\arity d)\subsetneq\eso{}(\fa^{d+1},\arity d)\subseteq\nlinca^{d}$$ 
for cellular automata and $\eso{}$, which strongly contrast with the equalities: 
$$\ntimeram(n^d)=\esof{}(\fa^1,\arity 1)=\esof{}(\fa^2,\arity 1)$$
for RAMs and $\esof{}$~\cite{GrOl04,DurandGO04}.

\bibliographystyle{alpha}
\bibliography{./0}

\newcommand{\etalchar}[1]{$^{#1}$}
\begin{thebibliography}{GKL{\etalchar{+}}07}

\bibitem[B{\"u}c60]{Buchi60}
J.~R. B{\"u}chi.
\newblock Weak second-order arithmetic and finite automata.
\newblock {\em Z. Math. Logik Grundlagen Math.}, 6:66--92, 1960.

\bibitem[CE12]{CourcelleE12}
B.~Courcelle and J.~Engelfriet.
\newblock {\em Graph structure and monadic second-order logic, a language
  theoretic approach}.
\newblock Cambridge University Press, to appear in 2012.

\bibitem[DG06]{DurandG06}
A.~Durand and E.~Grandjean.
\newblock First-order queries on structures of bounded degree are computable
  with constant delay.
\newblock {\em ACM Transactions on Computational Logic}, V:1--18, 2006.

\bibitem[DGO04]{DurandGO04}
A.~Durand, E.~Grandjean, and F.~Olive.
\newblock New results on arity vs. number of variables.
\newblock Research Report 20-2004, LIF - Universit{\'e} de Provence, 2004.

\bibitem[dR87]{deRougemont87}
Michel de~Rougemont.
\newblock Second-order and inductive definability on finite structures.
\newblock {\em Zeitschrift {f\"{u}r} Mathematische Logik und Grundlagen der
  Mathematik}, 33:47--63, 1987.

\bibitem[EF95]{EbbinghausF95}
H.-D. Ebbinghaus and J.~Flum.
\newblock {\em Finite Model Theory}.
\newblock Springer-Verlag, 1995.

\bibitem[Fag74]{Fagin74}
R.~Fagin.
\newblock Generalized first-order spectra and polynomial-time recognizable
  sets.
\newblock In R.~M. Karp, editor, {\em Complexity of Computation, SIAM-AMS
  Proceedings}, pages 43--73, 1974.

\bibitem[Gai82]{Gaifman82}
H.~Gaifman.
\newblock On local and nonlocal properties.
\newblock In J.~Stern, editor, {\em Logic Colloquium '81}, pages 105--135.
  North Holland, 1982.

\bibitem[GKL{\etalchar{+}}07]{GradelKLMSVVW07}
Erich Gr{\"a}del, Ph.~G. Kolaitis, L.~Libkin, M.~Marx, J.~Spencer, M.~Y. Vardi,
  Y.~Venema, and S.Weinstein.
\newblock {\em Finite Model Theory and Its Applications}.
\newblock Texts in Theoretical Computer Science. Springer, 2007.

\bibitem[GO04]{GrOl04}
E.~Grandjean and F.~Olive.
\newblock Graph properties checkable in linear time in the number of vertices.
\newblock {\em Journal of Computer and System Sciences}, 68:546--597, 2004.

\bibitem[GRST96]{GRST96}
D.~Giammarresi, A.~Restivo, S.~Seibert, and W.~Thomas.
\newblock Monadic second-order logic over rectangular pictures and
  recognizability by tiling systems.
\newblock {\em Information and Computation}, 125(1):32 -- 45, 1996.

\bibitem[Han65]{Hanf65}
W.~Hanf.
\newblock Model-theoretic methods in the study of elementary logic.
\newblock In J.~Addison, L.~Henkin, and A.~Tarski, editors, {\em The Theory of
  Models}, pages 132--145. North Holland, 1965.

\bibitem[Imm99]{Immerman99}
N.~Immerman.
\newblock {\em Descriptive Complexity}.
\newblock Graduate Texts in Computer Science. Springer, 1999.

\bibitem[Lib04]{Libkin04}
Leonid Libkin.
\newblock {\em Elements of Finite Model Theory}.
\newblock Springer, 2004.

\bibitem[Mat98]{Matz98}
Oliver Matz.
\newblock One quantifier will do in existential monadic second-order logic over
  pictures.
\newblock In {\em MFCS}, pages 751--759, 1998.

\bibitem[MT97]{MatzT97}
Oliver Matz and Wolfgang Thomas.
\newblock The monadic quantifier alternation hierarchy over graphs is infinite.
\newblock In {\em LICS}, pages 236--244, 1997.

\bibitem[Tho82]{Thomas82}
Wolfgang Thomas.
\newblock Classifying regular events in symbolic logic.
\newblock {\em J. Comput. Syst. Sci.}, 25(3):360--376, 1982.

\bibitem[Tur84]{Turan84}
G.~Tur{\'a}n.
\newblock On the definability of properties of finite graphs.
\newblock {\em Discrete Mathematics}, 49:291--302, 1984.

\bibitem[TW68]{ThatcherW68}
James~W. Thatcher and Jesse~B. Wright.
\newblock Generalized finite automata theory with an application to a decision
  problem of {Second-Order} logic.
\newblock {\em Mathematical Systems Theory}, 2(1):57--81, March 1968.

\end{thebibliography}

\newpage

\renewcommand{\margenote}[1]{}
\appendix
\section{Normalization of "guessed" relations}\label{sec:sort R}

\providecommand{\infe}{\prec}
\providecommand{\supe}{\succ}
\providecommand{\egal}{\thickapprox}
\providecommand{\flip}[1]{\text{flip}(#1)}

This appendix is dedicated to the proof of the normalization of $\eso{}(\fa^d,\arity d)$ on coordinate encodings of $(d-1)$-pictures, for $d\geq 2$. 
Our purpose is to rewrite $\eso{}(\fa^d,\arity d)$-formulas into equivalent formulas that fulfill the "sorted" or "local" property mentioned in the proof of the "sufficient condition" of Proposition~\ref{thm:nlin to eso - phase1}.

Before formalizing our notion of "sorted" formulas, let us detail what is its intended meaning. We want to deal with pixels of time-space diagram of the computation that are adjacents, that is, that are both connected and differ (by one) by at most one dimension. Two such pixels are represented by $d$-tuples of the form $\tu x$ and $\tu x^{(i)}$, that is: 
\begin{itemize}
\item  their components are in the same order (elsewhere they could be disconnected); 
\item  there is at most one occurrence of $\suc$ (elsewhere, they would differ of more than one dimension). 
\end{itemize}
These requirements (and a little more) are formalized in the following definition.

\begin{definition}\label{def:sorted}
Let $k,d$ be two integers such that $d\ge k\ge 1$. A sentence over coordinate structures for $k$-pictures is in \textdef{$\esorel{}(\fa^{d},\arity  d,\sorted)$} if it is of the form 
$\ex \mathbf{R}  \fa \mathbf{x}  \psi(\mathbf{x})$
where
\begin{itemize}
\item $\mathbf{R}$ is a list of relation symbols of arity $d$;
\item $\psi$ is a quantifier-free formula whose list of first-order variables is $\mathbf{x}=(x_1,\dots,x_d)$;
\item each atom of $\psi$ is of one of the following forms:
	\begin{enumerate}[$(i)$]
	\item $Q_s(x_1,\dots, x_{k})$, for $s\in\Sigma$,
	\item $R(\mathbf{x})$ or $R(\mathbf{x}^{(i)})$ where $R\in \mathbf{R}$, $i\in[ d]$, 
	and $\mathbf{x}^{(i)}$ is the tuple $\mathbf{x}$ where $x_i$ is replaced by $succ(x_i)$,
	\item $min(x_i)$ or $max(x_i)$, for $i\in[d]$.
	\end{enumerate}
\end{itemize}
\end{definition}
We prove the normalization $\eso{}(\fa^d,\arity d)=\esorel{}(\fa^{d},\arity  d,\sorted)$ for $(d-1)$-pictures (\ie, for $k=d-1$). 
In the present section, we deal with Condition~$(ii)$ of the above definition (see Proposition~\ref{thm:normal-sort-R}). 
In Subsection~\ref{sec:arithmetic}, we eliminate equalities and inequalities. 
At this point, we get a normalization of $\eso{}(\fa^d,\arity d)$ into the so-called "half-sorted logic", denoted by $\esorel{}(\fa^{d},\arity  d,\semisorted)$. It remains to manage the input relation symbols; this is done in Section~\ref{sec:sort Q}, where we tackle Condition~$(i)$.


To lighten the presentation of the forthcomming results, we first introduce some notations about tuples and permutations. 

\begin{definition}\label{def:tuples et permutations}\titledef{Permutation and tuples}
Let $n,d>0$ and $\tu x\in\dom n^d$. 
\begin{itemize}
\item  We denote by $[\tu x]_i$ the $i^{\text{th}}$ component of $\tu x$. \Eg $(5,7,2)_2=7$.
\item  We say that $\tu x$ is \textdef{non decreasing}, and we write \textdef{$\tu x\up$}, when $[\tu x]_1\leq\dots\leq [\tu x]_d$. 
\item  \textdef{$\perm(d)$} stands for the set of permutations of $\{1,\dots,d\}$. Given $\al_1,\dots,\al_d\in\{1,\dots,d\}$, we denote by $\al_1\dots\al_d$ the permutation $\al\in\perm(d)$ that maps each $i$ on $\al_i$. Conversely, for $\al\in\perm(d)$ we set $\al_i\egaldef \al(i)$. By \textdef{$\Transpo(d)$} we denote  the set of transpositions over $\{1,\dots,d\}$.
\item  If $\al\in\perm(d)$ and $\tu x=(x_1,\dots,x_d)$, we denote by \textdef{$\tu x_{\al}$} the $d$-tuple $(x_{\al_1},\dots,x_{\al_d})$. It is less ambiguous to define $\tu x_{\al}$ by the assertion: 
\begin{center}for any $i\in\{1,\dots,d\}$, $[\tu x_{\al}]_i=[\tu x]_{\al(i)}$.\end{center} 
Thus, if $\beta$ also belongs to $\perm(d)$, we get $[(\tu x_{\al})_{\beta}]_i=[\tu x_{\al}]_{\beta(i)}=[\tu x]_{\al\beta(i)}.$ 
Whence the identity: $(\tu x_{\al})_{\beta}=\tu x_{\al\beta}$. In particular, $(\tu x_{\al})_{\al^{-1}}=\tu x$.
\item  For $\al\in\perm(d)$ and $n>0$, we set $\textdef{$\sby{\al}$}=\{\tu x\in\dom n^d \st \tu x_{\al}\up\}$. In particular, denoting by $\id$ the identity on $\{1,\dots,d\}$, we get $\tu x\in\sby{\id}$ iff $\tu x\up$. Therefore, $\tu x\in\sby{\al}$ iff $\tu x_{\al}\in\sby{\id}$. Clearly, $\dom n^d=\bigcup_{\al\in\perm(d)}\sby{\al}$. 
\item  For any $i$ in $\{1,\dots,d\}$, $\tu x^{(i)}$ denotes the tuple obtained from $\tu x$ by replacing its $i^{\text{th}}$ component by its own successor. That is, if $\tu x=(x_1,\dots,x_d)$ then 
\[\tu x^{(i)}=(x_1,\dots,x_{i-1},\suc(x_{i}),x_{i+1},\dots,x_d).\] 
As previously, the arrangement of $\tu x^{(i)}$ according to some permutation ${\al}$ is denoted by $(\tu x^{(i)})_{\al}$. 
\end{itemize}
\end{definition}

\begin{example} 
Consider $\tu x=(5,3,7,2)$ in $\dom 9^4$ and $\al=4213$, $\beta = 1432$ in $\perm(4)$. Then $\tu x_{\al}=(2,3,5,7)$ is non-decreasing while $\tu x_{\beta}=(5,2,7,3)$ is not. Besides, $\tu x^{(3)}=(5,3,8,2)$ while $(\tu x_{\al})^{(3)}=(2,3,6,7)$ and $(\tu x^{(3)})_{\al}=(2,3,5,8)=(\tu x_{\al})^{(4)}$.
\end{example}

\medskip

With these notations, the request described at the beginning of the section can be rephrased as follows: we want to normalize $\esorel{}(\fa^{d},\arity d)$-formulas in such a way that each atomic subformula $R(t_1(\tu x),\dots,t_p(\tu x))$ built with a guessed relation symbol $R$ has either the form $R(\tu x)$ or the form $R(\tu x^{(i)})$.


\begin{fact}\label{thm:sans repetition}
On $\coordstruc{d-1}$,%
\footnote{We denote by $\coordstruc{d-1}$ the sets of coordinate encodings of $(d-1)$-pictures.} 
any formula $\Phi=\ex\tu R\fa\tu x\phi(\tu x,\tu R,\sg)\in\esorel{}(\fa^{d},\arity d)$ of signature $\sg$ can be written in such a way that: 
\begin{enumerate}[$(a)$]
\item\label{item:var_i cap var_j vide}  In each atomic subformula $R(t_1,\dots,t_p)$ of $\phi$, $\Var(t_i)\cap\Var(t_j)=\emptyset$ for every  $1\leq i<j\leq p$.
\item\label{item:arite exactement d}  Each R in $\tu R$ has arity $d$ exactly. 
\end{enumerate}
\end{fact}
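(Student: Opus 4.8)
The plan is to handle the two normalization requirements separately, dealing with $(\ref{item:var_i cap var_j vide})$ first and then with $(\ref{item:arite exactement d})$. Throughout, I exploit that we work over $\coordstruc{d-1}$, so the domain has the linear order $<$, the cyclic successor $\suc$, and the extremal predicates $\min,\max$ available, and that the first-order part is universally quantified over exactly $d$ variables $\tu x=(x_1,\dots,x_d)$.

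For part $(\ref{item:var_i cap var_j vide})$, the idea is a standard ``fresh-witness'' trick, resembling the Skolemization step in the proof of Proposition~\ref{thm:eso(1)=eso(1,1) on pixel}: whenever an atom $R(t_1,\dots,t_p)$ has two argument terms $t_i,t_j$ sharing a variable, I introduce a new guessed relation symbol whose intended interpretation records the value of $R$ at the relevant ``collapsed'' tuples. Concretely, since each $t_i$ is a term built from the $x_\ell$'s (by the hypothesis $\arity d$, these are essentially variables or successors of variables, after the obvious preprocessing $\suc^k(x_\ell)$-elimination using $\min_i,\max_i$ as in the proof of Proposition~\ref{thm:eso(1)=eso(1,1) on pixel}), an atom like $R(x_1,x_2,x_1,\dots)$ depends on strictly fewer than $d$ distinct variables; I replace it by a new symbol $R'(x_1,x_2,\dots)$ together with a defining formula $\fa\tu x\,\big(R'(\dots)\ssi R(\dots)\big)$ which, since both sides have $\le d$ variables, can be put in prenex universal form with $d$ variables. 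Iterating over all offending atoms removes all variable repetitions; the resulting guessed relations may now have arity strictly less than $d$, which is exactly why part $(\ref{item:arite exactement d})$ must come afterwards.

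For part $(\ref{item:arite exactement d})$, I pad every guessed relation symbol up to arity $d$ by adding dummy coordinates. Given an $R$ of arity $p<d$, introduce $\widehat R$ of arity $d$ with intended meaning $\widehat R(y_1,\dots,y_d)\ssi R(y_1,\dots,y_p)$, i.e. $\widehat R$ ignores its last $d-p$ arguments; force this by the universal sentence $\fa\tu x\,\big(\widehat R(x_1,\dots,x_d)\ssi \widehat R(x_1,\dots,x_p,x_1,\dots,x_1)\big)$ or, more simply, by conjoining $\fa\tu x\,(\widehat R(x_1,\dots,x_d)\imp \widehat R(x_1,\dots,x_p,z,\dots))$-type consistency clauses, all expressible with $\le d$ variables. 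Every occurrence $R(t_1,\dots,t_p)$ in $\phi$ is then rewritten as $\widehat R(t_1,\dots,t_p,t_1,\dots,t_1)$ --- but this reintroduces a variable repetition, so in practice I do parts $(\ref{item:var_i cap var_j vide})$ and $(\ref{item:arite exactement d})$ together: pad first to arity $d$ using genuinely fresh positions filled by a canonical constant-like term (say $\suc(\cdots\suc(x_1))$ or, better, by quantifying a fresh guessed nullary/monadic marker), then eliminate repetitions. The cleanest order is: first pad to arity $d$ by duplicating, then run the repetition-removal of part $(\ref{item:var_i cap var_j vide})$ on the padded formula; the repetition-removal as described never lowers arity because the new defining relations can themselves be taken of arity $d$ (adding dummy slots to them too).

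The main obstacle I anticipate is not the construction itself --- which is a routine succession of relabelings --- but ensuring at every stage that the auxiliary defining formulas really fit the syntactic template $\ex\tu R\,\fa\tu x\,\phi$ with \emph{exactly} $d$ variables and no extra quantifier alternation: each biconditional $R'(\cdot)\ssi R(\cdot)$ must be split into its two implications and prenexed, and one must check that renaming bound variables does not force a $(d{+}1)$-st variable. This is where the bound $d\ge k\ge 1$ and the fact that we are over $(d-1)$-pictures matter: the input predicates $Q_s$ have arity $d-1<d$, so they too fit, and no atom ever intrinsically needs more than $d$ variables. Verifying this bookkeeping carefully, and confirming that the final formula still defines the same class of coordinate structures (the defining clauses pin down the auxiliary relations uniquely given the originals, and conversely any model of the rewritten sentence yields a model of $\Phi$ by projection), completes the argument.
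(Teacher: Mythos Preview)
Your overall strategy---introduce auxiliary guessed relations for part~(\ref{item:var_i cap var_j vide}), pad to arity $d$ for part~(\ref{item:arite exactement d})---matches the paper's. But you miss the one small observation that makes both steps go through cleanly, and as a result your treatment of~(\ref{item:arite exactement d}) gets tangled.

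The observation is this: on coordinate structures every term has the form $\suc^k(x_\ell)$ and hence contains exactly one variable. So an atom $R(t_1,\dots,t_p)$ in which two argument terms share a variable, or in which $p<d$, necessarily involves \emph{strictly fewer than $d$} of the variables $x_1,\dots,x_d$. The ``spare'' variables are the resource you should use. For~(\ref{item:var_i cap var_j vide}), the defining clause for the new symbol $A$ is written with the $R$-atom $R(x_1,\dots,x_d)$ on \emph{all $d$ distinct variables}, with the offending identifications pushed into equality hypotheses; e.g.\ the atom $R(\suc^2 x_1,x_2,\suc\,x_1,\suc^3 x_2)$ (with $d=4$) is replaced by $A(x_2,\suc(x_1))$ together with
\[
\fa\tu x\,\big((x_1=\suc\,x_3\et x_4=\suc^3 x_2)\imp (A(x_2,x_3)\ssi R(x_1,x_2,x_3,x_4))\big),
\]
in which neither $A$ nor $R$ has a repeated variable. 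For~(\ref{item:arite exactement d}), once~(\ref{item:var_i cap var_j vide}) holds, an atom $R(t_1,\dots,t_p)$ with $p<d$ uses exactly $p$ distinct variables, so the remaining $d-p$ variables $x_{i_1},\dots,x_{i_{d-p}}$ can be appended as dummy arguments \emph{without} reintroducing a repetition; one then forces the padded relation to ignore those slots via
\[
\fa\tu x\,\Et_{p<i\le d}\big(R(x_1,\dots,x_d)\ssi R(x_1,\dots,\suc(x_i),\dots,x_d)\big).
\]
Your proposal to pad with $x_1,\dots,x_1$ or with a ``constant-like term'' and then re-run the repetition removal is unnecessary and, as you noticed yourself, circular: the fix is simply to pad with the unused variables. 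With that one change the order is the natural one---do~(\ref{item:var_i cap var_j vide}) first, then~(\ref{item:arite exactement d})---and no bookkeeping obstacle of the kind you anticipate arises.
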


\begin{proof}
The proof of \refitem{item:var_i cap var_j vide} is quite immediate. We illustrate it with an example: assume that $\Phi$ involves the subformula $R(\suc^2 x_1,x_2,\suc\,x_1,\suc^3x_2)$ for some $R\in\sg\cup\{\tu R\}$. Then clearly, $\Phi$ is equivalent, on picture-structures, to: \margenote{\`A v\'erifier}
\[\begin{array}[t]{c}
\ex\tu R\ex A\fa\tu x : \tilde{\phi}(\tu x,\tu R,A,\sg)\ \et 
(x_1=\suc\,x_3\et x_4=\suc^3x_2)\imp(A(x_2,x_3)\ssi R(x_1,x_2,x_3,x_4))
\end{array}\]
where $\tilde{\phi}$ is obtained from $\phi$ by substituting the formula $A(x_2,\suc(x_1))$ to each occurrence of the atom $R(\suc^2 x_1,x_2,\suc\,x_1,\suc^3x_2)$. 

\medskip

In order to prove~\refitem{item:arite exactement d}, assume for simplicity that $\tu R$ reduces to the single relational symbol $R$ of arity $p<d$. The idea is to add $d-p$ dummy arguments to $R$. Clearly, $\Phi$ is equivalent to the formula: 
\[\begin{array}{c}
\ex R:\ \tilde{\phi}\et 
\fa\tu x\Et_{p<i\leq d}R(x_1,\dots,x_i,\dots,x_d)\ssi R(x_1,\dots,\suc(x_i),\dots,x_d),
\end{array}\]
where $\tilde{\phi}$ is obtained from $\phi$ by replacing each atomic subformula $R(t_1,\dots,t_p)$ by $R(t_1,\dots,t_p,x_{i_1},\dots,x_{i_{d-p}})$. Here, $x_{i_1},\dots,x_{i_{d-p}}$ is the complete list of distinct variables among $\tu x$ that do not occur in $t_1,\dots,t_p$.
\end{proof}

\begin{remark}\label{rq:sans repetition}
The proof of Fact~\ref{thm:sans repetition} allows enhancing its statement: each atomic subformula of $\tilde{\phi}$ that involves an input symbol $Q_a$, for some $a\in\Sg$, has the form $Q_a(x_{\al_{1}},\dots,x_{\al_{d-1}})$ where $\al$ is an injection from $\{1,\dots,d-1\}$ into $\{1,\dots,d\}$.\margenote{\`A v\'erifier} 
\end{remark}

\begin{fact}\label{thm:un seul succ ou permutation}
On $\coordstruc{d-1}$, any formula $\Phi=\ex\tu R\fa\tu x\phi(\tu x,\tu R,\sg)\in\esorel{}(\fa^{d},\arity d)$ of signature $\sg$ can be written in such a way that each atomic subformula over $\tu R$ of $\phi$ has one of the two forms:
$R(\tu x^{(i)})$ or $R(\tu x_{\pi})$, where $\pi\in\perm(d)$.
\end{fact}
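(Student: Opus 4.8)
By Fact~\ref{thm:sans repetition}, I may assume that in each atomic subformula $R(t_1,\dots,t_d)$ over $\tu R$, the variable sets of the terms $t_1,\dots,t_d$ are pairwise disjoint and $R$ has arity exactly $d$; since there are $d$ terms, $d$ variables, and the terms share no variables, each $t_j$ is a term in exactly one variable, and each of $x_1,\dots,x_d$ occurs in exactly one $t_j$. So such an atom has the shape $R(\suc^{c_1}(x_{\pi(1)}),\dots,\suc^{c_d}(x_{\pi(d)}))$ for some permutation $\pi\in\perm(d)$ and exponents $c_1,\dots,c_d\ge 0$. The goal is to reduce the exponent vector $(c_1,\dots,c_d)$ either to the all-zero vector (giving $R(\tu x_\pi)$) or to a vector with a single $1$ and the rest $0$ (giving $R(\tu x_\pi^{(i)})$, i.e. $R((\tu x^{(i)})_\pi)$ up to renaming of which position carries the successor). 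First I would handle the permutation: the atom $R(\suc^{c_1}(x_{\pi(1)}),\dots,\suc^{c_d}(x_{\pi(d)}))$ can be rewritten, by a change of bound variables inside the universal quantifier (which is harmless, since $\fa\tu x$ ranges over all tuples), as $R'(\suc^{c_{\pi^{-1}(1)}}(x_1),\dots,\suc^{c_{\pi^{-1}(d)}}(x_d))$ where $R'$ is a fresh copy of $R$ permuted by $\pi$; but more cleanly, I keep track of $\pi$ and only attack the exponents, producing atoms of the form $R(\tu y_\pi)$ where $\tu y$ differs from $\tu x$ in the exponents only.

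**Lowering the exponents.** The core trick is the same "shift relation" device used throughout Section~\ref{sec:2 carac nlin}: to kill a successor in coordinate $j$, introduce a new $d$-ary $\eso{}$ relation $R'$ with intended meaning $R'(z_1,\dots,z_d)\ssi R(z_1,\dots,\suc(z_j),\dots,z_d)$, pinned down by the universal axiom $\fa\tu z\,(R'(\tu z)\ssi R(\tu z^{(j)}))$, which is a legal $\esorel{}(\fa^d,\arity d)$ conjunct. Then every occurrence of $R(\dots,\suc^{c_j}(x_{\pi(j)}),\dots)$ with $c_j\ge 1$ can be replaced by an occurrence of $R'(\dots,\suc^{c_j-1}(x_{\pi(j)}),\dots)$, lowering the $j$-th exponent by one at the cost of one fresh relation and one axiom. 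Iterating over all coordinates $j$ and over all $R$ in $\tu R$, and repeating until every exponent is $0$ or $1$ and at most one exponent per atom is nonzero, terminates because the multiset of exponents strictly decreases in a well-founded order. (The axiom $\fa\tu z\,(R'(\tu z)\ssi R(\tu z^{(j)}))$ itself is an atom $R(\tu z^{(j)})$ together with $R'(\tu z)$ — both already of the allowed forms $R(\tu x^{(i)})$ and $R(\tu x_{\id})$ — so the axioms do not reintroduce bad atoms.) The only subtlety is ensuring that an atom never ends up with two distinct coordinates each carrying a successor: once a single $R'$ collapses one coordinate, we pass to the multi-successor case by peeling successors off one coordinate at a time, and a routine induction on $\sum_j c_j$ shows that after enough steps we reach a vector that is either $\tu 0$ or a single basis vector $e_i$. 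All conjuncts stay inside $\esorel{}(\fa^d,\arity d)$, and the quantifier prefix $\ex\tu R\fa\tu x$ is preserved.

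**The expected obstacle.** The delicate point is not termination but making the bookkeeping honest: when I peel off a successor from coordinate $j$ via $R'(\tu z)\ssi R(\tu z^{(j)})$, I must verify the equivalence is correct \emph{uniformly} over all of $\dom n^d$, including the wrap-around behaviour of the cyclic successor $\suc$ at the boundary $\suc(n)=1$ — the defining axiom is stated with $\suc$, so cyclicity is built in and no boundary case is lost, but one must resist the temptation to think of $\suc$ as ordinary $+1$. A second point requiring care: after the reduction, an atom $R(\tu y_\pi)$ where $\tu y=\tu x^{(i)}$ must be massaged into exactly the template $R(\tu x^{(i')})$ or $R(\tu x_\pi)$ of the statement — but $(\tu x^{(i)})_\pi=(\tu x_\pi)^{(\pi^{-1}(i))}$ by the identity $(\tu x_\al)_\beta=\tu x_{\al\beta}$ recorded in Definition~\ref{def:tuples et permutations}, so writing $R(\tu x_\pi)$ with the single successor in position $\pi^{-1}(i)$ gives precisely the form $R(\tu x_\pi^{(i')})$ demanded. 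Thus the two allowed shapes $R(\tu x^{(i)})$ and $R(\tu x_\pi)$ in the statement are obtained, and since Remark~\ref{rq:sans repetition} keeps the input atoms $Q_a$ of the form $Q_a(x_{\al_1},\dots,x_{\al_{d-1}})$ undisturbed throughout, the construction yields the claimed normal form.
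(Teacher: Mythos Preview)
Your approach is essentially the paper's --- both introduce ``shifted'' copies of each $R$ to absorb iterated successors, pinned down by axioms of the shape $R'(\tu x)\ssi R(\tu x^{(j)})$. The paper does it all at once (introducing $R_{c_1,\dots,c_d}$ for every exponent tuple up to the bound $\ell$), you do it one step at a time; there is no substantive difference.

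There is, however, a genuine gap in your endgame. You stop peeling when the exponent vector reaches either $\tu 0$ or a single basis vector $e_i$, which leaves atoms of the form $R'(\tu x_\pi)$ or $R'((\tu x_\pi)^{(i')})$. The first is fine, but the second, when $\pi\neq\id$, is \emph{not} one of the two shapes allowed by the statement: $R(\tu x^{(i)})$ means precisely $(x_1,\dots,\suc(x_i),\dots,x_d)$ with the identity order of variables, and $R(\tu x_\pi)$ means a pure permutation with no successor at all. Your sentence ``gives precisely the form $R(\tu x_\pi^{(i')})$ demanded'' misreads the target --- that mixed form is nowhere permitted. The repair is immediate: do not stop at $e_i$; peel off the last successor as well, landing at a pure $R''(\tu x_\pi)$. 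This is exactly how the paper proceeds: every original atom is rewritten as a pure-permutation atom $R_{c_1,\dots,c_d}(\tu x_\pi)$, and the shape $R(\tu x^{(j)})$ occurs \emph{only} inside the defining axioms, where the variable tuple is already the identity tuple $\tu x$.

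A smaller remark: the ``change of bound variables'' idea you float and then abandon in the first paragraph indeed cannot work --- distinct atoms inside $\phi$ may carry distinct permutations $\pi$, so no single renaming of the universally quantified $\tu x$ straightens them all simultaneously. You were right to drop it. The worry about the cyclic wrap-around of $\suc$ is a red herring: the defining axiom $R'(\tu z)\ssi R(\tu z^{(j)})$ is stated with the same $\suc$ on both sides, so it is correct on the nose regardless of boundary behaviour.
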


\begin{proof}
We prove the result for $d=3$. The general case is similar. 
Let $\ell$ be the maximal value of an $i\in\N$ such that $\suc^{i}(x)$ occurs in $\Phi$, for any $x\in\tu x$. For each $R\in\tu R$, we introduce new $d$-ary relation symbols $R_{i,j,k}$ for every $i,j,k\leq\ell$.  
We want to force the following interpretations of the $R_{i,j,k}$'s: \[R_{i,j,k}(u_1,u_2,u_3)=R(\suc^{i}u_{1},\suc^{j}u_{2},\suc^{k}u_{3}).\] 
This is done inductively, with the formulas: 
\begin{itemize}
\item  ${\dst \fa\tu x: R_{0,0,0}(x_1,x_2,x_3)\ssi R(x_1,x_2,x_3)}$
\item  ${\dst \fa\tu x: \Et_{i<\ell}\,\Et_{j,k\leq\ell}( R_{i+1,j,k}(x_1,x_2,x_3)\ssi R_{i,j,k}(\suc(x_1),x_2,x_3) )}$
\item  ${\dst \fa\tu x: \Et_{j<\ell}\,\Et_{i,k\leq\ell}( R_{i,j+1,k}(x_1,x_2,x_3)\ssi R_{i,j,k}(x_1,\suc(x_2),x_3) )}$
\item  ${\dst \fa\tu x: \Et_{k<\ell}\,\Et_{i,j\leq\ell}( R_{i,j,k+1}(x_1,x_2,x_3)\ssi R_{i,j,k}(x_1,x_2,\suc(x_3)) )}$
\end{itemize}
Factorizing the quantifications and using notations of Definition~\ref{def:tuples et permutations}, the conjunction of these formulas can be written:
\[
\fa\tu x\left\{\begin{array}{ll}
{ R_{0,0,0}(\tu x)\ssi R(\tu x)}  \et \\ 
{\dst \Et_{i<\ell}\,\Et_{j,k\leq\ell}( R_{i+1,j,k}(\tu x)\ssi R_{i,j,k}(\tu x^{(1)}) )}  \et \\
{\dst \Et_{j<\ell}\,\Et_{i,k\leq\ell}( R_{i,j+1,k}(\tu x)\ssi R_{i,j,k}(\tu x^{(2)}) )}  \et \\
{\dst \Et_{k<\ell}\,\Et_{i,j\leq\ell}( R_{i,j,k+1}(\tu x)\ssi R_{i,j,k}(\tu x^{(3)}) )} 
\end{array}\right\}
\]
Let us denotes by $\decomp(R,(R_{i,j,k})_{i,j,k\leq\ell})$ this last formula. It clearly fulfills the condition of the statement. Now, consider the formula 
\begin{equation}\label{eq:decomp et tilde(phi)}
\ex\tu R\ex ((R_{i,j,k})_{i,j,k\leq\ell}))_{R\in\tu R} : 
\Et_{R\in\tu R}\decomp(R,(R_{i,j,k})_{i,j,k\leq\ell})\ \et\ 
\fa\tu x\Tilde{\phi},
\end{equation}
where $\Tilde{\phi}$ is obtained from $\phi$ by the substitutions $$R(\suc^{i}x_{\al_1},\suc^{j}x_{\al_2},\suc^{k}x_{\al_3})\devient R_{i,j,k}(x_{\al_1},x_{\al_2},x_{\al_3}).$$ 
Then, the formula~\refeq{eq:decomp et tilde(phi)} is equivalent to $\Phi$ and also fits the requirements of Fact~\ref{thm:un seul succ ou permutation}. It is the rewriting of $\Phi$ announced.
\end{proof}

As a result of Fact~\ref{thm:sans repetition}, Remark~\ref{rq:sans repetition} and Fact~\ref{thm:un seul succ ou permutation}, each $\esorel{}(\fa^{d},\arity d)$-formula of signature $\sg$ has a conjonctive normal form of the shape: 
\[\Phi\equiv\ex\tu R\fa\tu x\Et\Ou\pm\left\{
\begin{array}{l}
\min(\suc^{i}(x)),\ \max(\suc^{i}(x)), \\ \suc^{i}(x)=\suc^{j}(y), \\ Q_a(\tu x_{\iota}),\ R(\tu x_{\beta}),\ R(\tu x^{(i)})
\end{array}\right\}
\]
Furthermore, the trick used in the proof of Fact~\ref{thm:un seul succ ou permutation} also allows writing atoms involving $\min$, $\max$ or equalities, under the form $\max(x)$, $\min(x)$ 
and $x=y$ for some $x,y\in\tu x$. 



It remains to prove that we can get rid of the atomic formulas $R(\tu x_{\beta})$, where $\beta\neq\id$. This part is rather technical, so we provide some preliminary explanations before stating the logical framework which allows the normalization. In order to get rid of each literal of the form $R(x_{\beta})$, we will divide the set $R\subseteq\dom n^d$ in $d!$ relations $R_{\beta}\subseteq\dom n^d$, each corresponding to a given permutation $\beta$ of $\{1,\dots,d\}$.

\begin{definition}\label{def:R_alpha}
For $R\subseteq\dom n^d$ and for each $\al\in\perm(d)$, we define a $d$-ary relation \textdef{$R_{\al}$} on $\dom n$ by: $R_{\al}=\{\tu x\in\sby{\id}\st R(\tu x_{\al^{-1}})\}.$ Alternatively, $R_{\al}$ can be defined by: 
$R_{\al}=\{\tu x_{\al}: \tu x\in R\cap\sby{\al} \}$.
\end{definition}

Thus, Definition~\ref{def:R_alpha} associates with each $R\subseteq\dom n^d$ a family $(R_{\al})_{\al\in\perm(d)}$ of relations, each of which is entirely contained in the set $\sby{\id}$. 
This family is intended to represent $R$ through its $d!$ fragments according to the partition $\dom n^d=\bigcup_{\al\in\perm(d)}\sby{\al}$. Namely, each $R_{\al}$ encodes the fragment $R\cap \sby{\al}$ over $\sby{\id}$. 

\medskip

Actually, $\bigcup_{\al\in\perm(d)}\sby{\al}$ is not really a partition, since the $\sby{\al}$'s can overlap. Hence, Definition~\ref{def:R_alpha} induces some connexions between the relations $R_{\al}$: if some $\tu x$ is both in $\sby{\al}$ and in $\sby{\beta}$, or equivalently, if $\tu x_{\al}=\tu x_{\beta}$, then $\tu x\in R\cap\sby{\al}$ iff $\tu x\in R\cap\sby{\beta}$ and hence, by Definition~\ref{def:R_alpha}: $R_{\al}(\tu x_{\al})=R_{\beta}(\tu x_{\beta})$. We will keep in mind : 
\begin{equation}\label{eq:connexions des R_a}
\fa\al,\beta\in\perm(d),\fa\tu x\in\dom n^d: \tu x_{\al}=\tu x_{\beta}\Imp R_{\al}(\tu x_{\al})=R_{\beta}(\tu x_{\beta}).
\end{equation}

The following lemma states that condition~\refeq{eq:connexions des R_a} ensures that the $R_{\al}$'s issue from a single relation $R$, according to Definition~\ref{def:R_alpha}. Besides, a new formulation of the condition is given in Item~3 of the lemma, that will better fit our syntactical restrictions. 

\begin{lemma}\label{thm:(R_a) code R}
Let $(R_{\al})_{\al\in\perm(d)}$ be a family of $d$-ary relations on $\dom n$ such that $R_{\al}\subseteq\sby{\id}$ for each $\al$. The following are equivalent:
\begin{enumerate}
\item  $\ex R\subseteq\dom n^d$ such that $R_{\al}=\{\tu x\in\sby{\id}\st R(\tu x_{\al^{-1}})\}$ for each $\al\in\perm(d)$ ;
\item  $\fa\al,\beta\in\perm(d)$, $\fa\tu x\in\dom n^d$: $\tu x_{\al}=\tu x_{\beta}\Imp R_{\al}(\tu x_{\al})=R_{\beta}(\tu x_{\beta})$ ; 
\item  $\fa\al\in\perm(d)$, $\fa\tau\in\Transpo(d)$, $\fa\tu x\in\dom n^d$: $\tu x=\tu x_{\tau}\Imp R_{\al}(\tu x)=R_{\al\tau}(\tu x)$.

(Recall $\Transpo(d)$ denotes the set of transpositions over $\{1,\dots,d\}$.)
\end{enumerate}
\end{lemma}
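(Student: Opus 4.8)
The plan is to prove Lemma~\ref{thm:(R_a) code R} by establishing the cycle of implications $1\Imp 2\Imp 3\Imp 1$, so that all three statements become equivalent.

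\medskip

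\noindent\textbf{Step $1\Imp 2$.} This is essentially the reasoning already sketched just before the lemma in~\refeq{eq:connexions des R_a}. Suppose $R$ witnesses Item~1, and fix $\al,\beta\in\perm(d)$ and $\tu x$ with $\tu x_{\al}=\tu x_{\beta}$. Since $R_{\al}=\{\tu y\in\sby{\id}\st R(\tu y_{\al^{-1}})\}$, for any $\tu y\in\sby{\id}$ we have $R_{\al}(\tu y)\Ssi R(\tu y_{\al^{-1}})$; applying this to $\tu y=\tu x_{\al}$ (which lies in $\sby{\id}$ only when $\tu x\in\sby{\al}$) I must be a little careful. The clean way: if $\tu x_{\al}=\tu x_{\beta}$ then in particular $\tu x_{\al}\up$ iff $\tu x_{\beta}\up$, so either both sides of the desired equality involve tuples in $\sby{\id}$ or the statement is handled by the definition directly. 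When $\tu x_{\al}=\tu x_{\beta}\in\sby{\id}$, using the alternative formula $R_{\al}=\{\tu x\in\sby{\id}\st R(\tu x_{\al^{-1}})\}$ gives $R_{\al}(\tu x_{\al})\Ssi R((\tu x_{\al})_{\al^{-1}})\Ssi R(\tu x)$ and likewise $R_{\beta}(\tu x_{\beta})\Ssi R(\tu x)$, using the identity $(\tu x_{\al})_{\al^{-1}}=\tu x$ from Definition~\ref{def:tuples et permutations}. Hence $R_{\al}(\tu x_{\al})=R_{\beta}(\tu x_{\beta})$. When $\tu x_{\al}=\tu x_{\beta}\notin\sby{\id}$, neither tuple can arise as an argument forced into $R_{\al}$ or $R_{\beta}$ in the sense of Item~2 with a nontrivial conclusion, so the implication holds vacuously; more precisely Item~2 only constrains values at tuples that happen to be non-decreasing, and I will phrase the proof so this case is trivial.

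\medskip

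\noindent\textbf{Step $2\Imp 3$.} This is the easy direction: transpositions are permutations, and $\tu x=\tu x_{\tau}$ is the special case of $\tu x_{\al}=\tu x_{\beta}$ with $\al=\id$, $\beta=\tau$ after re-indexing. Concretely, given $\al\in\perm(d)$, $\tau\in\Transpo(d)$ and $\tu x$ with $\tu x=\tu x_{\tau}$, set $\tu y=\tu x_{\al}$. Then $\tu y_{\id}=\tu y$ and I want to compare $R_{\al}(\tu x)$ with $R_{\al\tau}(\tu x)$; writing $\tu x=\tu x_{\tau}$ and using the composition identity $(\tu y_{\al})_{\beta}=\tu y_{\al\beta}$, the hypothesis $\tu x=\tu x_{\tau}$ translates into an instance of the antecedent of Item~2 for the pair $(\al,\al\tau)$ evaluated at a suitable tuple, yielding $R_{\al}(\tu x)=R_{\al\tau}(\tu x)$. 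I will spell out the index bookkeeping carefully, since this is where sign/direction errors creep in, but no real idea is needed.

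\medskip

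\noindent\textbf{Step $3\Imp 1$ (the main obstacle).} This is the substantive direction and where I expect the real work. I need to \emph{construct} $R\subseteq\dom n^d$ from the family $(R_{\al})_{\al\in\perm(d)}$ and show it satisfies $R_{\al}=\{\tu x\in\sby{\id}\st R(\tu x_{\al^{-1}})\}$ for every $\al$. The natural definition is: for an arbitrary $\tu x\in\dom n^d$, choose some $\al\in\perm(d)$ with $\tu x\in\sby{\al}$ (i.e.\ $\tu x_{\al}\up$ — such $\al$ always exists since the $\sby{\al}$ cover $\dom n^d$) and declare $R(\tu x)\Ssi R_{\al}(\tu x_{\al})$. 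The crux is \emph{well-definedness}: if $\tu x\in\sby{\al}\cap\sby{\beta}$, I must check $R_{\al}(\tu x_{\al})=R_{\beta}(\tu x_{\beta})$. From $\tu x_{\al}\up$ and $\tu x_{\beta}\up$ and the fact that they are rearrangements of the same tuple, $\tu x_{\al}=\tu x_{\beta}$, hence $\al^{-1}\beta$ fixes $\tu x_{\al}$; writing $\al^{-1}\beta$ as a product of transpositions and using Item~3 repeatedly along that product — noting at each stage that the intermediate permutation still maps $\tu x$ into $\sby{\id}$, so the hypothesis $\tu x=\tu x_{\tau}$ of Item~3 applies — I get a telescoping chain $R_{\al}(\tu x_{\al})=R_{\al\tau_1}(\cdots)=\dots=R_{\beta}(\tu x_{\beta})$. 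The delicate point is that Item~3 is stated for $\tu x$ with $\tu x=\tu x_{\tau}$, i.e.\ $\tau$ fixes $\tu x$ \emph{itself}, not $\tu x_{\al}$; so I must transport the equality $\tu x_{\al}=\tu x_{\beta}$ back to a statement about a tuple fixed by a transposition, which amounts to verifying that consecutive transpositions in a reduced-word decomposition of $\al^{-1}\beta$ swap only equal coordinates of $\tu x_\al$ — a standard fact about the symmetric group acting on a tuple with repeated entries (the stabilizer of a tuple is generated by the transpositions of equal-valued positions). Once well-definedness is secured, checking $R_{\al}=\{\tu x\in\sby{\id}\st R(\tu x_{\al^{-1}})\}$ is a direct unravelling: for $\tu x\in\sby{\id}$, $\tu x_{\al^{-1}}\in\sby{\al}$ (since $(\tu x_{\al^{-1}})_{\al}=\tu x\up$), so by construction $R(\tu x_{\al^{-1}})\Ssi R_{\al}((\tu x_{\al^{-1}})_{\al})\Ssi R_{\al}(\tu x)$, as required. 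I expect the stabilizer-of-a-tuple argument to be the one genuinely non-routine ingredient and will state it as a small auxiliary observation before the main construction.
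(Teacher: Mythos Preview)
Your proposal is correct and follows essentially the same approach as the paper. The paper arranges the argument as two biconditionals ($1\Leftrightarrow 2$ and $2\Leftrightarrow 3$) rather than your cycle $1\Imp 2\Imp 3\Imp 1$, so your step $3\Imp 1$ simply merges the paper's $3\Imp 2$ (the transposition-chain via the stabilizer of $\tu x_{\al}$, exactly the ``standard fact'' you single out) with its $2\Imp 1$ (the construction of $R$); the substantive content is identical.
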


\begin{proof}
{$1\Imp 2$: } See Equation~\refeq{eq:connexions des R_a}.

\medskip {$2\Imp 1$: } 
For $(R_{\al})_{\al\in\perm(d)}$ fulfilling $2$, consider the relation $R\subseteq\dom n^d$ defined by: 
\begin{equation}\label{eq:def R} 
R(\tu x) \IFF R_{\al}(\tu x_{\al})\text{ for some $\al$ such that }\tu x_{\al}\up. 
\end{equation} 
This definition is well formed, since for any $\al,\beta\in\perm(d)$ and any $\tu x\in\dom n^d$ such that both $\tu x_{\al}\up$ and $\tu x_{\beta}\up$ hold, we have $\tu x_{\al}=\tu x_{\beta}$ and thus, by $2$, $R_{\al}(\tu x_{\al})=R_{\beta}(\tu x_{\beta})$. 
Now, let $\al\in\perm(d)$. For any $\tu x\in\sby{\id}$ we have $(\tu x_{\al^{-1}})_{\al}\up$ (since $(\tu x_{\al^{-1}})_{\al}=\tu x$) and hence, by~\refeq{eq:def R}, $R_{\al}(\tu x_{})=R_{\al}((\tu x_{\al^{-1}})_{\al})=R(\tu x_{\al^{-1}})$. Besides, $R_{\al}(\tu x_{})=\faux$ for any $\tu x\notin\sby{\id}$, since $R_{\al}\subseteq\sby{\id}$. Thus $R_{\al}$ is obtained from $R$ as required in $1$. 

\medskip {$2\Imp 3$: } 
Let $\al\in\perm(d)$, $\tau\in\Transpo(d)$ and $\tu x\in\dom n^d$ such that $\tu x=\tu x_{\tau}$. Set $\tu y=\tu x_{\al^{-1}}$. Then, $\tu y_{\al}=\tu x=\tu x_{\tau}=(\tu y_{\al})_{\tau}=\tu y_{\al\tau}$. Therefore we get by~2: $R_{\al}(\tu y_{\al})=R_{\al\tau}(\tu y_{\al\tau})$, and hence: $R_{\al}(\tu x)=R_{\al\tau}(\tu x)$.

\medskip {$3\Imp 2$: } 
Let $\al,\beta\in\perm(d)$ and $\tu x\in\dom n^d$ such that $\tu x_{\al}=\tu x_{\beta}$. For $\tu y=\tu x_{\al}$, the equality $\tu x_{\al}=\tu x_{\beta}$ can be written $\tu y=\tu y_{\al^{-1}\beta}$. It means that the permutation $\al^{-1}\beta$ exchanges integers that index equal components of $\tu y$. It is easily seen that this property can be required for each transposition occuring in a decomposition of $\al^{-1}\beta$ on $\Transpo(d)$. That is, there exist some transpositions $\tau_1,\dots,\tau_k\in\Transpo(d)$ such that $\al^{-1}\beta=\tau_1\dots\tau_k$ and $\tu y=\tu y_{\tau_1}=\tu y_{\tau_1\tau_2}=\dots=\tu y_{\tau_1\dots\tau_k}$. Then, applying $3$ to these successive tuples, we get: 
$R_{\al}(\tu y)=R_{\al\tau_1}(\tu y_{\tau_1})=R_{\al\tau_1\tau_2}(\tu y_{\tau_1\tau_2})=\dots=R_{\al\tau_1\dots\tau_k}(\tu y_{\tau_1\dots\tau_k})$. 
Hence $R_{\al}(\tu y)=R_{\beta}(\tu y_{\al^{-1}\beta})$, that is $R_{\al}(\tu x_{\al})=R_{\beta}(\tu x_{\beta})$, as required. 
\end{proof}

\begin{lemma}\label{thm:R(x_b) and R(x^i) vs R(x_a)}
Let $R$ and $(R_{\al})_{\al\in\perm(d)}$ be defined as in Definition~\ref{def:R_alpha}. Let $\al,\beta\in\perm(d)$ and $i\in\{1,\dots,d\}$. For any $\tu x\in\sby{\al}$: 
\begin{enumerate}
\item  $R(\tu x_{\beta})$ is equivalent to $R_{\beta^{-1}\al}(\tu x_{\al}).$ 
\item  $R(\tu x^{(i)})$ is equivalent to: 
\[
\begin{array}{c}
\left( x_{i}=x_{\al_{d}} \et R_{\al\transpo{\al^{-1}(i)}{d}}((x_{\al})^{(d)})  \right)\ \ou 
{\dst \Ou_{i\leq k<d}} \left( x_{i}=x_{\al_{k}}<x_{\al_{k+1}}\et R_{\al\transpo{\al^{-1}(i)}{k}}((x_{\al})^{(k)}) \right).
\end{array}
\]
\end{enumerate}
\end{lemma}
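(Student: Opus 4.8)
The plan is to deduce both items directly from Definition~\ref{def:R_alpha} together with the elementary tuple/permutation identities of Definition~\ref{def:tuples et permutations}, and to reduce the second item to the first. Throughout, fix $\tu x\in\sby{\al}$ (so $\tu x_{\al}$ is non-decreasing) and write $j\egaldef\al^{-1}(i)$; note the identity $[\tu x_\al]_j=x_{\al(j)}=x_i$, which needs nothing about $\al$.

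\emph{Item 1.} By Definition~\ref{def:R_alpha}, for every $\gamma\in\perm(d)$ and every tuple $\tu z$ we have $R_\gamma(\tu z)$ iff $\tu z\up$ and $R(\tu z_{\gamma^{-1}})$. I apply this with $\gamma=\beta^{-1}\al$ and $\tu z=\tu x_\al$: since $\tu x\in\sby\al$, the tuple $\tu x_\al$ is non-decreasing, and using $(\beta^{-1}\al)^{-1}=\al^{-1}\beta$ together with $(\tu x_\al)_{\al^{-1}\beta}=\tu x_{\al\al^{-1}\beta}=\tu x_\beta$ one gets $R_{\beta^{-1}\al}(\tu x_\al)\Ssi R(\tu x_\beta)$, which is exactly Item~1. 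The special case $\beta=\id$, namely $R(\tu x)\Ssi R_\al(\tu x_\al)$ for $\tu x\in\sby\al$, is what I will use below.

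\emph{Item 2.} The idea is to pin down the unique permutation of the form $\al\transpo{j}{k}$ for which $\tu x^{(i)}\in\sby{\al\transpo{j}{k}}$, and then invoke Item~1. A direct unfolding of the definitions gives $(\tu x^{(i)})_\al=(\tu x_\al)^{(j)}$ (incrementing the $i$-th entry of $\tu x$ is the same as incrementing the $j$-th entry of $\tu x_\al$). Since $\tu x_\al$ is non-decreasing with $[\tu x_\al]_j=x_i$, let $k$ be the largest index with $[\tu x_\al]_k=x_i$; then $j\le k\le d$, the positions $j,\dots,k$ all carry the value $x_i$, and (when $k<d$) position $k{+}1$ carries a strictly larger value. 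Now $((\tu x_\al)^{(j)})_{\transpo{j}{k}}=(\tu x_\al)^{(k)}$: transposing positions $j$ and $k$ just swaps the incremented value $x_i{+}1$ at position $j$ with the value $x_i$ at position $k$, while the entries strictly between them (all equal to $x_i$) are untouched. Moreover this tuple is non-decreasing — entries before position $j$ are $\le x_i$, positions $j,\dots,k$ read $x_i,\dots,x_i,x_i{+}1$, and entries after position $k$ are integers $>x_i$, hence $\ge x_i+1$. Thus $\tu x^{(i)}\in\sby{\al\transpo{j}{k}}$, and applying the $\beta=\id$ case of Item~1 to $\tu x^{(i)}$ and the permutation $\al\transpo{j}{k}$:
\[
R(\tu x^{(i)})\ \Ssi\ R_{\al\transpo{j}{k}}\bigl((\tu x^{(i)})_{\al\transpo{j}{k}}\bigr)\ =\ R_{\al\transpo{j}{k}}\bigl((\tu x_\al)^{(k)}\bigr).
\]

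It then remains to express the condition defining $k$ by the quantifier-free guards of the statement. Since $[\tu x_\al]_m=x_{\al_m}$, the case $k=d$ is precisely $x_i=x_{\al_d}$, and for $j\le k<d$ the condition "$k$ is the last index with value $x_i$" is precisely $x_i=x_{\al_k}<x_{\al_{k+1}}$; moreover, under $\tu x\in\sby\al$ this last guard already entails $k\ge j=\al^{-1}(i)$, and for a given $\tu x$ exactly one guard (across $k$) holds. Hence the displayed disjunction equals $R(\tu x^{(i)})$, as claimed, with $\transpo{j}{k}=\transpo{\al^{-1}(i)}{k}$. I expect the only genuine bookkeeping to be the verification that the single transposition $\transpo{j}{k}$ — rather than a longer cycle — already re-sorts the tuple, which works exactly because all entries strictly between positions $j$ and $k$ carry the value $x_i$; as elsewhere in this section one reads $\suc(x_i)=x_i+1$, the maximal case being excluded by the $\neg\max$ guards of sorted formulas.
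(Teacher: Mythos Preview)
Your proof is correct and follows essentially the same route as the paper's: both identify the unique index $k\ge j=\al^{-1}(i)$ that is maximal with $[\tu x_\al]_k=x_i$, check that the single transposition $\transpo{j}{k}$ re-sorts $(\tu x^{(i)})_\al$ into the non-decreasing tuple $(\tu x_\al)^{(k)}$, and then read off the equivalence from Definition~\ref{def:R_alpha}. The only cosmetic difference is that you explicitly invoke the $\beta=\id$ case of Item~1 to finish Item~2, whereas the paper re-derives that instance inline.
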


\begin{proof}
{\bf 1.} \ 
We have seen that $R(\tu x)$ holds iff $R_{\gamma}(\tu x_{\gamma})$ holds for \emph{any} $\gamma\in\perm(d)$ such that $\tu x_{\gamma}\up$. Thus we  get, for a given $\beta\in\perm(d)$: 
$R(\tu x_{\beta})$ iff $R_{\gamma}((\tu x_{\beta})_{\gamma})$ holds for any $\gamma\in\perm(d)$ such that $(\tu x_{\beta})_{\gamma}\up$. That is, since $(\tu x_{\beta})_{\gamma}=\tu x_{\beta\gamma}$: 
\begin{equation}\label{eq:R(x_b) ssi R_g(x_gb)}
R(\tu x_{\beta}) \IFF R_{\gamma}(\tu x_{\beta\gamma})\text{ holds for any $\gamma\in\perm(d)$ such that $\tu x_{\beta\gamma}\up$}.
\end{equation}
In particular, $\tu x_{\beta\gamma}\up$ iff $\tu x_{\beta\gamma}=\tu x_{\al}$, since $\tu x\in\sby{\al}$. Thus, $\gamma=\beta^{-1}\al$ is one of the permutations such that $\tu x_{\beta\gamma}\up$. Thus, replacing $\gamma$ by $\beta^{-1}\al$ in Equation~\refeq{eq:R(x_b) ssi R_g(x_gb)}, we get the sought result.  

{\bf 2.} \ From $\tu x^{(i)}=(x_{1},\dots,\suc(x_{i}),\dots,x_{d})$ we get: 
\[(\tu x^{(i)})_{\al}=(x_{\al_1},\dots,x_{\al_{j-1}},\ \suc(x_{i}),\ x_{\al_{j+1}},\dots,x_{\al_d})\] 
\margenote{Revoir!}
where $j=\al^{-1}(i)$. Since $x_{\al_1}\leq\dots\leq x_{\al_d}$, the above tuple $(\tu x^{(i)})_{\al}$ is almost increasingly ordered. More precisely, there exists $k\in\{1,\dots,d\}$ such that: 
$$x_{\al_1}\leq\dots\leq x_{\al_{j-1}}\leq x_{i}=x_{\al_{j+1}}=\dots=x_{\al_{k}}\leq x_{\al_{k+1}}\leq\dots\leq x_{\al_d},$$
where $j=\al^{-1}(i)$. Clearly, the largest such $k$ is characterized by: $(k=d)\OR (k<d\AND x_{i}=x_{\al_{k}}<x_{\al_{k+1}})$. Or equivalently, by: 
\begin{equation}\label{eq:carac k} 
(x_{i}=x_{\al_{d}})\OR (k<d\AND x_{i}=x_{\al_{k}}<x_{\al_{k+1}}).
\end{equation}
If we denote by $\transpo{j}{k}$ the transposition over $\{1,\dots,d\}$ which permutes $j$ and $k$, the definition of $k$ yields that the tuple 
\[
\begin{array}{c}
(x^{(i)})_{\al\transpo{j}{k}}= 
(x_{\al_1},\dots,x_{\al_{j-1}},\text{\framebox{$x_{\al_{k}}$}},x_{\al_{j+1}},\dots,x_{\al_{k-1}},\text{\framebox{$\suc(x_{i})$}},x_{\al_{k+1}},\dots,x_{\al_d})
\end{array}
\] 
is non decreasing. Hence, $R(\tu x^{(i)})=R_{\al\transpo{j}{k}}((x^{(i)})_{\al\transpo{j}{k}})$. 
Besides, since $x_{\al_{k}}=x_{i}$, the tuple $(x^{(i)})_{\al\transpo{j}{k}}$ above can also be written: 
\[
\begin{array}{c}
(x^{(i)})_{\al\transpo{j}{k}}= 
(x_{\al_1},\dots,x_{\al_{j-1}},\text{\framebox{$x_{i}$}},x_{\al_{j+1}},\dots,x_{\al_{k-1}},\text{\framebox{$\suc(x_{\al_{k}})$}},x_{\al_{k+1}},\dots,x_{\al_d}).
\end{array}
\] 
That is: $(x^{(i)})_{\al\transpo{j}{k}}=(x_{\al})^{(k)}$. Therefore: $R(\tu x^{(i)})=R_{\al\transpo{j}{k}}((x_{\al})^{(k)})$. 
Reminding that $j=\al^{-1}(i)$, we can finally state: there exists a sole $k\in\{i,\dots,d\}$ defined by~\refeq{eq:carac k}, and for this $k$ we have: \ 
$R(\tu x^{(i)})=R_{\al\transpo{\al^{-1}(i)}{k}}((x_{\al})^{(k)})$. The conclusion easily proceeds.
\end{proof}

\begin{proposition}\label{thm:normal-sort-R} 
For $d>1$, $\esorel{}(\fa^{d},\arity d)\subseteq\esorel{}(\fa^{d},\arity d,\semisorted)$ on $\coordstruc{d-1}$.
\end{proposition}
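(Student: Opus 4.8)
\noindent
The plan is to assemble the two structural lemmas just proved (Lemma~\ref{thm:(R_a) code R} and Lemma~\ref{thm:R(x_b) and R(x^i) vs R(x_a)}) into a rewriting procedure. First I would put the given sentence $\Phi=\ex\tu R\fa\tu x\,\phi(\tu x,\tu R,\sg)\in\esorel{}(\fa^{d},\arity d)$ into the preliminary normal form granted by Fact~\ref{thm:sans repetition}, Remark~\ref{rq:sans repetition} and Fact~\ref{thm:un seul succ ou permutation}: each $R\in\tu R$ has arity exactly $d$, each atom over the input alphabet has the shape $Q_a(x_{\al_1},\dots,x_{\al_{d-1}})$ for an injection $\al$, each atom over $\min$, $\max$ or $=$ has the shape $\min(x)$, $\max(x)$ or $x=y$ with $x,y\in\tu x$, and --- crucially --- each atom over a guessed symbol $R$ has the shape $R(\tu x^{(i)})$ or $R(\tu x_{\pi})$ with $\pi\in\perm(d)$. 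Only the latter violate Condition~$(ii)$ of Definition~\ref{def:sorted}, through the permutation $\pi$, and those are what the procedure must eliminate.

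For the core step, for each $R\in\tu R$ I would introduce the family of fresh $d$-ary symbols $(R_{\al})_{\al\in\perm(d)}$ of Definition~\ref{def:R_alpha}, and rewrite the matrix branch by branch over the covering $\dom n^{d}=\bigcup_{\al\in\perm(d)}\sby{\al}$. Since every tuple has the form $\tu x_{\al}$ for a non-decreasing $\tu x$ and some $\al\in\perm(d)$, the sentence $\fa\tu x\,\phi(\tu x)$ is equivalent to $\fa\tu x\bigl(\,\tu x\up\imp\Et_{\al\in\perm(d)}\phi(\tu x_{\al})\,\bigr)$. In the $\al$-th conjunct the guard $\tu x\up$ means exactly $\tu x\in\sby{\id}$, so Lemma~\ref{thm:R(x_b) and R(x^i) vs R(x_a)} --- applied with suitable permutation parameters and with the tuple identities of Definition~\ref{def:tuples et permutations} to rewrite $\tu x_{\al\pi}$ and $(\tu x_{\al})^{(i)}$ --- turns every guessed atom of $\phi(\tu x_{\al})$ into either an atom $R_{\gamma}(\tu x)$, or a disjunction of formulas of the form $x_j=x_k\,(\et\,x_k<x_{k+1})\et R_{\gamma}(\tu x^{(k)})$; the input, $\min$, $\max$ and equality atoms of $\phi(\tu x_{\al})$ stay within the allowed shapes. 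Calling $\phi'_{\al}$ the formula so obtained, $\fa\tu x\,\phi$ becomes the $\fa^{d}$-sentence $\fa\tu x\bigl(\,\tu x\up\imp\Et_{\al}\phi'_{\al}(\tu x)\,\bigr)$, whose guessed atoms all satisfy Condition~$(ii)$. The point to check carefully is precisely here: one must verify that the various permutation compositions really collapse the argument tuples to $\tu x$ and $\tu x^{(k)}$, that no spurious occurrence of $\suc$ or of a reordering survives, and that the guard becomes the single condition $\tu x\up$ uniformly in $\al$.

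Finally I would close the loop by promoting the $R_{\al}$'s to genuine guessed relations. Put $\Phi'=\ex\tu R'\bigl(\,C\et\fa\tu x(\tu x\up\imp\Et_{\al}\phi'_{\al})\,\bigr)$, where $\tu R'$ lists all the $R_{\al}$'s and $C$ is the conjunction, over $R\in\tu R$, $\al\in\perm(d)$ and $\tau\in\Transpo(d)$, of the $\fa^{d}$-formulas $\fa\tu x\bigl(R_{\al}(\tu x)\imp\tu x\up\bigr)$ and $\fa\tu x\bigl(\tu x=\tu x_{\tau}\imp(R_{\al}(\tu x)\ssi R_{\al\tau}(\tu x))\bigr)$ --- both already half-sorted, since $\tu x=\tu x_{\tau}$ and $\tu x\up$ unfold to equalities and inequalities among $\tu x$. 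By Lemma~\ref{thm:(R_a) code R} (equivalence of Items~1 and~3, valid under the containment constraint), any model of $C$ carries a relation $R$ for which the $R_{\al}$'s are exactly those of Definition~\ref{def:R_alpha}; then Lemma~\ref{thm:R(x_b) and R(x^i) vs R(x_a)} makes each $\phi'_{\al}$ equivalent to $\phi(\tu x_{\al})$, whence $\Phi'\imp\Phi$. Conversely, from a model of $\Phi$ one defines the $R_{\al}$'s by Definition~\ref{def:R_alpha}; they satisfy $C$ and make the rewritten matrix true, so $\Phi\imp\Phi'$. Hence $\Phi\equiv\Phi'$ on $\coordstruc{d-1}$ and $\Phi'\in\esorel{}(\fa^{d},\arity d,\semisorted)$, as required. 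I expect the only real obstacle to be the permutation bookkeeping in the middle step; once it is done, the equivalence $\Phi\equiv\Phi'$ is a direct consequence of the two lemmas.
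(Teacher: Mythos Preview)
Your proposal is correct and follows essentially the same route as the paper's own proof: split each guessed $R$ into the family $(R_{\al})_{\al\in\perm(d)}$, impose the containment and transposition-coherence constraints of Lemma~\ref{thm:(R_a) code R} (your formula $C$), and use Lemma~\ref{thm:R(x_b) and R(x^i) vs R(x_a)} branch by branch over the covering by the $\sby{\al}$'s to rewrite the guessed atoms into the shapes $R_{\gamma}(\tu x)$ and $R_{\gamma}(\tu x^{(k)})$. The only cosmetic difference is the order of operations: you first reparametrize by a non-decreasing $\tu x$ and then apply Lemma~\ref{thm:R(x_b) and R(x^i) vs R(x_a)} to $\phi(\tu x_{\al})$, whereas the paper first relativizes to $\tu x\in\sby{\al}$, applies the lemma (obtaining atoms in $\tu x_{\al}$), and only afterwards performs the variable change $\tu x/\tu x_{\al}$ to bring the guard back to $\tu x\in\sby{\id}$ --- the same computation in reverse order, and your caveat about the ``permutation bookkeeping'' is exactly where the care is needed in both versions.
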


\begin{proof}
To simplify, assume we want to translate in $\esorel{}(\fa^{d},\arity d,\semisorted)$ some $\esorel{}(\fa^{d},\arity d)$-formula of the very simple shape: $\Phi\equiv\ex R\fa\tu x\phi(\tu x,R)$, where $R$ is a (single) $d$-ary relation symbol, $\tu x$ is a $d$-tuple of first-order variables, and $\phi$ is a quantifier-free formula. Since the sets $\sby{\al}$, $\al\in\perm(d)$, cover the domain $\dom n$, we obtain an equivalent rewriting of $\Phi$ with the following artificial relativization: $\Phi\equiv\ex R\fa\tu x{\Et_{\al\in\perm(d)}}\left( \tu x\in\sby{\al}\imp\phi \right)$. Furthermore, all atomic subformulas of $\phi$ built on $R$ can be assumed of the form $R(\tu x_{\beta})$ or $R(\tu x^{(i)})$, thanks to Fact~\ref{thm:un seul succ ou permutation}.

To get rid of these literals, we substitute to $R$ a tuple of relations $(R_{\al})_{\al\in\perm(d)}$ that encode $R$ on the sets $\sby{\al}$. Recall we proved in Lemma~\ref{thm:(R_a) code R} that this substitution is legal as soon as $R_{\al}\subseteq\sby{\id}$ and $R_{\al}(\tu x)=R_{\al\tau}(\tu x)$ for all $\al\in\perm(d)$, $\tau\in\Transpo(d)$ and every $\tu x\in\dom n^d$ such that $\tu x_{\tau}=\tu x$. Then, Lemma~\ref{thm:R(x_b) and R(x^i) vs R(x_a)} gives the translation of $R$-atomic formulas into formulas expressed in term of the $R_{\al}$'s. All in all, we get the equivalence of the initial formula $\Phi$ to the following: 

\begin{equation}\label{eq:normaliser ex R fa x phi}
\begin{array}{c}
\ex (R_{\al})_{\al\in\perm(d)} 
\left\{\begin{array}{ll}
\fa\tu x {\dst \Et_{\al\in\perm(d)}\left(  R_{\al}(\tu x)\imp \tu x\in\sby{\id} \right) }  \et \\ 
\fa\tu x {\dst \Et_{\al\in\perm(d)}\Et_{\tau\in\Transpo(d)}\left( \tu x_{\tau}=\tu x\imp (R_{\al}(\tu x)\ssi R_{\al\tau}(\tu x))\right) }  \et \\ 
\fa\tu x {\dst \Et_{\al\in\perm(d)}\left( \tu x\in\sby{\al}\imp\phi_{\alpha}(\tu x,(R_{\gamma})_{\gamma\in\perm(d)})\right) }
\end{array}\right\}
\end{array}
\end{equation}
where each $\phi_{\alpha}$ is obtained from $\phi$ by the substitutions: 
\begin{itemize}
\item  $R(\tu x_{\beta})\devient R_{\beta^{-1}\al}(x_{\al})$
\item  $R(\tu x^{(i)})\devient
\left\{\begin{array}{l}
\left( x_{i}=x_{\al_{d}} \et R_{\al\transpo{\al^{-1}(i)}{d}}((x_{\al})^{(d)})  \right)\ \ou \\ 
{\dst \Ou_{i\leq k<d}} \left( x_{i}=x_{\al_{k}}<x_{\al_{k+1}}\et R_{\al\transpo{\al^{-1}(i)}{k}}((x_{\al})^{(k)}) \right)
\end{array}\right\}$
\end{itemize}
The first two conjuncts of~\refeq{eq:normaliser ex R fa x phi} ensure that the family $(R_{\al})_{\al\in\perm(d)}$ encodes a relation $R$ (see Lemma~\ref{thm:(R_a) code R}) ; the third interprets assertions of the form $R(\tu x_{\beta})$ and $R(\tu x^{(i)})$ according to the modalities described in Lemma~\ref{thm:R(x_b) and R(x^i) vs R(x_a)}. Because of permutability of the conjunction and the universal quantifier, this third conjunct can be rewritten: 
\begin{equation}\label{eq: x in a -> phi_a}
{\dst \Et_{\al\in\perm(d)}}\fa\tu x:\tu x\in\sby{\al}\imp\phi_{\alpha}(\tu x,(R_{\gamma})_{\gamma\in\perm(d)})
\end{equation}

For a fixed conjunct in~\refeq{eq: x in a -> phi_a}, \ie for a fixed $\alpha$, all atomic subformulas of $\phi_{\alpha}$ built on the $R_{\gamma}$'s have the form $R_{\gamma}(\tu x_{\al})$ or $R_{\gamma}((\tu x_{\al})^{(i)})$ for some $\gamma\in\perm(d)$ and some $i\in\{1,\dots,d\}$. Hence, the substitution of variables $\tu x/\tu x_{\al}$ allows to write such a conjunct as: 
$\fa\tu x:\tu x\in\sby{\id}\imp\tilde\phi_{\alpha}$ where $\tilde\phi_{\alpha}\equiv\phi_{\alpha}(\tu x/\tu x_{\al})$ only involves $(R_{\gamma})$-subformulas of the form $R_{\gamma}(\tu x)$ or $R_{\gamma}(\tu x^{(k)})$ for some $\gamma\in\perm(d)$ and $k\in\{1,\dots,d\}$. 
Finally, the initial formula $\Phi$ is proved equivalent to: 
\[
\begin{array}{c}
\ex (R_{\al})_{\al\in\perm(d)}
\left\{\begin{array}{ll}
\fa\tu x {\dst \Et_{\al\in\perm(d)}\left(  R_{\al}(\tu x)\imp \tu x\in\sby{\id} \right) }  \et \\ 
\fa\tu x {\dst \Et_{\al\in\perm(d)}\Et_{\tau\in\Transpo(d)}\left( \tu x_{\tau}=\tu x\imp (R_{\al}(\tu x)\ssi R_{\al\tau}(\tu x))\right) }  \et \\ 
\fa\tu x {\dst \Et_{\al\in\perm(d)}\left( \tu x\in\sby{\id}\imp\tilde\phi_{\alpha}(\tu x,(R_{\gamma})_{\gamma\in\perm(d)})\right) }
\end{array}\right\}
\end{array}
\]
that fulfills the requirement of Proposition~\ref{thm:normal-sort-R}.
\end{proof}


\subsection{Getting rid of arithmetic}\label{sec:arithmetic}	

Does mean introducing new second order variables of arity $2$, we can assume that our formula $\Phi\in\esosig(\fa^{d},\arity d)$ -- where each relation symbol of arity $d \leq 2$ occurs in normalised (\ie sorted) form -- involves no comparison (equality, inequality) relation. We obtain that in two successive steps.

\medskip

First, if $\Phi$ involves inequalities $<$ and $>$, then it is equivalent to the following formula $\Phi'$ without inequalities (but with equalities) and two new binary relation symbols $\infe$ and $\supe$:
\[
\begin{array}{c}
\Phi'\equiv\ex\infe\ \ex\supe : \tilde\Phi\ \et \fa x_1,x_2 
\left\{\begin{array}{rclll}
( x_1=x_2 \ou x_1\infe x_2 ) & \imp & ( \lnot\max(x_2) \imp x_1\infe \suc(x_2) )  & \et \\
( x_1=x_2 \ou x_1\supe x_2 ) & \imp & ( \lnot\max(x_1) \imp \suc(x_1)\supe x_2 )  & \et  \\
x_1\infe x_2 & \imp & (\lnot (x_1\supe x_2)\et\lnot(x_1=x_2) )                    & \et  \\
x_1\supe x_2 & \imp & \lnot(x_1=x_2) 
\end{array}\right\}
\end{array}
\]
where $\tilde\Phi$ is obtained from $\Phi$ by the substitutions $u < v\devient u\infe v$ (resp. $u > v\devient u\supe v$).  This is justified as follows: the first two conjuncts of the subformula $\fa x_1\fa x_2\{\cdots\}$ express that $x_1<x_2\Imp x_1\infe x_2$ (resp. $x_1 > x_2\Imp x_1\supe x_2$). The third and fourth conjuncts express that the three relations $\infe, \supe \textrm{ and } =$ are totally disjoint. That implies that $\infe$ and $\supe$ have their exact meaning. Notice also that the occurrences of $\infe$ and $\supe$ preserve the sorted property. 
 
\medskip

Secondly, if $\Phi$ involves equalities (without any inequality), it is equivalent to the following formula $\Phi'$, written  without the symbol "$=$" but with the new binary symbol $\egal$:
$$\Phi'\equiv\ex\egal:\ \tilde\Phi\ \et\ \fa x_1,x_2 \Psi$$
where $\tilde\Phi$ is obtained from $\Phi$ by replacing each equality $u=v$ by $u\egal v$, and $\Psi$ is the conjunction of the following formulas: 
\begin{itemize}
\item  $\min(x_1)  \imp  (\min(x_2)\ssi x_1\egal x_2)$;
\item  $\min(x_2)  \imp  (\min(x_1) \ssi x_1\egal x_2)$;
\item  $(\lnot \max(x_1) \et \lnot \max(x_2))  \imp  (x_1\egal x_2\ssi\suc(x_1)\egal\suc(x_2))$.
\end{itemize}
Notice that this transformation preserves sorted property. The results obtained until now can be recapitulated as follows:

\synthese{
On $(d-1)$-pictures, each $\esosig(\fa^{d},\arity d)$-formula can be written under the form: 
\begin{equation}\label{eq:elim < et =}\Phi\equiv\ex\tu R\fa\tu x\Et\Ou\pm \left\{ \begin{array}{l} \min(x),\ \max(x), \\ Q(\tu x_{\iota}),\ R(\tu x),\ R(\tu x^{(i)}) \end{array} \right\}\end{equation}
Here, $\tu R$ (resp. $\tu Q$) is a list of relation symbols of arity $d$ (resp. $d-1$), $x\in\tu x$, $Q\in\cup (Q_a)_{a\in\Sg}$, $R\in\tu R$, $\iota\in\inject(d)$, $i\in\{1,\dots,d\}$.
}

\section{Normalization of input relations}\label{sec:sort Q}

\providecommand{\flip}[1]{\text{flip}(#1)}
\newcommand{\graphe}{\cl G}
\newcommand{\pa}[2]{[#2]}
\newcommand{\px}[2]{\cdot #1#2}
\newcommand{\truc}[1]{[#1]}
\newcommand{\axiom}{\textsf{A}}
\newcommand{\refaxiom}[1]{$(\axiom_{\ref{#1}})$}
\newcommand{\axiomlog}{\textsf{F}}
\newcommand{\refaxiomlog}[1]{$(\axiomlog_{\ref{#1}})$}
\newcommand{\haut}{\uparrow}
\newcommand{\bas}{\downarrow}
\newcommand{\gche}{\leftarrow}
\newcommand{\dte}{\rightarrow}
\newcommand{\ul}[1]{\underline{#1}}



Let's come back to the normalization of the formula $\Phi$ of Equation~\refeq{eq:elim < et =}. For simplicity, assume that it has the restricted form $\Phi\equiv\ex \tu R \fa \tu x \phi(\tu x, \tu R,Q,\sigma)$, with only one relation symbol $Q$ of arity $d-1$. 

\medskip

We aim at defining a tuple of relations $(Q_{\al})_{\al\in\inject(d)}$, in such a way that $Q_\al(\tu x)=Q(\tu x_\al)$ for each $\tu x$. Clearly, such relations will allow to write $\Phi$ under the desired form, by replacing each subformula $Q(\tu x_\al)$ by the sorted translation: $Q_\al(\tu x)$. The difficulty is to express this definition with our syntactical restrictions, that is, without involving any $\tu x_{\al}$ with $\al\neq\id$. 
\medskip

Notice that the strategy used in Section~\ref{sec:sort R} to "sort" atomic subformulas $R(\tu x_{\al})$ build on any \emph{existentially quantified} $d$-ary relation $R$ is no more avalaible, since it means  suppressing $R$ in favour of some new existentially quantified relations. Of course, we can't operate like this with the \emph{input} relation $Q$. 
To give an hint of the method used in this section, let us consider an easy example. 

\paragraph{An easy case.} Consider the case where $d=2$. We deal with two first-order variables $x$ and $y$ and we only accept atoms of the form $Q(x)$, $R(x,y)$, $R(\suc(x),y)$ and $R(x,\suc(y))$ for any input unary relation $Q$ and any guessed binary relation $R$. How can we tackle occurences of some atom $Q(y)$ in the formula ? A natural idea is to define a new binary relation $Q_2$ in such a way that $Q_2(x,y)=Q(y)$ holds for any $x,y$. (We denote it $Q_{2}$ to refer both to $Q$ and to the projection of $(x,y)$ on its \emph{second} component.) Hence, we set: 
\[
Q_2=\{(x,y) : Q(y)\}.
\]
Thus, any atom $Q(y)$ could be replaced by the sorted atom $Q_2(x,y)$. But the logical definition of $Q_2$ with our syntactical constraints compells to introduce an additive binary relation $T$ that will be used as a buffer to \emph{transport} the information $Q(y)$ into the expression $Q_2(x,y)$. We set 
\[
T=\{(x,y) : Q(x+y)\}.
\]
Clearly, $T$ is inductively defined from $Q$ by the assertions $T(x,0)=Q(x)$ and $T(x+1,y)=T(x,y+1)$. Besides, $Q_2$ is defined from $T$ by $Q_2(0,y)=T(0,y)$ and $Q_2(x,y)=Q_2(x+1,y)$. All these assertions can be rephrazed in our logical framework, with the following formulas: 
\[
\begin{array}{c}
\fa x,y 
\left\{\begin{array}{rclr}
\min(y) & \imp & \left(T(x,y)\ssi Q(x)\right) \hspace{1ex} \et \\
\left(\neg\max(x)\et\neg\max(y)\right) & \imp & T(\suc(x),y)\ssi T(x,\suc(y))  
\end{array}\right\}
\end{array}
\]
\[
\fa x,y 
\left\{
\begin{array}{l}
\min(x)\imp \left(Q_2(x,y)\ssi T(x,y)\right) \hspace{1ex} \et \\ 
Q_2(x,y)\ssi Q_2(\suc(x),y)
\end{array}\right\}
\]
Now, it remains to insert this defining formulas in the inital formula $\Phi$ to be normalized, and to replace each occurence of $Q(y)$ by $Q_2(x,y)$. Of course, such a construction has to be carried on for each input unary relation $Q$. 

\paragraph{The general case.}

Given $i,j\in\{1,\dots,d\}$ and $\al\in\perm(d)$, we denote by $(ij)$ the transposition that exchanges $i$ and $j$, and by $\al(ij)$ the composition of $\al$ and $(ij)$. It is well-known that each permutation $\al$ can be written as a product of transpositions, $\al=(u_1v_1)(u_2v_2)\dots(u_pv_p)$. It is easily seen that this product can be chosen in such a way that $v_i=u_{i+1}$ for any $i$. This is because if some sequence $(ab)(cd)$ with $b\neq c$ occurs, it can be replaced by $(ca)(ab)(bc)(cd)$, and a well chosen iteration of such rewritings yields the desired decomposition. This can be further refined, by fixing at $d$ one element of the first transposition involved in the decomposition and by prohibiting useless sequence as $(ab)(ba)$. Finally, each $\al\in\perm(d)$ can be written 
\begin{equation}\label{eq:a=du1...uk} 
\al=(du_1)(u_1u_2)\dots(u_{k-2}u_{k-1})(u_{k-1}u_{k}), 
\end{equation}
where $u_{i}$, $u_{i+1}$ and $u_{i+2}$ are pairwise distinct elements of $\{1,\dots,d\}$ for any $i$. 
We call \textdef{alternated factorization of $\al$} such a decomposition. 
%

\medskip

A permutation $\al$ admits of several alternated factorizations, and we want to single out one of them for each $\al$, in order to allow an inductive reasoning build on this particular decomposition. There is no canonical way to perform this task. In the following lemma, we roughly describe one possible choice, that implicitely refers to the graph $\graphe_d$ on domain $\perm(d)$ whose edges correspond to those pairs of permutations $(\al,\beta)$ such that $\beta=\al(ij)$ for some $i,j\in\{1,\dots,d\}$.

\begin{lemma}\label{thm:choix d'une factorisation}
There exists an oriented tree \textdef{$\T_d$} covering $\perm(d)$ which is rooted at $\id$ and such that each $\T_d$-path starting at $\id$, say $\id\al_{1}\dots\al_{p}$, corresponds to an alternated factorization of $\al_{p}$. 
\end{lemma}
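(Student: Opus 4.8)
The plan is to realize $\T_d$ as the image, under the product map, of a breadth‑first search of a suitable auxiliary graph; the auxiliary structure exists solely to keep track of which transpositions are \emph{legal} continuations of a partially built alternated factorization.

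First I would record the elementary closure fact on which everything rests: \emph{every prefix of an alternated factorization is again an alternated factorization}, since both constraints in~\refeq{eq:a=du1...uk} — the first transposition moves $d$, and any three consecutive indices $u_i,u_{i+1},u_{i+2}$ are pairwise distinct — are inherited by truncation. Hence the prefix relation makes the set $\mathcal F_d$ of all alternated factorizations a rooted tree (root: the empty word), and the product map $\pi\colon\mathcal F_d\to\perm(d)$ is onto by the paragraph preceding the lemma. One cannot simply pick one factorization per permutation, because such a choice need not be prefix‑closed; so I pass to the auxiliary directed graph $\widetilde H_d$ whose vertices are $\id$ together with all triples $(\beta,a,b)$ with $\beta\in\perm(d)$ and $a\ne b$ in $\{1,\dots,d\}$, and whose edges are $\id\to\bigl((d\,u),d,u\bigr)$ for $u\ne d$, and $(\beta,a,b)\to\bigl(\beta\cdot(b\,w),b,w\bigr)$ for $w\notin\{a,b\}$. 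By construction, reading off the transpositions along any directed path of $\widetilde H_d$ issued from the root yields an alternated factorization whose product is recorded in the first component of the terminal vertex, and conversely each alternated factorization is read along exactly one such path. In particular the first projection maps the root‑reachable part of $\widetilde H_d$ \emph{onto} $\perm(d)$, sending root‑paths to alternated factorizations of their endpoints.

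Next I would run a BFS of $\widetilde H_d$ from $\id$ and select, for each permutation $\al\ne\id$, a distinguished reachable vertex $\rho(\al)$ lying over $\al$: scanning vertices in BFS order, keep $v$ as $\rho(\pi(v))$ whenever $\pi(v)$ has not yet been assigned a representative \emph{and} the BFS predecessor of $v$ has already been selected (a fixed total order on vertices breaks ties; put $\rho(\id)=\id$). By this very rule $\{\rho(\al):\al\in\perm(d)\}$ is a rooted subtree of the BFS tree of $\widetilde H_d$ on which the first projection restricts to a bijection onto $\perm(d)$; transporting this subtree along the projection produces an oriented tree $\T_d$ with vertex set $\perm(d)$, rooted at $\id$, each of whose edges joins $\beta$ to some $\beta\tau$ with $\tau$ a transposition, and each of whose root‑paths is — because $\widetilde H_d$‑edges multiply by transpositions obeying the alternation constraint — an alternated factorization of its endpoint. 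That is precisely the assertion of the lemma (for $d=2$ it is trivial, with $\T_2\colon\id\to(1\,2)$).

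The step I expect to be the main obstacle is the \emph{covering} property: that $\rho(\al)$ actually gets defined for \emph{every} $\al$. I would prove it by minimal counterexample. If not, choose a reachable vertex $v$ over an unassigned permutation $\al$ of least BFS depth; its predecessor $u=(\beta,a,b)$ lies over a permutation $\beta$ that \emph{is} assigned (by minimality), and $\al=\beta\cdot(b\,w)$ with $w\notin\{a,b\}$. Writing $\rho(\beta)=(\beta,a',b')$, a short inspection of the cases $b'\in\{b,w\}$ shows that in all but two configurations of $(a',b')$ the vertex $\rho(\beta)$ has an outgoing $\widetilde H_d$‑edge to a vertex over $\al$, which forces $\rho(\al)$ to be defined — a contradiction. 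The remaining work, and the only genuinely combinatorial computation, is to dispose of those two blocked configurations of $\rho(\beta)$; there one invokes a small auxiliary lemma, valid for $d\ge 3$, describing which permutations $\beta\tau$ are joined to $\beta$ inside $\widetilde H_d$ in terms of non‑backtracking walks in the complete graph on $\{1,\dots,d\}$, to exhibit an alternative BFS branch through which $\al$ is reached. Once this last verification is in place the construction goes through and the lemma follows.
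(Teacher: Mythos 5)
There is a genuine gap, and it is exactly the one you flag yourself: the covering property of your selection rule is never established, and the argument you sketch for it does not close. Your greedy BFS selection defines $\rho(\al)$ only when some reachable vertex of $\widetilde H_d$ lying over $\al$ is scanned at a moment where its BFS predecessor has already been selected; nothing in the construction guarantees that this ever happens for every $\al$. In the minimal-counterexample argument, the predecessor $u=(\beta,a,b)$ of a least-depth vertex over $\al$ shows only that $\beta$ is \emph{assigned}, but $\rho(\beta)$ is in general a different vertex $(\beta,a',b')$ over $\beta$ (possibly at a different depth), and $\rho(\beta)$ has an outgoing edge to a vertex over $\al=\beta\cdot(b\,w)$ only if $b'\in\{b,w\}$ and the complementary index avoids $a'$. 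The case $b'\notin\{b,w\}$, in which $\rho(\beta)$ has \emph{no} outgoing edge to any vertex over $\al$ whatsoever, is not covered by your ``two blocked configurations of $(a',b')$''; and even for the configurations you do acknowledge, the ``small auxiliary lemma'' on non-backtracking walks that is supposed to exhibit an alternative branch is precisely the combinatorial heart of the statement being proved, so deferring it leaves the proof incomplete rather than merely terse. In short: you have correctly identified that the difficulty is choosing one alternated factorization per permutation in a prefix-closed way, but the mechanism you propose for making that choice is not shown to succeed.

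For comparison, the paper avoids the selection problem altogether by an explicit recursion on $d$: assuming $\T_{d-1}$ is built, it is re-indexed so as to span exactly the $(d-1)!$ permutations $\beta\in\perm(d)$ with first component $1$; below each such $\beta$ one hangs the node $\beta(1d)$, and below each $\beta(1d)$ the $d-2$ nodes $\beta(1d)(di)$ for $i=2,\dots,d-1$. Every root-path is alternated by construction (the first new transposition moves $d$ into first position, the second moves it out again to a fresh index), and coverage is immediate by counting: $(d-1)!+(d-1)!+(d-1)!(d-2)=d!$ nodes, all distinct. If you want to keep your BFS framework, the honest fix is to replace the greedy representative selection by an explicit rule whose coverage can be verified directly --- at which point you will essentially be rediscovering the paper's recursive construction.
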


\begin{proof}
Trees $\T_d$ for $d\geq 2$ are defined inductively. For $d=2$, there is a unique such tree: $(12)\imp (21)=(12)(21)$. So, assume we are given $\T_{d-1}$ and carry on the construction of $\T_{d}$ as follows: 
\begin{enumerate}[$(a)$]
\item  First, view each permutation $\al=\al_{1}\dots\al_{d-1}\in\perm(d-1)$ as a permutation of $\{2,\dots,d\}$ by renaming $\al_{i}$ as $\al_{i}+1$. That is, replace $\al=\al_{1}\dots\al_{d-1}$ by $\al^+=(\al_{1}+1)\dots(\al_{d-1}+1)$. 
\item  Then, replace each such $\al^+$ by $\beta=\beta_{1}\dots\beta_{d}\in\perm(d)$ defined by: $\beta_{1}=1$ and $\beta_{i}=[\al^+]_{i-1}=\al_{i-1}+1$ for $i>1$. Thus, $\T_{d-1}$ now covers the set of permutations  $\beta\in\perm(d)$ that fulfill $[\beta]_{1}=1$. 
\item  For each node $\beta$ thus obtained, create a new node labelled by the composition of $\beta$ with the transposition $(1d)$ -- that is by the permutation $\beta(1d)$ -- and create an edge $\beta\dte\beta(1d)$. 
\item  Finally, link each such node $\beta(1d)$ to $d-2$ new nodes $\beta(1d)(di)$, for $i=2,\dots,d-1$.
\end{enumerate}
In Fig~\ref{fig:bonnefact}, we display the steps of the construction of $\T_4$ from $\T_2$. 
Letters $(a)$, \dots, $(d)$ in the figure refer to the above items.  
The correction of the method on this example is clear. We leave it to the reader to verify that it generalises to any $d$. 
\begin{figure}[h]
\begin{center}
\includegraphics[height=7cm]{\pathfig/bonnefact}
\caption{\small Construction of $\T_{4}$ from $\T_{2}$. The result is an oriented tree rooted at $\id$, spanning $\perm(4)$, whose all pathes from $\id$ are alternated.}
\label{fig:bonnefact}
\end{center}
\end{figure}
\end{proof}

This lemma allows us to choose, for each $\al\in\perm(d)$, \emph{one} alternated factorization of $\al$: it is the decomposition $(di_1)(i_1i_2)\dots (i_{k-1}i_{k})$ corresponding to the unique path from $\id$ to $\al$ in $\T_d$. We denote by \textdef{$\id.di_1.i_1i_2.\,\dots\,.i_{k-1}i_{k}$} this particular factorization. And when this path until $\al$ can be continued \emph{in $\T_d$} to some permutation $\al(i_ki_{k+1})$, we denote by \textdef{$\al.i_ki_{k+1}$} this last permutation. 
For instance, in the example displayed in Fig~\ref{fig:bonnefact}, we can write $2143=4123.13$ and $3124=4123.14$ while $4321=4123(24)$ \emph{cannot be written} $4123.24$. 
Notice furthermore that the integers $i_k$ and $i_{k+1}$ are ordered in the notation $\al.i_ki_{k+1}$ (unlike in the notation $\al(i_ki_{k+1})$): we place in first position the integer $i_{k}$ involved in the last  transposition leading to $\al$ (with $i_{k}=d$ if $\al=\id$). 
All in all, the reader is invited to keep in mind the numerous presuppositions attached to the notation $\al.uv$: the statement $\beta=\al.uv$ means: 
\begin{itemize}
\item  $\beta=\al(uv)$ ; 
\item  $(\al,\beta)$ is an edge of $\T_d$ ; 
\item  either $\al=\id$ and $u=d$, or $\al=\gamma.tu$ for some $\gamma\in\perm(d)$ and some $t\neq u$ in $\{1,\dots,d\}$. 
\end{itemize}


\medskip
Let us now come to a straightforward remark connecting $d$-tuples to alternated factorizations. First recall that for $\tu x=(x_1,\dots,x_d)$, $i\in\{1,\dots,d\}$ and $\al,\beta\in\perm(d)$, we denoted by $[\tu x]_{i}$ the $i^{\text{th}}$ component of $\tu x$, we defined $\tu x_{\al}$ as the $d$-tuple $(x_{\al_1},\dots,x_{\al_d})$, and we noticed that $(\tu x_{\al})_{\beta}=\tu x_{\al\beta}$. (See Definition~\ref{def:tuples et permutations}.) 

\begin{fact}\label{thm:a propos de x.ij}
For any $\tu x\in\dom{n}^d$, $\al\in\perm(d)$, and $i,j\in\{1,\dots,d\}$: 
\begin{enumerate}[$(a)$]
\item\label{item:x.i...j.jk=x.i...k}  $\tu x_{\al.ij}=(\tu x_{\al})_{(ij)}$.
\item\label{item:(x.i...j)_j=(x)_i}  $[\tu x_{\al.ij}]_{j}=[\tu x]_{d}$. 
\end{enumerate}
\end{fact}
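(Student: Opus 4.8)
The plan is to handle the two items separately: item~(a) is a pure notational identity and needs nothing beyond unfolding definitions, whereas item~(b) reduces to a combinatorial claim about permutations that I would settle by a one-line induction on factorization length.

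For item~(a): by definition, $\al.ij$ is merely a name for the permutation $\al(ij)=\al\circ(ij)\in\perm(d)$; the extra presuppositions carried by that notation (that $(\al,\al.ij)$ is an edge of $\T_d$, and that $i$ is the second index of the last transposition leading to $\al$, or $\al=\id$ and $i=d$) are irrelevant here. Hence $\tu x_{\al.ij}=\tu x_{\al(ij)}$, and the identity $(\tu x_{\al})_{\beta}=\tu x_{\al\beta}$ of Definition~\ref{def:tuples et permutations}, applied with $\beta=(ij)$, gives $\tu x_{\al(ij)}=(\tu x_{\al})_{(ij)}$; chaining the two equalities yields item~(a).

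For item~(b): I would first reduce it to a statement about $\al$ alone. Using item~(a) together with $[\tu y_{(ij)}]_{j}=[\tu y]_{(ij)(j)}=[\tu y]_{i}$ (true for every $d$-tuple $\tu y$, since $(ij)$ sends $j$ to $i$), one gets $[\tu x_{\al.ij}]_{j}=[(\tu x_{\al})_{(ij)}]_{j}=[\tu x_{\al}]_{i}=[\tu x]_{\al(i)}$, so it is enough to prove that $\al(i)=d$ whenever the notation $\al.ij$ is legal. By the presuppositions of that notation, either $\al=\id$ and $i=d$, in which case $\al(i)=\id(d)=d$ trivially, or $\al=\gamma.ti$ for some $t\neq i$ and $\gamma\in\perm(d)$. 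In the second case I would induct on the length $k$ of the alternated factorization $\id.di_1.i_1i_2.\cdots.i_{k-1}i_k$ of $\al$ (so $i_k=i$ and $i_{k-1}=t$): the base $k=1$ is $\al=(di)$ (so $t=d$) and $\al(i)=(di)(i)=d$; for $k>1$, write $\al=\gamma\circ(ti)$ where $\gamma$ is the permutation given by the length-$(k-1)$ prefix of the factorization, note that $(ti)$ carries $i$ to $t$, so $\al(i)=\gamma(t)$, and conclude $\gamma(t)=d$ by the induction hypothesis applied to $\gamma$ (whose last transposition has $t$ as its second index). Combined with the reduction above this gives $[\tu x_{\al.ij}]_{j}=[\tu x]_{d}$. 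Equivalently, one can skip the induction and simply observe that applying the transpositions of the factorization of $\al$ to $i$ from right to left produces the chain $i=i_k\mapsto i_{k-1}\mapsto\cdots\mapsto i_1\mapsto d$, the pairwise-distinctness of consecutive indices guaranteeing that each transposition hands the current value on to the next index.

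The only genuinely delicate point --- and the place where a convention slip would spoil everything --- is the bookkeeping attached to the symbol $\al.ij$: that its first slot records the integer from the \emph{last} transposition leading to $\al$, that $(\al,\al.ij)$ is a $\T_d$-edge (so that $\al$ is either $\id$ or of the form $\gamma.ti$), and the composition-order convention $\al\beta=\al\circ\beta$ inherited from Definition~\ref{def:tuples et permutations}. Once these are pinned down, item~(a) is pure notation and item~(b) is a two-line induction (or a telescoping computation), so no hard argument is involved.
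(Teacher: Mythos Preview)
Your proof is correct and follows essentially the same approach as the paper: item~(a) is unfolded from $\al.ij=\al(ij)$ and the identity $(\tu x_{\al})_{\beta}=\tu x_{\al\beta}$, and item~(b) is handled by induction along the $\T_d$-path to $\al$. The only cosmetic difference is that you first isolate the purely combinatorial claim $\al(i)=d$ and induct on that, whereas the paper inducts directly on the tuple identity $[\tu x_{\al}]_{i}=[\tu x]_{d}$; these are the same argument up to the trivial rewriting $[\tu x_{\al}]_{i}=[\tu x]_{\al(i)}$.
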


\begin{proof}
\refitem{item:x.i...j.jk=x.i...k} From $\al.ij=\al(ij)$ and $\tu x_{\al\beta}=(\tu x_\al)_\beta$. 
\refitem{item:(x.i...j)_j=(x)_i} By induction on $\al$:  if $\al=\id$, then necessarily $i=d$ and $[\tu x_{\id.dj}]_{j}=[\tu x]_{d}$ clearly holds. Otherwise, $[\tu x_{\al.ij}]_{j}=[(\tu x_{\al})_{ij}]_{j}=[\tu x_{\al}]_{i}=[\tu x]_{d}$, by induction hypothesis. 
\end{proof}

Given a $d$-tuple $\tu x=(x_1,\dots,x_d)$ of first-order variables, we denote by $\tu x^-$ the $(d-1)$-tuple obtained from $\tu x$ by erasing its last component. That is, $$(x_1,\dots,x_{d-1},x_d)^-\egaldef(x_1,\dots,x_{d-1}).$$ In particular, for $\al\in\perm(d)$, we denote by $\tu x_{\al}^-$ the $(d-1)$-tuple $(x_{\al_1},\dots,x_{\al_{d-1}})$. Each $(d-1)$-tuple build upon the $d$ variables $x_1,\dots,x_d$ can clearly be written $\tu x_{\al}^-$ for some $\al\in\perm(d)$. 
Therefore, the occurence of a non-sorted atom in $\Phi$ has the form $Q(\tu x_{\al}^-)$ for some $Q\in (Q_a)_{a\in\Sg}$ and some $\al\in\perm(d)$, and the purpose of this section amounts to rewrite each such occurence $Q(\tu x_{\al}^-)$ as $Q'(\tu x)$ for some well chosen relation $Q'$. 

\begin{definition}\label{def:simulation}
Given a $(d-1)$-ary relation $Q$ and two family of $d$-ary relations, $(T_{\al})_{\al\in\perm(d)}$ and $(Q_{\al})_{\al\in\perm(d)}$, we say that $(T_{\al},Q_{\al})_{\al\in\perm(d)}$ is a \textdef{$d$-simulation of $Q$} if the following axioms hold, for any $\al\in\perm(d)$ and any $i,j\leq d$ such that $\al.ij$ is defined, and for any $d$-tuple $\tu x$ of variables: 
\begin{enumerate}[$(\axiom_1)$]
\item\label{item:Tid=Q}  $T_{\id}(\tu x)=Q(\tu x^-)$ if $[\tu x]_{d}=0$.
\item\label{item:Tij(x)=Tij(xij)}  $T_{\al.ij}(\tu x)=T_{\al.ij}(\tu x_{(ij)})$.
\item\label{item:T(x)=Tij(x)}  $T_{\al}(\tu x)=T_{\al.ij}(\tu x)$ if $[\tu x]_{i}=0$. 
\item\label{item:Qid=Tid}   $Q_{\id}(\tu x) = Q(\tu x^-)$.
\item\label{item:Ta(x)=Qa(x)}   $Q_{\al.ij}(\tu x) = T_{\al.ij}(\tu x)$ if $[\tu x]_j=0$.
\item\label{item:Qa indep de xi}  $Q_{\al.ij}(\tu x)$ doesn't depend on $[\tu x]_j$. 
\end{enumerate}
\end{definition}

\begin{lemma}\label{thm:Qa(x)=Q(xa)}
Let $(T_{\al},Q_{\al})_{\al\in\perm(d)}$ be a $d$-simulation of some $(d-1)$-ary predicate $Q$. For any $\tu x\in\dom n^d$ and $\al\in\perm(d)$: \, $Q_{\al}(\tu x_{\al})=Q(\tu x^-).$ 
\end{lemma}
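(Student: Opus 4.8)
The statement is $Q_{\al}(\tu x_{\al}) = Q(\tu x^-)$ for every $\al \in \perm(d)$, where $(T_\al, Q_\al)_{\al\in\perm(d)}$ is a $d$-simulation of $Q$. I would prove this by induction on the length of the alternated factorization of $\al$ in the tree $\T_d$, i.e.\ on the depth of $\al$ in $\T_d$. The base case $\al = \id$ is immediate from axiom \refaxiom{item:Qid=Tid}: for any $\tu x$, $Q_{\id}(\tu x_{\id}) = Q_{\id}(\tu x) = Q(\tu x^-)$.

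\emph{Induction step.} Suppose $\beta = \al.ij$ for some $\al$ with a strictly shorter factorization, and assume the result for $\al$, namely $Q_{\al}(\tu y_{\al}) = Q(\tu y^-)$ for all $\tu y$. Fix an arbitrary $\tu x$ and set $\tu x' = \tu x^{(j\mapsto 0)}$, the tuple obtained from $\tu x$ by setting its $j$-th component to $0$ (the statement is really about the "generic" value; here I follow the paper's convention that $0$ is the relevant boundary marker, so more precisely one argues componentwise over the role of the $j$-th coordinate). Since $Q_{\beta}(\tu x)$ does not depend on $[\tu x]_j$ by \refaxiom{item:Qa indep de xi}, we have $Q_{\beta}(\tu x_{\beta}) = Q_{\beta}(\tu y)$ where $\tu y$ is $\tu x_\beta$ with its $j$-th component zeroed. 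Now $[\tu y]_j = 0$, so by \refaxiom{item:Ta(x)=Qa(x)}, $Q_{\beta}(\tu y) = T_{\beta}(\tu y)$. Next I would use \refaxiom{item:Tij(x)=Tij(xij)} to rewrite $T_{\beta}(\tu y) = T_{\beta}(\tu y_{(ij)})$; by Fact~\ref{thm:a propos de x.ij}\refitem{item:(x.i...j)_j=(x)_i}, the coordinate of $\tu x_\beta = \tu x_{\al.ij}$ in position $j$ equals $[\tu x]_d$, so after applying the transposition $(ij)$ the $i$-th component of $\tu y_{(ij)}$ becomes $0$. Then \refaxiom{item:T(x)=Tij(x)} gives $T_{\beta}(\tu y_{(ij)}) = T_{\al}(\tu y_{(ij)})$ (reading that axiom backwards, with $[\,\cdot\,]_i = 0$). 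Finally, $\tu y_{(ij)}$ is, up to the zeroed coordinate, equal to $\tu x_\al$ (using $\tu x_{\al.ij} = (\tu x_\al)_{(ij)}$ from Fact~\ref{thm:a propos de x.ij}\refitem{item:x.i...j.jk=x.i...k}), so applying the induction hypothesis in the form $Q_\al(\tu x_\al) = Q(\tu x^-)$ — after reconciling $T_\al$ with $Q_\al$ via \refaxiom{item:Ta(x)=Qa(x)} / \refaxiom{item:Tij(x)=Tij(xij)} one more time, or by carrying a companion invariant $T_\al(\tu x_\al) = Q(\tu x^-)$ whenever the relevant coordinate is $0$ — yields $Q_{\beta}(\tu x_\beta) = Q(\tu x^-)$, as desired.

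\emph{Bookkeeping.} To make the chain of rewrites clean it is better to prove simultaneously, by the same induction on depth in $\T_d$, the pair of statements: (i) $Q_\al(\tu x_\al) = Q(\tu x^-)$ for all $\tu x$, and (ii) $T_\al(\tu x_\al) = Q(\tu x^-)$ whenever the coordinate of $\tu x_\al$ "responsible" for $\al$ (position $i$ when $\al = \gamma.ti$, or position $d$ when $\al = \id$) is $0$. Statement (ii) at level $\al$ feeds the rewrite of $T_\beta$ into $T_\al$ at level $\beta = \al.ij$ via \refaxiom{item:T(x)=Tij(x)}, and \refaxiom{item:Tid=Q} together with \refaxiom{item:Qid=Tid} anchors both at $\id$. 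The permutation identities $(\tu x_\al)_\beta = \tu x_{\al\beta}$ and the two parts of Fact~\ref{thm:a propos de x.ij} are exactly what is needed to track which coordinate is zeroed at each step.

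\emph{Main obstacle.} The delicate point is not any single axiom application but keeping straight the position-tracking: at step $\al.ij$ one must know precisely which component of $\tu x_{\al.ij}$ is the "free" one (position $j$, which by Fact~\ref{thm:a propos de x.ij}\refitem{item:(x.i...j)_j=(x)_i} carries $[\tu x]_d$) and which becomes $0$ after the transposition $(ij)$ so that \refaxiom{item:T(x)=Tij(x)} applies. Getting the indices exactly right in this induction — and being careful that the "does not depend on $[\tu x]_j$" clause of \refaxiom{item:Qa indep de xi} is used at the right moment to reduce to the boundary case — is where the proof must be written with care; once the companion invariant (ii) is set up, the algebra is a routine unwinding of the six axioms.
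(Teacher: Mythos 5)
Your proposal is correct and follows essentially the same route as the paper: the companion invariant (ii) you introduce — $T_{\al}(\tu x_{\al})=Q(\tu x^-)$ when the coordinate carrying $[\tu x]_d$ is at the boundary — is exactly the auxiliary claim the paper proves first by induction along $\T_d$ (using \refaxiom{item:Tid=Q}, \refaxiom{item:Tij(x)=Tij(xij)}, \refaxiom{item:T(x)=Tij(x)} and Fact~\ref{thm:a propos de x.ij}), after which the $Q_{\al}$ statement is deduced via \refaxiom{item:Qa indep de xi} and \refaxiom{item:Ta(x)=Qa(x)} just as you describe. The only cosmetic difference is that the paper proves the $T$-invariant as a standalone induction and then derives the $Q$-statement non-inductively, rather than bundling the two into one simultaneous induction.
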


\begin{proof}
Let us first prove that for any $\tu x\in\dom n^d$ and $\al\in\perm(d)$: 
\begin{equation}\label{thm:T(x)=Ta(xa)} 
[\tu x]_d=0\Imp T_{\al}(\tu x_{\al})=Q(\tu x^-). 
\end{equation}

We procced by induction on $\al$. If $\al=\id$, \refeq{thm:T(x)=Ta(xa)} follows from~\refaxiom{item:Tid=Q}. Given a non-identique permutation $\al.ij$, we have: 
\begin{eqnarray*}
T_{\al.ij}(\tu x_{\al.ij}) 
 & = & T_{\al.ij}((\tu x_{\al})_{(ij)}) \ \text{ by Fact~\ref{thm:a propos de x.ij}-(\ref{item:x.i...j.jk=x.i...k}). } \\
 & = & T_{\al.ij}(\tu x_{\al})  \ \text{ by \refaxiom{item:Tij(x)=Tij(xij)}. } 
\end{eqnarray*}
But $[\tu x_{\al}]_{i}=[(\tu x_{\al})_{(ij)}]_{j}=[\tu x_{\al.ij}]_{j}$ and hence, by Fact~\ref{thm:a propos de x.ij}-(\ref{item:(x.i...j)_j=(x)_i}): $[\tu x_{\al}]_{i}=[\tu x]_{d}=0$. Therefore we can resume the above sequence of equalities with: 
\begin{eqnarray*}
T_{\al.ij}(\tu x_{\al.ij}) 
 & = & T_{\al}(\tu x_{\al})  \ \text{ by \refaxiom{item:T(x)=Tij(x)} since $[\tu x_{\al}]_{i}=0$. } \\
 & = & Q(\tu x^-)  \ \text{ by induction hypothesis. }
\end{eqnarray*}
This completes the proof of \refeq{thm:T(x)=Ta(xa)}

\medskip

Let us now prove the equality $Q_{\al}(\tu x_{\al})=Q(\tu x^-)$. If $\al=\id$, the result comes from~\refaxiom{item:Qid=Tid}. 
For a non-identique permutation $\al.ij$, we have to prove $Q_{\al.ij}(\tu x_{\al.ij})=Q(\tu x^-)$ for any tuple $\tu x\in\dom n^d$. First notice that we can restrict, without loss of generality, to the case where $[\tu x]_d=0$. Indeed, we can otherwise consider the tuple $\tu y$ obtained from $\tu x$ by setting $[\tu x]_{d}$ to $0$. (That is, $\tu y$ only differs from $\tu x$ by its $d^{th}$ component, which is null.) 
Clearly, $\tu y^-=\tu x^-$. 
Besides, the $j^{th}$ component of $\tu x_{\al.ij}$ is $[\tu x]_d$, from Fact~\ref{thm:a propos de x.ij}-(\ref{item:(x.i...j)_j=(x)_i}). Similarly, the $j^{th}$ component of $\tu y_{\al.ij}$ is $[\tu y]_d$. Hence, the tuples $\tu x_{\al.ij}$ and $\tu y_{\al.ij}$ coincide on each component of rank distinct from $j$. Therefore $Q_{\al.ij}(\tu x_{\al.ij})=Q_{\al.ij}(\tu y_{\al.ij})$ by~\refaxiom{item:Qa indep de xi} and we get: $Q_{\al.ij}(\tu x_{\al.ij})=Q(\tu x^-)$ iff $Q_{\al.ij}(\tu y_{\al.ij})=Q(\tu y^-)$. 
Thus, we can assume $[\tu x]_d=0$. It follows $[\tu x_{\al.ij}]_{j}=[\tu x]_{d}=0$, by Fact~\ref{thm:a propos de x.ij}-(\ref{item:(x.i...j)_j=(x)_i}), and hence:
\begin{eqnarray*}
Q_{\al.ij}(\tu x_{\al.ij}) 
& = & T_{\al.ij}(\tu x_{\al.ij})  \text{ by \refaxiom{item:Ta(x)=Qa(x)}, since $[\tu x_{\al.ij}]_{j}=0$. } \\
& = & Q(\tu x^{-}) \text{ by ~\refeq{thm:T(x)=Ta(xa)}, since $[\tu x]_d=0$. } 
\end{eqnarray*}
The proof is complete. 
\end{proof}

\begin{lemma}
Let $Q$ be a $(d-1)$-ary relation and $(T_{\al})_{\al\in\perm(d)}$, $(Q_{\al})_{\al\in\perm(d)}$ be two tuple of $d$-ary relations satisfying, for each $d$-tuple $\tu x=(x_1,\dots,x_d)$:
\begin{enumerate}[$(\axiomlog_1)$]
\item\label{item:log:Tid=Q}  $\min(x_d)\imp \left(T_{\id}(\tu x)\ssi Q(\tu x^-)\right)$.
\item\label{item:log:Tij(x)=Tij(xij)}  $\left(\neg\max(x_i)\et\neg\max(x_j)\right)\imp \left(T_{\al.ij}(\tu x^{(i)})\ssi T_{\al.ij}(\tu x^{(j)})\right)$.
\item\label{item:log:T(x)=Tij(x)}  $\min(x_i)\imp \left(T_{\al}(\tu x)\ssi T_{\al.ij}(\tu x)\right)$.
\item\label{item:log:Qid=Tid}   $Q_{\id}(\tu x) \ssi Q(\tu x^-)$.
\item\label{item:log:Ta(x)=Qa(x)}  $\min(x_j)\imp \left(Q_{\al.ij}(\tu x)\ssi T_{\al.ij}(\tu x)\right)$.
\item\label{item:log:Qa indep de xi}  $Q_{\al.ij}(\tu x)\ssi Q_{\al.ij}(\tu x^{(j)})$. 
\end{enumerate}
Then $(T_{\al},Q_{\al})_{\al\in\perm(d)}$ is a $d$-simulation of $Q$. Furthermore, each $Q$ admits such a $d$-simulation fulfilling \refaxiomlog{item:log:Tid=Q}\dots\refaxiomlog{item:log:Qa indep de xi}
\end{lemma}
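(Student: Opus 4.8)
The statement has two parts, which I would establish in turn. First, that the logical axioms $(\axiomlog_1)$--$(\axiomlog_6)$ imply the set-theoretic axioms $(\axiom_1)$--$(\axiom_6)$ of Definition~\ref{def:simulation}, so that $(T_{\al},Q_{\al})_{\al\in\perm(d)}$ is a $d$-simulation of $Q$. Second, that for every $(d-1)$-ary relation $Q$ there exist relations $(T_{\al},Q_{\al})_{\al\in\perm(d)}$ satisfying $(\axiomlog_1)$--$(\axiomlog_6)$; by the first part these then form a $d$-simulation of $Q$, which settles the ``furthermore''.

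For the first part I would observe that, once the conditions ``$[\tu x]_d=0$'', ``$[\tu x]_i=0$'' and ``$[\tu x]_j=0$'' are read as ``$\min(x_d)$'', ``$\min(x_i)$'' and ``$\min(x_j)$'', the axioms $(\axiom_1)$, $(\axiom_3)$, $(\axiom_4)$, $(\axiom_5)$ \emph{coincide} with $(\axiomlog_1)$, $(\axiomlog_3)$, $(\axiomlog_4)$, $(\axiomlog_5)$, so there is nothing to do for them. Axiom $(\axiom_6)$ would follow from $(\axiomlog_6)$ by a one-dimensional sweep: fixing every component of $\tu x$ but the $j$-th, $(\axiomlog_6)$ asserts that applying the cyclic successor to that component does not change $Q_{\al.ij}(\tu x)$, so all $n$ values around the cycle coincide, i.e.\ $Q_{\al.ij}$ does not depend on its $j$-th argument. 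For $(\axiom_2)$ I would fix every component of $\tu x$ but the $i$-th and the $j$-th, write $g(u,v)$ for the value of $T_{\al.ij}$ on the resulting tuple carrying $u$ in position $i$ and $v$ in position $j$, and note that $(\axiomlog_2)$ gives $g(u+1,v)\ssi g(u,v+1)$ when $u,v<n$, equivalently $g(u,v)\ssi g(u+1,v-1)$ whenever $u<n$ and $v\geq 2$. Since any two adjacent points of the anti-diagonal $u+v=c$ lying in $\{1,\dots,n\}^2$ satisfy exactly these side conditions, $g$ is constant on that anti-diagonal; in particular $g(a,b)=g(b,a)$, which is $(\axiom_2)$.

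For the second part I would construct the relations by recursion along the oriented tree $\T_d$ spanning $\perm(d)$ given by Lemma~\ref{thm:choix d'une factorisation}, using that each $\al\neq\id$ has a unique parent in $\T_d$ together with a unique pair $(i,j)$ with $\al$ obtained from that parent by the step $ij$, so that the recursion is well founded. I would set $Q_{\id}:=\{\tu x : Q(\tu x^-)\}$, and take $T_{\id}$ to agree with $Q_{\id}$ on the tuples whose $d$-th component is minimal and to be empty elsewhere; then $(\axiomlog_1)$ and $(\axiomlog_4)$ hold. For a node $\al.ij$ of $\T_d$, with $T_{\al}$ already defined, I would put
\[
T_{\al.ij}(\tu x)=\left\{\begin{array}{ll} T_{\al}(\tu y) & \text{if } [\tu x]_i+[\tu x]_j\leq n+1,\\ \faux & \text{otherwise,}\end{array}\right.
\]
where $\tu y$ is $\tu x$ with its $i$-th component lowered to the minimum and its $j$-th component set to the element $[\tu x]_i+[\tu x]_j-1$ of $\{1,\dots,n\}$ (legitimate in the first case, that integer being then $\leq n$), and then $Q_{\al.ij}(\tu x):=T_{\al.ij}(\tu z)$, where $\tu z$ is $\tu x$ with its $j$-th component lowered to the minimum. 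The verification is then routine. When neither $[\tu x]_i$ nor $[\tu x]_j$ is maximal, $\tu x^{(i)}$ and $\tu x^{(j)}$ carry the same sum $[\tu x]_i+[\tu x]_j+1$ in positions $i,j$, so $T_{\al.ij}$ is given by the same clause on both and, in the first clause, reduces on both to $T_{\al}$ at the \emph{same} tuple -- whence $(\axiomlog_2)$. When $[\tu x]_i$ is minimal, $[\tu x]_i+[\tu x]_j\leq n+1$ and $\tu y=\tu x$, whence $(\axiomlog_3)$. By construction $Q_{\al.ij}$ ignores its $j$-th argument, whence $(\axiomlog_6)$. And when $[\tu x]_j$ is minimal, $\tu z=\tu x$, whence $(\axiomlog_5)$.

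The only place I expect to require genuine care is the ``corner'' $[\tu x]_i+[\tu x]_j>n+1$ of the construction: the natural propagation of $T_{\al.ij}$ from its defining boundary -- the slice where $[\tu x]_i$ is minimal -- along anti-diagonals never reaches that region, so a value must be fixed there and, to keep $(\axiomlog_2)$ true, it has to be constant on anti-diagonals; taking $\faux$ works precisely because $\tu x^{(i)}$ and $\tu x^{(j)}$ always lie in that region together. The rest is bookkeeping with the tree $\T_d$ and with the $\min/\max$ boundary of the coordinate domain.
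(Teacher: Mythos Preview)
Your argument is correct. For the first part --- that $(\axiomlog_1)$--$(\axiomlog_6)$ imply $(\axiom_1)$--$(\axiom_6)$ --- you follow essentially the paper's proof: the anti-diagonal reasoning for $(\axiom_2)$ is identical, and you are in fact more careful than the paper on $(\axiom_6)$, where the paper simply declares $(\axiomlog_6)$ a ``mere transcription'' while you actually justify why invariance under the cyclic successor on one coordinate forces independence from that coordinate.

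For the existence part your route is genuinely different. The paper gives a closed-form construction: it sets $Q_{\al}(\tu x)=Q(\tu x_{\al^{-1}}^{-})$ directly from the intended semantics (the content of Lemma~\ref{thm:Qa(x)=Q(xa)}), and then defines $T_{\al.ij}$ from $Q_{\al.ij}$ by $T_{\al.ij}(\tu u,x,\tu v,y,\tu w)=Q_{\al.ij}(\tu u,x+y,\tu v,0,\tu w)$, leaving the verification of $(\axiomlog_1)$--$(\axiomlog_6)$ to the reader. You instead build the family by recursion along the tree $\T_d$, propagating $T_{\al.ij}$ from $T_{\al}$ along anti-diagonals and then reading off $Q_{\al.ij}$. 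The paper's approach is shorter and makes the link with the target semantics transparent, but it is cavalier about the overflow $x+y>n$ (exactly your ``corner''). Your inductive construction is more self-contained --- it never appeals to what $Q_{\al}$ is supposed to compute --- and handles that corner explicitly by setting $T_{\al.ij}$ to $\faux$ there, which is harmless since those tuples are never reached from the boundary $\min(x_i)$ and, as you note, $\tu x^{(i)}$ and $\tu x^{(j)}$ always land in the corner together. Either construction does the job.
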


\begin{proof}
Clearly, the formula~$(\axiomlog_{i})$ is a mere transcription of the axiom~$(\axiom_{i})$ for each $i\neq 2$. We have to prove that \refaxiomlog{item:log:Tij(x)=Tij(xij)} implies \refaxiom{item:Tij(x)=Tij(xij)}. 
Formula~\refaxiomlog{item:log:Tij(x)=Tij(xij)} yields that $T_{\al.ij}(\tu x)$ has the same value on tuples of the form 
\[\tu x = (\tu u,x+c,\tu v,y-c,\tu w),\] 
for any $c\leq\min\{n-1-x,y\}$, where $x+c$ (resp. $y-c$) is the component of rank $i$ (resp. $j$) of $\tu x$. (That is: $\tu u\in\dom n^{i-1}$, $\tu v\in\dom n^{j-i-1}$ and $\tu w\in\dom n^{d-j}$.) 
In other words, the value of $T_{\al.ij}(\tu x)$ depends on $[\tu x]_i+[\tu x]_j$ rather than on the precise values of these two components. 
As a consequence, for any $\tu u\in\dom n^{i-1}$, $\tu v\in\dom n^{j-i-1}$, $\tu w\in\dom n^{d-j}$ and any $x,y\in\dom n$: 
\[T_{\al.ij}(\tu u,x,\tu v,y,\tu w)=T_{\al.ij}(\tu u,y,\tu v,x,\tu w).\] This is Axiom~\refaxiom{item:Tij(x)=Tij(xij)}.

\medskip

It remains to prove that such relations $(T_{\al})_{\al\in\perm(d)}$ and $(Q_{\al})_{\al\in\perm(d)}$ exist for every $Q$. To see this, assume we are given a $(d-1)$-ary relation $Q$ and define the $Q_{\al}$'s and the $T_{\al}$'s as follows: 
\begin{itemize}
\item  $Q_{\al}(\tu x)=Q(\tu x_{\al^{-1}}^-)$ for any $\tu x\in\dom n^{d}$ ; 
\item  $T_{\id}(\tu x)=Q_{id}(\tu x)$ for any $\tu x\in\dom n^{d}$ ; 
\item  $T_{\al.ij}(\tu u,x,\tu v,y,\tu w)=Q_{\al.ij}(\tu u,x+y,\tu v,0,\tu w)$ for any $x,y\in\dom n$, $\tu u\in\dom n^{i-1}$, $\tu v\in\dom n^{j-i-1}$ and $\tu w\in\dom n^{d-j}$.
\end{itemize}
We leave it to the reader to check that the sequence $(T_{\al},Q_{\al})_{\al\in\perm(d)}$ satisfies the formulas \refaxiomlog{item:log:Tid=Q}, \dots, \refaxiomlog{item:log:Qa indep de xi} (and hence, is a $d$-simulation of $Q$). 
\end{proof}

\begin{proposition}\label{thm:normal-sort-Q} 
For any $d>0$, $\esorel{}(\fa^{d},\arity d,\semisorted)\subseteq\esorel{}(\fa^{d},\arity d,\sorted)$ on $\coordstruc{d-1}$.
\end{proposition}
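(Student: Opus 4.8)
The plan is to reduce the statement to a single bookkeeping operation on the \emph{input} atoms and then to invoke the two lemmas that immediately precede the proposition. By the synthesis of Subsection~\ref{sec:arithmetic} (Equation~\refeq{eq:elim < et =}) together with Remark~\ref{rq:sans repetition}, any $\esorel{}(\fa^{d},\arity d,\semisorted)$-sentence over $\coordstruc{d-1}$ may already be taken of the shape $\Phi\equiv\ex\tu R\fa\tu x\,\phi(\tu x)$ in which every atom over a guessed symbol $R\in\tu R$ is sorted ($R(\tu x)$ or $R(\tu x^{(i)})$), every $\min$/$\max$ atom is $\min(x_i)$ or $\max(x_i)$, and the only remaining unsorted atoms are input atoms, each of the form $Q(\tu x^{-}_{\al})$ for some input symbol $Q$ of arity $d-1$ and some $\al\in\perm(d)$ (given an occurrence $Q(\tu x_{\iota})$ with $\iota\in\inject(d)$, take for $\al$ the permutation extending $\iota$ by letting $\al(d)$ be the unique index missing from the range of $\iota$). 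So the only thing left to do is to rewrite each such $Q(\tu x^{-}_{\al})$ into a sorted atom, by means of defining first-order axioms that are themselves sorted.

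To do this, for each input symbol $Q$ I would introduce fresh relation symbols $T_{\al},Q_{\al}$ of arity $d$ for all $\al\in\perm(d)$, and conjoin to $\phi$ the universal closures of the formulas $(\axiomlog_1),\dots,(\axiomlog_6)$ of the last lemma (one conjunct of type $(\axiomlog_2),(\axiomlog_3),(\axiomlog_5),(\axiomlog_6)$ for each pair $\al\in\perm(d)$, $i,j\le d$ such that $\al.ij$ is defined); write $\Xi_Q(\tu x)$ for this conjunction. That lemma guarantees the $\Xi_Q$ are satisfiable (every $Q$ admits a $d$-simulation) and that any model of $\fa\tu x\,\Xi_Q(\tu x)$ makes $(T_{\al},Q_{\al})_{\al}$ a $d$-simulation of $Q$; hence Lemma~\ref{thm:Qa(x)=Q(xa)} yields $Q_{\al}(\tu x_{\al})=Q(\tu x^{-})$ for all $\al$ and all tuples, i.e., after the change of variable $\tu x\devient\tu x_{\al}$ and the identity $(\tu x_{\al})_{\al^{-1}}=\tu x$, the equivalence $Q_{\al^{-1}}(\tu x)\ssi Q(\tu x^{-}_{\al})$. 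Proceeding in parallel for every input symbol, I would replace in $\phi$ each occurrence $Q(\tu x^{-}_{\al})$ by $Q_{\al^{-1}}(\tu x)$, obtaining $\tilde\phi$, and pull all universal quantifiers to the front, producing
\[
\Phi'\ \equiv\ \ex\tu R\;\ex(T_{\al},Q_{\al})_{\al,Q}\;\fa\tu x\Big(\Et_{Q}\Xi_Q(\tu x)\ \et\ \tilde\phi(\tu x)\Big),
\]
which by construction is equivalent to $\Phi$ on $\coordstruc{d-1}$.

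It then remains to verify $\Phi'\in\esorel{}(\fa^{d},\arity d,\sorted)$: all symbols occurring in $\Phi'$ (the $R\in\tu R$, the $T_{\al}$, the $Q_{\al}$) have arity exactly $d$, its first-order part is universal over the $d$-tuple $\tu x$, and every atom is of an admitted form --- namely $R(\tu x)$ or $R(\tu x^{(i)})$ for all atoms over $\tu R$, $T_{\al}$, $Q_{\al}$ (including the new atoms $Q_{\al^{-1}}(\tu x)$ of $\tilde\phi$), $Q(x_1,\dots,x_{d-1})$ for the two occurrences of $Q(\tu x^{-})$ in $(\axiomlog_1)$ and $(\axiomlog_4)$, and $\min(x_i)$ or $\max(x_i)$. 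I do not expect a genuine obstacle here, only delicate bookkeeping: the occurrence $Q(\tu x^{-}_{\al})$ must be traded for $Q_{\al^{-1}}(\tu x)$ and not for $Q_{\al}(\tu x)$, and one must check that no axiom ever mentions a permuted tuple $\tu x_{\beta}$ with $\beta\neq\id$ --- both points are already secured by the design of Definition~\ref{def:simulation} and by Lemma~\ref{thm:Qa(x)=Q(xa)} applied through the change of variable above. The genuinely hard content, i.e. constructing axioms that pin down $Q_{\al}(\tu x_{\al})=Q(\tu x^{-})$ while only ever referring to $\tu x$ and its point-shifts $\tu x^{(i)}$, has already been carried out in the preceding lemmas, so nothing remains beyond the assembly and the syntactic check. (For $d=1$ the statement is vacuous, since $\perm(1)=\{\id\}$ and an input symbol then has arity $0$, so every atom is already sorted.)
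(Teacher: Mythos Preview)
Your proposal is correct and follows exactly the approach the paper sets up: the two lemmas immediately preceding the proposition do all the work, and the proof of the proposition itself is precisely the assembly you describe---introduce the auxiliary relations $(T_{\al},Q_{\al})_{\al}$, conjoin the sorted axioms $(\axiomlog_1)$--$(\axiomlog_6)$, and substitute $Q_{\al^{-1}}(\tu x)$ for each unsorted input atom $Q(\tu x^{-}_{\al})$ using Lemma~\ref{thm:Qa(x)=Q(xa)}. The paper in fact leaves the proof of the proposition implicit, and you have written out exactly what is intended; your derivation of the substitution $Q(\tu x^{-}_{\al})\ssi Q_{\al^{-1}}(\tu x)$ via the change of variable $\tu y=\tu x_{\al}$, $\beta=\al^{-1}$ in $Q_{\beta}(\tu y_{\beta})=Q(\tu y^{-})$ is the right one.
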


\medskip

Propositions~\ref{thm:normal-sort-R} and~\ref{thm:normal-sort-Q} can now be collected in te following result: 

\begin{theorem}\label{thm:esorel(d,d)=esorel(d,d,sorted)} 
$\eso{}(\fa^{d},\arity d)=\eso{}(\fa^{d},\arity d,\sorted)$ on $\coordstruc{d-1}$ for any $d>1$. 
\end{theorem}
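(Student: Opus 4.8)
The plan is to obtain the theorem simply by \emph{chaining} the two normalization results already established and adding the obvious reverse inclusion; no new construction is needed at this level, since all the technical work has been carried out inside Propositions~\ref{thm:normal-sort-R} and~\ref{thm:normal-sort-Q}. First I would dispose of the easy direction $\eso{}(\fa^{d},\arity d,\sorted)\subseteq\eso{}(\fa^{d},\arity d)$: by Definition~\ref{def:sorted} a sorted sentence is of the shape $\ex\tu R\fa\tu x\,\psi(\tu x)$ with $\psi$ quantifier-free, $\tu x$ a $d$-tuple of first-order variables, and every $R\in\tu R$ of arity $d$, so it belongs to $\eso{}(\fa^{d},\arity d)$ by definition. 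It therefore remains to prove the converse inclusion $\eso{}(\fa^{d},\arity d)\subseteq\eso{}(\fa^{d},\arity d,\sorted)$ on $\coordstruc{d-1}$.

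For the converse, take an arbitrary sentence $\Phi\in\eso{}(\fa^{d},\arity d)$ over the coordinate encoding of a $(d-1)$-picture and transform it in two stages, each of which preserves the class of $\coordstruc{d-1}$-models. Stage one applies Proposition~\ref{thm:normal-sort-R} (whose proof already absorbs Facts~\ref{thm:sans repetition} and~\ref{thm:un seul succ ou permutation} and the elimination of comparisons carried out in Subsection~\ref{sec:arithmetic}), yielding an equivalent half-sorted sentence in $\eso{}(\fa^{d},\arity d,\semisorted)$: at this point every atom built on a guessed $d$-ary symbol has one of the shapes $R(\tu x)$ or $R(\tu x^{(i)})$, the comparison symbols $<,>,=$ have been replaced by guessed binary relations used only in sorted form, and the sole surviving arithmetic atoms are $\min(x_i)$ and $\max(x_i)$. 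Stage two applies Proposition~\ref{thm:normal-sort-Q} to that half-sorted sentence, which rewrites each non-sorted input occurrence $Q_s(\tu x_{\al}^-)$ as the sorted atom $Q_s(x_1,\dots,x_{d-1})$ by introducing the auxiliary $d$-ary relations $T_{\al},Q_{\al}$ of a $d$-simulation of $Q_s$ together with the defining formulas $(\axiomlog_1)$--$(\axiomlog_6)$. The resulting sentence now meets all three conditions of Definition~\ref{def:sorted}, hence lies in $\eso{}(\fa^{d},\arity d,\sorted)$. Combining the two stages with the trivial reverse inclusion gives the chain $\eso{}(\fa^{d},\arity d)\subseteq\eso{}(\fa^{d},\arity d,\semisorted)\subseteq\eso{}(\fa^{d},\arity d,\sorted)\subseteq\eso{}(\fa^{d},\arity d)$, all of whose members therefore coincide, which is the announced equality.

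I do not expect a genuine obstacle at this assembling stage; the only thing to watch is a bookkeeping point: Stage two must not reintroduce any non-sorted guessed-relation atom or any comparison atom. This is guaranteed because the $d$-simulation axioms $(\axiomlog_1)$--$(\axiomlog_6)$ are themselves written purely with atoms of the sorted shapes $R(\tu x)$, $R(\tu x^{(i)})$, $\min(x_i)$, $\max(x_i)$ and sorted input atoms, so that feeding an already half-sorted sentence through Proposition~\ref{thm:normal-sort-Q} leaves it inside $\eso{}(\fa^{d},\arity d,\sorted)$. The substantive difficulty lies one level down, inside Proposition~\ref{thm:normal-sort-Q}: choosing, uniformly over all permutations, a canonical alternated factorization via the spanning tree $\T_d$ of Lemma~\ref{thm:choix d'une factorisation}, and verifying that the axioms $(\axiom_1)$--$(\axiom_6)$ of a $d$-simulation really do force the identity $Q_{\al}(\tu x_{\al})=Q(\tu x^-)$ established in Lemma~\ref{thm:Qa(x)=Q(xa)}.
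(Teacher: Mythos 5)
Your proof is correct and matches the paper's own argument, which simply collects Propositions~\ref{thm:normal-sort-R} and~\ref{thm:normal-sort-Q} (with the comparison-elimination of Subsection~\ref{sec:arithmetic} feeding into the half-sorted intermediate class) and notes the trivial reverse inclusion. The only cosmetic slip is that a non-sorted input atom $Q(\tu x_{\al}^-)$ gets replaced by the guessed-relation atom $Q_{\al}(\tu x)$ rather than by a sorted input atom, but your surrounding description of the $d$-simulation shows you intend exactly this mechanism.
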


\end{document}